\documentclass[pdfa,a4paper,UKenglish,cleveref,autoref,thm-restate]{lipics-v2021}

\title{Constructing (Co)inductive Types via Large Sizes}

\author{Bastiaan Laarakker}{LIACS, Leiden University, The Netherlands}{b.g.laarakker@liacs.leidenuniv.nl}{https://orcid.org/0009-0005-3079-8505}{This work was partially supported by the Dutch Research Council (NWO) under grant number\ VI.Veni.232.286 (ChEOpS).}
\author{Dani\"el Otten}{ILLC, University of Amsterdam, The Netherlands \and \url{http://otten.co}}{daniel@otten.co}{https://orcid.org/0000-0003-2557-3959}{}
\author{Benno van den Berg}{ILLC, University of Amsterdam, The Netherlands \and \url{https://staff.fnwi.uva.nl/b.vandenberg3/}}{b.vandenberg3@uva.nl}{}{}

\funding{\emph{Benno van den Berg and Dani\"el Otten:} This publication is part of the project “The Power of Equality” (with project
number OCENW.M20.380) of the research programme Open Competition Science M 2 which is financed by the Dutch Research Council (NWO).}

\hideLIPIcs
\nolinenumbers

\relatedversion{This paper is based on the master thesis of the first author, which was supervised by the second and third authors.}
\authorrunning{B. Laarakker, D. Otten and B. van den Berg } 
\Copyright{Bastiaan Laarakker, Benno van den Berg and Dani\"el Otten} 
\acknowledgements{We are thankful to Andreas Abel, Jesper Cockx, and Andreas Nuyts for helpful discussions and ideas.}

\ccsdesc[500]{Theory of computation~Type theory}

\keywords{Sized Types, Parametricity, Realisability, Impredicativity, Constructive Ordinals, (Co)inductive Types} 

\EventEditors{John Q. Open and Joan R. Access}
\EventNoEds{2}
\EventLongTitle{under review for 11th FSCD (International Conference on Formal Structures for Computation and Deduction)}
\EventShortTitle{FSCD 2026}
\EventAcronym{FSCD}
\EventYear{2026}
\EventDate{July 20--23, 2026}
\EventLocation{Lisbon, Portugal}
\EventLogo{}
\SeriesVolume{}
\ArticleNo{1}

\usepackage{mathtools}
\usepackage{bussproofs}
\usepackage{quiver}
\usepackage{stmaryrd}
\usepackage{wasysym}
\usepackage{capt-of}

\begin{document}
    
\renewcommand{\.}{.\,} 
\renewcommand{\:}{:} 
\newcommand{\bN}{\mathbb{N}}
\newcommand{\isdef}{\mathrel{\overset{\makebox[0pt]{\mbox{\normalfont\tiny\sffamily def}}}{=}}}
\newcommand{\brac}[1]{\langle {#1} \rangle }
\newcommand{\Type}{~\mathsf{type}}
\newcommand{\El}{\mathrm{El}\,}
\newcommand{\Size}{\mathsf{Size}}
\newcommand{\List}{\mathsf{List} \, }
\newcommand{\SList}{\mathsf{SList} \,}
\newcommand{\nil}{\mathsf{nil} \,}
\newcommand{\cons}{\mathsf{cons}}
\newcommand{\SStream}[2]{\mathsf{Stream} \, {#1} \,[{#2}] }
\newcommand{\add}[2]{\mathsf{add}_{#1} \,x \, xs}
\newcommand{\up}{\; \uparrow \!}
\newcommand{\fix}{\mathsf{fix}}
\newcommand{\F}{F}
\newcommand{\Fmap}{\mathsf{Fmap}}
\newcommand{\map}{\mathsf{map}}
\newcommand{\fold}{\mathsf{fold}}
\newcommand{\unfold}{\mathsf{unfold}}
\newcommand{\FAlg}{F\text-\sf{Alg}}
\newcommand{\sFAlg}{F[\Diamond{-}]\text-\sf{Alg}}
\newcommand{\FCoalg}{F\text-\sf{Coalg}}
\newcommand{\sFCoalg}{F[\Box{-}]\text-\sf{CoAlg}}

\newcommand{\In}{\mathsf{in}}
\newcommand{\out}{\mathsf{out}}
\newcommand{\sto}{\overset{i}{\to}}

\newcommand{\pair}{\mathsf{pair} }
\newcommand{\inl}{\mathsf{inl}\, }
\newcommand{\inr}{\mathsf{inr} \,}
\newcommand{\Id}{\mathsf{Id}}

\newcommand{\sem}[1]{\llbracket {#1} \rrbracket}

\newcommand{\cat}[1]{{\mathrm{\mathbf{#1}}}}
\newcommand{\Set}{\cat{Set}}
\newcommand{\Rel}{\cat{Rel}}
\newcommand{\Cat}{\cat{Cat}}
\newcommand{\Ptl}{\cat{Ptl}}
\newcommand{\C}{\mathcal C}
\newcommand{\E}{\mathbb E}
\newcommand{\Ob}[1]{{\mathrm{Ob}({#1})}}
\newcommand{\Hom}{\text{Hom}}
\newcommand{\id}{\mathsf{id}}

\newcommand{\fst}{\mathsf{fst}}
\newcommand{\snd}{\mathsf{snd}}

\newcommand{\ind}{\mathsf{ind}}
\newcommand{\0}{\mathbf{0}}
\newcommand{\1}{\mathbf{1}}

\newcommand{\refl}{\mathsf{refl}}

\newcommand\<\langle
\renewcommand\>\rangle
\newcommand\llb\llbracket
\newcommand\rrb\rrbracket
\newcommand\bb\mathbb
\providecommand\cal{}
\renewcommand\cal\mathcal
\renewcommand\sf\mathsf
\renewcommand\rm\mathrm
\renewcommand\frak\mathfrak

\newcommand{\cA}{\mathcal{A}}
\renewcommand{\k}{\mathtt{k}}
\newcommand{\s}{\mathtt{s}}

\newcommand{\dom}{\text{dom}}

\newcommand{\Per}{\mathsf{PER}}
\newcommand{\Perq}{\mathsf{PER_Q}}
\newcommand{\Mod}{\mathsf{Mod}}
\newcommand{\Asm}{\mathsf{Asm}}
\newcommand{\M}{\mathsf{M}}
\renewcommand{\E}{\mathsf{E}}

\newcommand{\pr}{\mathsf{pr}}
\newcommand{\Tr}[1]{{\|{#1}\|}}
\newcommand{\tr}[1]{|{#1}|}
\newcommand{\elim}{\mathsf{elim}}
\newcommand{\app}{\mathsf{app}}
\newcommand{\abs}{\mathsf{abs}}

\newcommand{\Ty}{\mathsf{Ty}}
\newcommand{\Tm}{\mathsf{Tm}}
\newcommand{\Ctxt}{\mathsf{ctx}}
\newcommand{\p}{\mathbf{p}}
\renewcommand{\v}{\mathbf{v}}
\newcommand{\U}{U}

\newcommand{\Iff}{\hspace{6pt}\text{iff}\hspace{6pt}}
\newcommand{\If}{\hspace{6pt}\text{if}\hspace{6pt}}
\newcommand{\ie}{\hspace{6pt}\text{i.e.}\hspace{6pt}}
\renewcommand{\and}{\hspace{6pt}\text{and}\hspace{6pt}}
\newcommand{\Or}{\hspace{6pt}\text{or}\hspace{6pt}}
\newcommand{\when}{\hspace{6pt}\text{when}\hspace{6pt}}

    \maketitle
    
    \begin{abstract}
        To ensure decidability and consistency of its type theory, a proof assistant should only accept terminating recursive functions and productive corecursive functions.
        Most proof assistants enforce this through syntactic conditions, which can be restrictive and non-modular.
        Sized types are a type-based alternative where (co)inductive types are annotated with additional size information.
        Well-founded induction on sizes can then be used to prove termination and productivity.

        An implementation of sized types exists in \texttt{Agda}, but it is currently inconsistent due to the addition of a largest size.
        We investigate an alternative approach, where intensional type theory is extended with a large type of sizes and parametric quantifiers over sizes.
        We show that inductive and coinductive types can be constructed in this theory, which improves on earlier work where this was only possible for the finitely-branching inductive types.
        The consistency of the theory is justified by an impredicative realisability model, which interprets the type of sizes as an uncountable ordinal.
    \end{abstract}
    
    \section{Introduction}
    \label{sec:intro}
    Dependent type theory can be viewed as both a programming language and a foundation of mathematics.
In particular, the Curry-Howard correspondence allows us to use dependently typed programming languages such as \texttt{Agda}, \texttt{Dedukti}, \texttt{Lean}, and \texttt{Rocq} as proof assistants.
To ensure consistency, these languages should only accept total functions, so they put bounds on recursive calls.
For example, if the function has an inductive domain (freely generated by constructors), then we know that the function terminates if it is only called recursively on a smaller input.
Dually, if the function has a coinductive codomain (cofreely generated by destructors), then we know that the function is productive if its output is larger than that of its recursive calls.
In both of these cases, we know that the function is total; however, totality can also be the result of a more intricate mix of induction and coinduction.

\subparagraph{Syntactic Totality.}
Totality is usually enforced via syntactic checks for termination and productivity.
By default, \texttt{Rocq} and \texttt{Lean} check for structural (co)recursion \cite{goguen_eliminating_2006,cockx_eliminating_2016,thibodeau_intensional_2020}, while \texttt{Agda} and \texttt{Dedukti} use the more permissive size-change termination principle \cite{size-change}.
Both conditions rely on a syntactic comparison of the structural size of the input/output of the original function with the structural size of the input/output of the recursive call. 
However, programs with more complex patterns of (co)recursion are not easily defined in a way that satisfies these syntactic conditions.
Moreover, syntactic methods are sensitive to slight reformulations of programs, and do not work well with higher-order programs \cite{Abel-inflationary}.

\subparagraph{Type-based Totality.}
This work is concerned with a type-based approach to enforcing totality: sized types.
Sized types were originally introduced by Hughes, Pareto, and Sabry \cite{hughes} as an alternative approach to both termination and productivity checking, in which these properties are guaranteed by the typing derivation instead of an inspection of syntax.
In this approach, both inductive and coinductive types can be annotated with additional size information.

We can think of sizes as ordinals, and they come equipped with a well-founded ordering.
In particular, for every size \(i\) and inductive type \(\sf{Ind}\) we have an ordinal approximation \(\sf{Ind}^i\), which can be seen informally as the subtype of \(\sf{Ind}\) consisting of terms that can be formed within \(i\) constructor steps.
Dually, for every size \(i\) and coinductive type \(\sf{CoInd}\) we have an ordinal approximation \(\sf{CoInd}^i\), which can be seen informally as the quotient of \(\sf{CoInd}\), where terms are identified if we cannot distinguish them within \(i\) destructor steps.

By defining functions on these annotated versions of the (co)inductive types, we introduce information about the sizes of the input/output directly in the type of the function.
This allows for greater compositionality of functions, since size-information is retained across function calls.
Moreover, it allows us to reduce both termination and productivity of recursive functions to well-founded induction on the size parameter.

Sized types, and similar approaches, have been proposed in different forms to aid  termination/productivity checking \cite{xi_sizes_2002, blanqui_sizes_2004, barthe_sizes_2004, Abel2006TypebasedTA, abel_2008}. 
The combination of sized types with dependent type theory has been studied as well \cite{Barthe, sacchini_2013, sacchini_2014, NbE, Chan_2022}.
Sized types have been implemented in \texttt{Agda} \cite{PAR-10:MiniAgda_Integrating_Sized_Dependent}, and there have been informative practical uses, including: encoding regular languages coalgebraically \cite{abel_2016}, up-to techniques for bisimilarity \cite{up-to-sized-types}, and constructing quotient-inductive types \cite{constructing-qits}.

\subparagraph*{Consistency.}
Importantly, \texttt{Agda} does not allow pattern matching on sizes: they should have no influence on computation, and are only used to aid the termination checker.
However, there are still significant limitations to both the current implementation of sized types, and to other proposals.
One design question is how the full (co)inductive type \((\sf{Co})\sf{Ind}\) should relate to its ordinal approximations \((\sf{Co})\sf{Ind}^i\).

\texttt{Agda} introduces a special largest size $\infty$, which can be thought of as a closure ordinal: an ordinal large enough that the approximation has reached the full (co)inductive type, so that we can define \((\sf{Co})\sf{Ind}\coloneqq (\sf{Co})\sf{Ind}^\infty\).
However, this approach leads to inconsistencies: \texttt{Agda} proves \(\infty<\infty\), while any strict well-order \(<\) can be proven to be irreflexive \cite{agda-issue-3026}.

We employ a different way of recovering the full (co)inductive type from its ordinal approximations: using parametric quantifiers \cite{reynolds1983types, parametric-quantifier, reldtt}.
The intuition is that the quantifiers \(\exists/\forall\) are versions of \(\Sigma/\Pi\) that restrict information: the type \(\exists i.\,A(i)\) contains dependent pairs where the first component is ignored when checking equality, while the universal quantifier \(\forall i.\,A(i)\) contains dependent functions that act uniformly on any input.
With these quantifiers, we can define the full inductive type as \(\sf{Ind}\coloneqq\exists i.\,\sf{Ind}^i\), and the full coinductive type as \(\sf{CoInd}\coloneqq\forall i.\,\sf{CoInd}^i\).
This follows to our intuition: a term of an inductive type should be bounded by some possibly infinite height (where the bound is kept abstract), whereas a term of a coinductive type should be observable up to any depth (where the depth does not influence observations). 
Using \(\Sigma/\Pi\) instead of \(\exists/\forall\) would make \((\sf{Co})\sf{Ind}\) too large, so our parametric quantifiers have the same effect as forbidding pattern matching on sizes.

To our knowledge, using parametric quantifiers for sizes was first described by Vezzosi \cite{licentiate}, although it was framed in the context of guarded type theory \cite{nakano,prod-coprogramming} to extend its applicability from coinductive to inductive types.
It was later adapted by Nuyts, Vezzosi and Devriese in work on dependent type theory with parametric quantifiers \cite{parametric-quantifier, reldtt}, where this parametric approach to sized types was described as a possible application. In their work, they also give a way to construct (co)inductive types: they construct all coinductive types but only the finitely-branching inductive types; this is because the type of sizes is interpreted as the set of natural numbers in their model.
The interpretation of sizes as a larger ordinal has been explored \cite{Abel2006TypebasedTA, Chan_2022}, but not in combination with parametricity for sizes, which is central to our construction of (co)inductive types.

\subparagraph*{Contribution.}
We show how all (co)inductive types can be encoded using sizes and parametric quantifiers.
Concretely, we extend a version of Martin-L\"of type theory --- containing an empty type, unit type, boolean type, \(\Sigma\)-types, \(\Pi\)-types, \(=\)-types, one universe, function extensionality, and no primitive (co)inductive types --- with two additional features: a type of sizes, and parametric quantifiers \(\exists\) and \(\forall\) over sizes.
In our formulation, parametricity is internalised via the addition of axioms, and quantifying over small types will produce a small type although the type of sizes is large.

We show how well-founded recursion on sizes can be used to construct ordinal approximations for any (co)inductive type.
Then, we use the parametric quantifiers to construct the full (co)inductive types: we construct both the initial algebra and the final coalgebra for any polynomial endofunctor.
Our definitions do not use the uniqueness of identity proofs (UIP) axiom, and are therefore compatible with homotopy type theory \cite{hottbook}.

We show the consistency of our theory by constructing a realisability model based on Hyland's effective topos \cite{hyland_effective_1982}: small types are interpreted as partial equivalence relations, and large types as assemblies.
This lets us interpret the type of sizes as an uncountable ordinal, which is needed to validate the axioms for the parametric quantifiers.
This models a stronger type theory: one that is also impredicative and extensional and therefore satisfies UIP.

Importantly, we do not use impredicativity and extensionality in the syntax when constructing the ordinal approximations and full (co)inductive types.
So, our constructions can be carried out in a weak theory, while our model validates a strong theory.

\subparagraph*{Related work.}
Beyond the work that we already mentioned it is worth noting the similarity between this approach to sized types and guarded type theory \cite{nakano}, as this was also the context in which this parametric approach to sizes was described \cite{licentiate}.
Guarded type theory introduces a modality $\RHD$, called the later modality.
This modality allows for the construction of guarded recursive types, in which the recursive variable only occurs under the guarding modality.
However, the guarding modality by itself is not sufficient to construct coinductive types.
Instead, the type system can be extended with so-called clock variables \cite{prod-coprogramming}, which represent different time streams and can be used to define coinductive types via universal quantification over clocks.
This construction requires quantification over clocks to be parametric, which is often expressed in the form of additional axioms \cite{mogelberg,guardedtypes16}.
This is very similar to our approach, as we encode coinductive types using parametric universal quantification over sizes. 
However, we also encode inductive types using existential quantification, whereas guarded type theory focuses on constructing coinductive types.
The relation between sized types and guarded type theory has been studied, and it was shown that guarded recursion can be simulated in a simple type theory using sized types \cite{guarded-via-sized}.

Another method for guaranteeing termination is via user-defined well-founded orderings \cite{Leroy2024}, which is possible in \texttt{Lean}, or using even weaker orderings \cite{qwf-order}.
\texttt{Rocq} also has the \texttt{equations} package \cite{hutchison_equations_2010,mangin2015equations,sozeau2019equations} that extends the termination checker at the cost of losing computational behaviour.
Although more flexible, these methods require significant manual work from the user.

\subparagraph*{Structure of the paper.}
\autoref{sec:type-theory} defines the theory in which we work, introducing the type of sizes and parametric quantification. In \autoref{sec:encoding}, we develop the encoding the approximations and for the full (co)inductive types by quantification over sizes. The consistency of the theory is justified in \autoref{sec:model}, which describes the main features of the realisability model. Finally, a summary and perspective on future work are given in \autoref{sec:conclusion}.
    
    \section{The type theory} 
    \label{sec:type-theory}
    \newenvironment{bprooftree}
{\leavevmode\hbox\bgroup}
{\DisplayProof\egroup}

We highlight some features of the dependent type theory that we work in, which is given in its entirety in \autoref{appendix:type-theory}. 
Our starting point is Martin-L\"of type theory with the empty type (\(\bot\)), unit type (\(\top\)) and boolean type (\(\sf{Bool}\)), together with dependent pair types (\(\Sigma\)), dependent function types (\(\Pi\)), and propositional equality types (\(=\)).
For \(\Sigma\) and \(\Pi\) we include definitional \(\eta\) rules.

We also include a Tarski-style universe \(U\), meaning that it comes with an explicit decoder \( \El \) from terms of \(U\) to types, which we omit when possible for readability.
The universe contains terms that act as codes for the basic types, and similar codes for \(\Pi x:A.\,B(x)\), \(\Sigma x:A.\,B(x)\) and \(a=_Aa'\) when both \(A\) and \(B(x)\) have codes in \(U\).
For each type (former), we use a superscript \((-)^U\) to distinguish the code (term of \(U\)) from the type when needed, although we will often omit this for readability.
We will call types with codes in the universe \emph{small}, whereas general types are \emph{large}.
A consequence of the $\eta$-rule for $\Pi$-types is that we may identify a family of small types $x:A\vdash B(x):\U$ indexed by the type $A$ with a function $B:A\to\U$; we will make extensive use of this in later sections.

Although we do not work with univalence, we will make use of terminology and typical constructions from homotopy type theory \cite{hottbook}.
In particular, the notion of an equivalence will be important to this work: we would like to say that a function \(f:A\to B\) is an \emph{equivalence}, written \(f:A\simeq B\), if it has a \emph{quasi-inverse}, which is a map \(g:B\to A\) such that both compositions are pointwise equal to the identity. However, we want to ensure that the type stating that \(f\) is an equivalence a \emph{mere proposition}, meaning that all terms of the type are equal. To this end, we also assume a coherence condition between the proofs of pointwise equality; see \cite[Definition 4.2.1]{hottbook}, where this definition is called a half-adjoint equivalence. We say that a type $A$ is \emph{contractible} when there exists exactly one term of type $A$ up to propositional equality.
We may also refer to terms of identity types as paths, and use operations defined by path induction, such as the action of a function \( f : A \to B \) on paths and the transport operation.

We also include the \emph{function extensionality} axiom as described in homotopy type theory \cite{hottbook}, which states that for any two dependent functions \(f,g:\Pi x:A.\,B(x)\), the canonical function \(\mathsf{happly} : (f=g)\to\Pi x:A.\,(f\,x=g\,x)\) defined by path induction is an equivalence: \[
    (f=g)\simeq \Pi{x:A} \. (f\,x=g\,x).\tag{function extensionality}
\]
We call the quasi-inverse to \(\mathsf{happly}\) obtained via this equivalence \(\mathsf{funext}\).

\subsection{Sizes and parametric quantifiers}
\label{ss:sizes}
We extend the theory with a primitive type \( \Size \), which comes with a zero term and a successor operation: \begin{center}
    \begin{bprooftree}
        \AxiomC{\(\phantom\Gamma\)}
        \UnaryInfC{$\Gamma\vdash\Size\Type$}
    \end{bprooftree} \quad
    \begin{bprooftree}
        \AxiomC{\(\phantom\Gamma\)}
        \UnaryInfC{$\Gamma\vdash 0:\Size$}
    \end{bprooftree} \quad
    \begin{bprooftree}
        \AxiomC{$\Gamma\vdash i:\Size$}
        \UnaryInfC{$\Gamma\vdash{\uparrow}i:\Size$}
    \end{bprooftree}
\end{center}
So far this looks a lot like the type of natural numbers; the difference in their behaviour will be the elimination principle, for which we first need to introduce some other notions.
In particular, we assume a preorder (also known as quasi-order) relation \(\leq\) on the type of sizes, which will be a small type.
We encode this relation as a mere proposition, which means that for any two terms \( p, q : i \leq j \) we have a propositional equality \( p = q \).
The ordering on sizes is encoded by the following rules:
\begin{center}
    \begin{bprooftree}
        \AxiomC{$\Gamma\vdash i:\Size$}
        \AxiomC{$\Gamma\vdash j:\Size$}
        \BinaryInfC{$\Gamma\vdash i\le j\Type$}
    \end{bprooftree}
    \begin{bprooftree}
        \AxiomC{$\Gamma \vdash i : \Size$}
        \UnaryInfC{$\Gamma \vdash{\sf{le}_0}\,i:0 \le i$}
    \end{bprooftree}\quad
    \begin{bprooftree}
        \AxiomC{$\Gamma \vdash i : \Size$}
        \UnaryInfC{$\Gamma \vdash{\sf{le}_{\sf{suc}}}\,i:i \le {\uparrow}i $}
    \end{bprooftree} \\[2ex]
    \begin{bprooftree}
        \AxiomC{$\Gamma \vdash i : \Size$}
        \UnaryInfC{$\Gamma \vdash\sf{le}_{\sf{refl}}\,i:i \le i$}
    \end{bprooftree}\quad
    \begin{bprooftree}
        \AxiomC{$\Gamma \vdash p:i\le j$}
        \AxiomC{$\Gamma \vdash q:j\le k$}
        \BinaryInfC{$\Gamma \vdash\sf{le}_{\sf{trans}}\,p\,q:i \le k$}
    \end{bprooftree}
\end{center}
We let $i < j$ be notation for $\uparrow \!i \le j$. 
To discuss the elimination and computation principles for sizes, we first need to add universal and existential quantifiers over sizes.
These are impredicative: although the type of sizes is large, quantifying over small types will result in a small type:
\begin{center}
    \begin{bprooftree}
        \AxiomC{$\Gamma, i : \Size \vdash A(i) : U$}
        \UnaryInfC{$\Gamma \vdash \exists i.\,A(i) : U$}
    \end{bprooftree}\quad\begin{bprooftree}
        \AxiomC{$\Gamma, i : \Size \vdash A(i) : U$}
        \UnaryInfC{$\Gamma \vdash \forall i.\,A(i) : U$}
    \end{bprooftree}
\end{center}
These quantifiers will have very similar rules to \(\Sigma\) and \(\Pi\); however, we will introduce additional axioms in \autoref{ss:equivalences} to internalise a notion of parametricity: sizes must be treated uniformly by quantifiers.
As with \(\Sigma i:\Size.\,A(i)\), terms of \(\exists i.\,A(i)\) are pairs \( \langle s, a \rangle ^\exists\) where \( s: \Size \) and \( a : A(s) \).
However, unlike \(\Sigma\), the elimination principle of \(\exists\) only allows elimination to a small type \(P\):
\begin{center}
    \begin{bprooftree}
        \AxiomC{$\Gamma\vdash s:\Size$}
        \AxiomC{$\Gamma\vdash a : A$}
        \BinaryInfC{$\Gamma\vdash\<s,a\>^\exists:\exists i. A(i)$}
    \end{bprooftree} \quad
    \begin{bprooftree}
        \AxiomC{$\Gamma, z : \exists i. A(i) \vdash P(z) : \U$}
        \noLine
        \UnaryInfC{$\Gamma, i : \Size, x : A \vdash p(i,x) :P(\langle i, x \rangle^\exists)$}
        \UnaryInfC{$\Gamma, z : \exists i. A(i) \vdash \ind_\exists(z, p(i,x)) :P(z)$}
    \end{bprooftree} \\[2ex]
    \begin{bprooftree}
        \AxiomC{}
        \UnaryInfC{$\Gamma\vdash\ind_\exists(\<s,a\>^\exists, p(i,x))\equiv p(s,a) : {P(\<s,a\>^\exists)}$}
    \end{bprooftree}
\end{center}
Consequently, we cannot define a projection to the first component, since the type \( \Size \) is not small.
Thus, the size given to the second component of a term of the existential type is inaccessible via the elimination principle and therefore kept abstract.
For \(\forall\), the rules are completely analogous to \(\Pi\):
\begin{center}
    \begin{bprooftree}
        \AxiomC{$\Gamma, i : \Size \vdash a(i) : A(i)$}
        \UnaryInfC{$\Gamma \vdash \lambda^\forall i : \Size \. a(i): \forall i . A(i) $}
    \end{bprooftree}\quad
    \begin{bprooftree}
        \AxiomC{$\Gamma \vdash f : \forall i. A(i)$}
        \AxiomC{$\Gamma \vdash s : \Size$}
        \BinaryInfC{$\Gamma \vdash f \, s : A(s)$}
    \end{bprooftree}
\end{center}
\begin{center}
    \begin{bprooftree}
        \AxiomC{}
        \UnaryInfC{$\Gamma \vdash (\lambda^\forall i \: \Size \. a(i)) \, s \equiv a(s):A(s)$}
    \end{bprooftree}\quad
    \begin{bprooftree}
        \AxiomC{}
        \UnaryInfC{$\Gamma \vdash (\lambda^\forall i \: \Size \. f \, i) \equiv f :\forall i . A(i)$}
    \end{bprooftree}
\end{center}
Since we will not allow case distinctions on sizes due to its limited elimination principle, it will not be possible to construct non-parametric functions in the syntax.

Here we have used \(\<\dots\>^\exists\) and \(\lambda^\forall\) to distinguish the pairs and functions from those in \(\Sigma\)-types and \(\Pi\)-types; however, when this does not lead to ambiguity we will drop these superscripts.
In fact, we see \(\Pi i:\Size.\,A(i) \simeq \forall i.\,A(i) \) where the maps are given by \(h\mapsto\lambda^\forall i.\,h\,i\) and \(h'\mapsto\lambda^\Pi i.\,h'\,i\).
In particular, this shows that function extensionality for \(\Pi\) implies a function extensionality principle for \(\forall\).
However, the dual principle \(\exists i.\,A(i)\simeq\Sigma i.\,A(i)\) can not be shown, since the elimination principle of \(\exists\) does not allow us to build a function to \(\Sigma\), and this equivalence is not true in the model of \autoref{sec:model}.


Another important notion will be `bounded' quantification over sizes.
Recall that for sizes \( i, j : \Size \), the type \( i < j \) is a mere proposition. We will use the notation $i : \Size \vdash \forall j < i. A(j)$ to denote the type $i : \Size \vdash \forall j. (j < i) \to A(j)$. Similarly $i : \Size \vdash \exists j < i. A(i)$ will denote the type $i : \Size \vdash \exists j. ((j < i) \times A(j))$. When defining functions in the type theory, we will often omit the inequality proofs, writing the term $\langle j, \langle p, a \rangle \rangle : \exists j < i. A(j)$ as $\langle j, a\rangle $ instead, as \( p \) is unique up to propositional equality. Similarly, if we have a term $f : \forall j < i \. A(j)$, we will simply write $f \, j$ to mean $f \, j \, p$ where $p: j < i$. In lambda notation, we may also write $\lambda j < i$ when constructing a term of type $\forall j < i\. A(j)$. 

Finally, we can give the elimination principle for the type $\Size$, which comes in the form of a fixpoint operator, corresponding to well-founded induction on sizes:
    \begin{center}
        \begin{bprooftree}
            \AxiomC{$\Gamma, i : \Size \vdash A(i) \Type$}
            \AxiomC{$\Gamma \vdash f : \forall i . ((\forall j < i. A(j)) \to A(i))$}
            \BinaryInfC{$\Gamma \vdash \fix f : \forall i. A(i)$}
        \end{bprooftree}
    \end{center}
The $\beta$-rule for $\fix$ ensures that \(\sf{fix}\,f\) behaves as a fixpoint of the map \(g\mapsto \lambda i.\,f\,i\,(\lambda j<i.\,g\,i)\):
    \begin{center}
    \begin{bprooftree}
            \AxiomC{$\Gamma \vdash f : \forall i . ((\forall j < i. A(j)) \to A(i))$}
            \UnaryInfC{$\Gamma \vdash \fix^\beta \, f : \forall i \. \fix f \, i = f \, i \, (\lambda j < i\. \fix f  \, j)$}
        \end{bprooftree}
    \end{center}
We can view this as the unfolding rule of the fixpoint operator.
The model we construct supports this \(\beta\)-rule as a definitional equality, because the model is extensional, meaning that propositional and definitional equality coincide.
However, this definitional equality would lead to a loss of decidable type checking if allowed to unfold indefinitely.
Thus, we opt to encode this as a propositional equality for now, and discuss alternatives in \autoref{sec:conclusion}.
Importantly, we can prove that, up to propositional equality, $\fix \, f$ is the \emph{unique} term satisfying this unfolding, which we may view as its propositional $\eta$-rule:
\begin{lemma}[Uniqueness of fixpoints]\label{lemma:unique-fix}
The following type is contractible for any term $f :  \forall i.(\forall j< i. A(j)) \to A(i)$:
\[
\sum_{h :\forall i .A(i)}  \forall i \. (h \, i = f \, i \, (\lambda j < i\. h \, j)).
\]
\end{lemma}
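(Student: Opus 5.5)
The center of contraction is the pair \(\<\fix f, \fix^\beta f\>\), which inhabits the type by the \(\beta\)-rule for \(\fix\). It remains to show that any other inhabitant \(\<h, p\>\) is equal to it. We use the standard characterisation of paths in a \(\Sigma\)-type. It is convenient to rewrite the type first. Using \(\Pi i:\Size.\,A(i) \simeq \forall i.\,A(i)\), the type \(\forall i.\,(h\,i = f\,i\,(\lambda j<i.\,h\,j))\) is equivalent to \(h = \Phi(h)\), where \(\Phi(g) \isdef \lambda i.\,f\,i\,(\lambda j<i.\,g\,j)\); this step uses function extensionality for \(\forall\). So the type in question is equivalent to the type of fixed points \(\Sigma h.\,(h = \Phi h)\). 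By the same token, for two inhabitants \(\<h,p\>\) and \(\<h',p'\>\), a path between them amounts to a pointwise homotopy \(e : \forall i.\,h\,i = h'\,i\) together with a coherence condition. The coherence says that for every \(i\), transporting \(p\,i\) along \(e\,i\) on the left and along \(\mathsf{ap}\) of the recursive call on the right yields \(p'\,i\). Concretely:
\[
p\,i \cdot \mathsf{ap}_{f\,i}\bigl(\mathsf{funext}(\lambda j<i.\,e\,j)\bigr) = e\,i \cdot p'\,i .
\]

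We build \(e\) together with its coherence by a single application of \(\fix\). We take the family
\[
B(i) \isdef \sum_{q : h\,i = h'\,i} \bigl(p\,i \cdot \mathsf{ap}_{f\,i}(\dots) = q \cdot p'\,i\bigr).
\]
This does not quite typecheck as written, because the coherence mentions \(e\,j\) for \(j<i\), i.e.\ the previously constructed components. The way around this is to prove the simpler statement that, for each \(i\), the type of \(q\) satisfying the coherence relative to a given \(e{\restriction}_{<i}\) is contractible. Given the \(e\,j\) for \(j<i\), the path \(q\) is forced: it must equal
\[
p\,i \cdot \mathsf{ap}_{f\,i}\bigl(\mathsf{funext}(\lambda j<i.\,e\,j)\bigr) \cdot (p'\,i)^{-1},
\]
and the coherence then holds by path algebra. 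So the inductive step is a function \(\forall i.\,(\forall j<i.\,(h\,j = h'\,j)) \to (h\,i = h'\,i)\) given by exactly this formula. Applying \(\fix\) (with \(A(i) \isdef h\,i = h'\,i\), a legitimate large-or-small family) yields \(e \isdef \fix g : \forall i.\,h\,i = h'\,i\). Then \(\fix^\beta\) gives \(e\,i = p\,i \cdot \mathsf{ap}_{f\,i}(\mathsf{funext}(\lambda j<i.\,e\,j)) \cdot (p'\,i)^{-1}\), which rearranges to precisely the coherence condition above.

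Assembling the pieces: \(\mathsf{funext}\) turns \(e\) into a path \(h = h'\), and the coherence shows that transporting \(p\) along it gives \(p'\). Hence \(\<h,p\> = \<h',p'\>\). Specialising \(\<h',p'\>\) to \(\<\fix f,\fix^\beta f\>\) gives contractibility.

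The main obstacle is the transport computation. One must show that transporting a family of paths \(\lambda i.\,(h\,i = f\,i\,(\lambda j<i.\,h\,j))\) along \(\mathsf{funext}(e)\) yields the pointwise expression \((e\,i)^{-1} \cdot p\,i \cdot \mathsf{ap}_{f\,i}(\mathsf{funext}(e{\restriction}_{<i}))\). Here the restriction of \(\mathsf{funext}(e)\) to \(j<i\) must be identified with \(\mathsf{funext}\) of the restricted homotopy. To do this, I would first reduce to the case where \(\mathsf{funext}(e)\) is a general path \(r : h = h'\), since \(\mathsf{funext}\) is an equivalence, so every \(e\) is \(\mathsf{happly}(r)\) for some \(r\). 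Path induction on \(r\) then makes the computation trivial, at the cost of stating the coherence in terms of \(\mathsf{happly}\) rather than \(\mathsf{funext}\).
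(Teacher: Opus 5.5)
Your proposal is correct and follows the paper's proof essentially step for step. The paper also takes $\langle \fix f, \fix^\beta f\rangle$ as the centre, defines the pointwise path $e$ by $\fix$ applied to the step $e' \mapsto (\text{first path}) \cdot \mathsf{ap}_{f\,i}(\mathsf{funext}(\lambda j<i.\,e'\,j)) \cdot (\text{second path})^{-1}$, and reads off the coherence needed for the $\Sigma$-path directly from the $\beta$-rule of that fixpoint. The differences are cosmetic: you compare two arbitrary inhabitants rather than one against the centre, and you spell out the transport computation (via $\mathsf{happly}$ and path induction) that the paper leaves implicit.
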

\begin{proof}
We follow a well-known strategy \cite[Theorem 5.4.7]{hottbook} \cite[Lemma 3.2]{guarded-cubical}.
First, note that the type is inhabited by the pair $(\fix f, \fix^\beta \, f)$. For any other pair $(g, g^\beta)$, we need to show $(\fix f, \fix^\beta f) = (g, g^\beta)$.
For this, it suffices to give a pointwise equality \(e:\forall i.\fix f\,i=g\,i\) and a proof \(e^\beta:(\sf{funext}(e))_*\,(\fix^\beta f)=g^\beta\).
We take \(e\coloneqq \fix(\lambda i\.\lambda e'\.p)\) where \(p\) is a concatenation of \[\hfill\begin{tikzcd}
    \fix f\,i \ar[r,"\fix^\beta f\,i",equal] & f\,i\,(\lambda j<i\.\fix f\,j) \ar[rrr,"{\mathsf{ap}_{fi}(\mathsf{funext}(\lambda j<i\.e'\,j))}",equal] &&& f\,i\,(\lambda j<i\.g\,j) \ar[r, "g^\beta\,i",equal] & g\,i.
\end{tikzcd}\hfill\]
To define \(e^\beta\), it suffices to show that the following square commutes: \[\hfill\begin{tikzcd}
    \fix f\,i \ar[r,"\fix^\beta f\,i",equal]\ar[d,"e\,i"',equal] & f\,i\,(\lambda j<i\.\fix f\,j) \ar[d,"{\mathsf{ap}_{fi}(\mathsf{funext}(\lambda j<i\.e\,j))}",equal] \\
    g\,i & f\,i\,(\lambda j<i\.g\,j). \ar[l, "g^\beta\,i",equal]
\end{tikzcd}\hfill\]
However, this is given precisely by the \(\beta\)-rule \(\fix^\beta\,(\lambda i\.\lambda e'\.p)\).
\end{proof}

\subsection{Type equivalences}
\label{ss:equivalences}
We want to internalise parametricity for quantifying over sizes in our theory, which we do by adding additional equivalences to the theory as axioms. These equivalences describe how universal and existential quantification over sizes commute with the type constructors for small types. They will allow us to encode (co)inductive types in the calculus, and their consistency will be justified by the model we construct in \ref{sec:model}. 

Before introducing the additional axioms for the calculus, we show that two equivalences can be constructed in the calculus already.
These correspond to swapping function arguments and elements of pairs. 
\begin{proposition}\label{prop:definable-eqvs}
For small types \(A\) and \(x:A,i:\sf{Size}\vdash B(x,i)\) we have the following equivalences:
\begin{align*}
    \Pi x : A. \, \forall i. B(x, i)&\simeq\forall i \. \Pi x : A.\,B(i, x), \tag{\(\exists\) parametric over \(\Sigma\)} \\
    \Sigma x : A. \, \exists i . B(x, i)&\simeq\exists i \. \Sigma x : A. \, B(i, x). \tag{\(\forall\) parametric over \(\Pi\)}
\end{align*}
\end{proposition}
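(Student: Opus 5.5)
For each equivalence we define explicit maps in both directions using only the introduction and elimination rules for $\forall$ and $\exists$. We then check that both composites are pointwise equal to the identity. A quasi-inverse of this kind can always be improved to a half-adjoint equivalence (\cite[Theorem 4.2.3]{hottbook}), so we do not need to verify the coherence condition by hand.

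\emph{The $\Pi/\forall$ equivalence.} Define $\phi(h) \coloneqq \lambda^\forall i.\,\lambda x.\,h\,x\,i$ and $\psi(k) \coloneqq \lambda x.\,\lambda^\forall i.\,k\,i\,x$. Both composites are the identity definitionally. For example, $\psi(\phi(h)) \equiv \lambda x.\,\lambda^\forall i.\,h\,x\,i$ reduces by the $\beta$-rules for $\Pi$ and $\forall$. The $\eta$-rules for $\forall$ and $\Pi$ then identify $\lambda^\forall i.\,h\,x\,i$ with $h\,x$, and $\lambda x.\,h\,x$ with $h$. The other composite is symmetric. So the homotopies can be taken to be $\refl$, and we do not even need function extensionality. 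The small-type hypothesis only ensures that both sides are well-formed small types; the construction itself never uses it.

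\emph{The $\Sigma/\exists$ equivalence.} We use $\ind_\Sigma$ and $\ind_\exists$. Define
\[f(\langle x, z\rangle) \coloneqq \ind_\exists(z, \langle i, \langle x, b\rangle\rangle^\exists)\]
by first splitting the $\Sigma$-pair and then eliminating $z:\exists i.\,B(x,i)$. This elimination is allowed because the target $\exists i.\,\Sigma x:A.\,B(x,i)$ is small. Conversely, define
\[g(w) \coloneqq \ind_\exists(w, \ind_\Sigma(p, \langle x, \langle i, b\rangle^\exists\rangle))\]
for $w:\exists i.\,\Sigma x:A.\,B(x,i)$, where $p : \Sigma x:A.\,B(x,i)$ is the bound second component. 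Again the target $\Sigma x:A.\,\exists i.\,B(x,i)$ is small, since $A$ is small and $\exists$ lands in $U$.

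For the homotopy $g\circ f = \id$, we first do $\Sigma$-induction on the input pair $\langle x, z\rangle$. We then apply $\ind_\exists$ on $z$ with the motive $P(z) \coloneqq \big(g(f\langle x,z\rangle) = \langle x,z\rangle\big)$. This motive is an identity type of a small type, hence small, so $\ind_\exists$ applies. On canonical elements $z = \langle i, b\rangle^\exists$, both sides compute by the $\ind_\exists$ and $\Sigma$ computation rules to $\langle x,\langle i,b\rangle^\exists\rangle$, so $\refl$ suffices. The homotopy $f\circ g = \id$ is symmetric: use $\exists$-induction on $w$ into the small identity-type motive, then $\Sigma$-induction.

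\emph{Main obstacle.} The only delicate point is that every use of $\ind_\exists$, including those inside the homotopies, must target a small type. The maps target small $\Sigma$/$\exists$ types, and the homotopies target identity types in those small types, which are small because $U$ has codes for $=$. So the restricted eliminator is always applicable, and everything else is routine computation with $\beta$-rules.
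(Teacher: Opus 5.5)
Your proof is correct and is essentially the paper's argument. The paper gives no written proof, only the remark that these equivalences ``correspond to swapping function arguments and elements of pairs''; your maps are exactly those swaps, and you correctly handle the one non-routine point---that every use of $\ind_\exists$, including in the homotopies, has a small motive thanks to the codes for $\Sigma$, $\exists$ and $=$ in $U$---while the upgrade from quasi-inverse to half-adjoint equivalence via the HoTT book's Theorem 4.2.3 is legitimate and needs no function extensionality.
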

We also require equivalences showing the remaining interactions between the quantifiers and \(\Sigma\), \(\Pi\), and \(<\), for all \emph{small types} \(A\), \(x:A,i:\sf{Size}\vdash B(x,i)\), and \(i: \Size \vdash C(i)\).
These are not constructible in the calculus, so we add these as axioms, similarly to what has been done in guarded type theory \cite{mogelberg, guardedtypes16}.
\begin{align*}
    \Pi x : A.\,\exists i . B(x, i)&\simeq\exists i \. \Pi x : A. \, \exists j < i .B(x, j), \tag{\(\exists\) parametric over \(\Pi\)} \\
    \Sigma x:A.\,\forall i . B(x, i) &\simeq \forall i \. \Sigma x : A.\,B(i, x), \tag{\(\forall\) parametric over \(\Sigma\)} \\
    \exists i \.  C(i)&\simeq \exists i \. \exists j < i \.C(j), \tag{\(\exists\) parametric over \(<\)} \\
    \forall i \.  C(i) &\simeq \forall i \. \forall j < i \.C(j).  \tag{\(\forall\) parametric over \(<\)}
\end{align*}
More precisely, for each equivalence, a \emph{canonical} map $f$ can already be defined in one direction: for \(\exists\) these are the right-to-left maps and for \(\forall\) these are the left-to-right maps. 
Our axioms state that each such $f$ forms an equivalence, giving us a quasi-inverse in the other direction.

Note that the first equivalence differs in the fact that the existential quantifier does not exactly commute with the small $\Pi$-type; we will call this \emph{weakly commuting} later on. The intuitive reason for this is as follows: to go from right to left, there must be a limit of all sizes in the range of the dependent product. However, we cannot guarantee that the type $B$ is monotone (also known as covariant) in $\Size$, so we make use of $\exists j < i .B(x, j)$ instead, which is monotone in $\Size$. Importantly, this axiom thus provides an implicit construction of limits of sequences of sizes in the theory, without explicitly adding a limit operator on sizes to the syntax.

With these parametricity axioms, we can also show how the quantifiers interact with fixed types such as the base types: 
\begin{proposition}
    For a small type \(A\) with \(i\notin\sf{fv}(A)\), we have the following: \begin{align*}
        \exists i.\,A\simeq A, \tag{\(\exists\) parametric over a small fixed type} \\
        \forall i.\,A\simeq A. \tag{\(\forall\) parametric over a small fixed type}
    \end{align*}
\end{proposition}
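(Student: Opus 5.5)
We derive both equivalences from the parametricity axioms by choosing a suitable small type to quantify over. The convenient choice is the unit type $\top$ (or, for $\forall$, the boolean type), since a $\Pi$- or $\Sigma$-type over it collapses. The key idea is to read $A$ as a family constant in the size variable and compare with a quantification over a small type.

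\emph{The universal case.} We use the axiom ($\forall$ parametric over $\Sigma$) with base type $A$ and family $B(x,i) \coloneqq \top$. This gives
\[
\Sigma x:A.\,\forall i.\,\top \;\simeq\; \forall i.\,\Sigma x:A.\,\top .
\]
First we show $\forall i.\,\top$ is contractible. There is the term $\lambda^\forall i.\,\star$, and any $f:\forall i.\,\top$ equals it by function extensionality for $\forall$ (obtained from $\Pi i:\Size.\,A(i)\simeq\forall i.\,A(i)$) together with the $\eta$/uniqueness of $\top$. Hence the left-hand side is equivalent to $A$. On the right, $\Sigma x:A.\,\top\simeq A$ pointwise in $i$, and $\forall$ respects pointwise equivalences (again via the $\Pi$ correspondence). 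This yields $\forall i.\,A\simeq A$. We then check that the composite is the canonical map $a\mapsto\lambda^\forall i.\,a$, or at least that its inverse is evaluation at $0$. This is not strictly needed, but it confirms the equivalence is the expected one.

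\emph{The existential case.} Here we use ($\exists$ parametric over $\Sigma$), i.e.\ the constructible direction of Proposition~\ref{prop:definable-eqvs}, together with ($\exists$ parametric over $\Pi$). The cleanest route is ($\exists$ parametric over $\Pi$) with the base type $\bot$, which gives
\[
\top\simeq(\Pi x:\bot.\,\exists i.\,\top)\simeq\exists i.\,\Pi x:\bot.\,\ldots\simeq\exists i.\,\top ,
\]
so $\exists i.\,\top$ is contractible. Then Proposition~\ref{prop:definable-eqvs} with base $A$ and $B(x,i)\coloneqq\top$ gives
\[
\Sigma x:A.\,\exists i.\,\top\simeq\exists i.\,\Sigma x:A.\,\top ,
\]
and the left-hand side is $\simeq A$. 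Finally, $\exists$ respects pointwise equivalences of small families: define maps both ways by $\ind_\exists$, and prove the round trips by $\ind_\exists$ into the identity type. This step is legitimate because identity types of small types are small.

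\emph{Expected obstacle.} The main obstacle is this last existential step, namely showing that $\exists$ is functorial on equivalences and that $\exists i.\,\top$ is contractible using only small elimination. For contractibility, the $\bot$-instance of the $\Pi$ axiom should work. If it does not, a fallback is to use the map $\ind_\exists(z,\langle 0,\star\rangle)$ together with ($\exists$ parametric over $<$) to identify $\langle i,\star\rangle$ with $\langle 0,\star\rangle$. That fallback is more delicate, so the $\bot$-instance is the first thing to try.
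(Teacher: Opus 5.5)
Your proposal is correct and follows the paper's proof. You first handle $A \coloneqq \top$, using function extensionality for $\forall i.\,\top$ and the ($\exists$ parametric over $\Pi$) axiom at base type $\bot$ for $\exists i.\,\top$, and then pass to general $A$ via $\Sigma x:A.\,\top\simeq A$, commuting $\Sigma$ with the quantifiers by ($\forall$ parametric over $\Sigma$) and the definable $\Sigma$/$\exists$ equivalence of Proposition~\ref{prop:definable-eqvs}; the differences are only that the paper instantiates the family with $\bot$ rather than $\top$ in the $\Pi$ axiom, and that you spell out the small-elimination argument that $\exists$ and $\forall$ preserve pointwise equivalences, which the paper uses without comment.
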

\begin{proof}
    We first prove both statements in the special case where \(A\coloneqq\top\).
    For \(\exists\), we use the fact that for any type \(B\) we have \(\bot\to B\simeq\top\), and for \(\forall\), we use function extensionality: \begin{align*}
        \exists i.\,\top
        \simeq\exists i.(\bot\to\exists j<i.\,\bot)
        \simeq\bot\to\exists i.\,\bot
        \simeq \top, &&
        \forall i.\,\top\simeq\top.
    \end{align*}
    We use this special case and the equivalence \(\Sigma x:A.\,\top\simeq A\) to prove the general versions: \begin{gather*}
        \exists i.\,A\simeq\exists i.\,\Sigma x:A.\,\top\simeq\Sigma x:A.\,\exists i.\,\top\simeq\Sigma x:A.\,\top\simeq A, \\
        \forall i.\,A\simeq\forall i.\,\Sigma x:A.\,\top\simeq\Sigma x:A.\,\forall i.\,\top\simeq\Sigma x:A.\,\top\simeq A.\qedhere
    \end{gather*}
\end{proof}

    \section{Encoding (co)inductive types}
    \label{sec:encoding}
    In this section, we encode (co)inductive types in the theory by quantifying over sizes, taking inspiration from the approach of previous work \cite{parametric-quantifier, licentiate}.
We encode these types by constructing initial algebras and terminal coalgebras for polynomial endofunctors; these functors denote the shape of a (co)inductive type.
In particular, inductive types can be described as initial algebras for polynomial endofunctors \cite{containers, homotopy-init}, while coinductive types can be described as terminal coalgebras for polynomial functors, as was done in the construction of coinductive types from inductive types \cite{ahrens_et_al:LIPIcs.TLCA.2015.17}.

\subsection{Polynomial functors}
As we will be especially interested in endofunctors on the `category' \(U\) of small types and functions, we will omit $\El$ for small types in this section. 
We will also be interested in the `category' of size-indexed small types, where the objects are functions $A : \Size \to \U$, and a morphism between $A, B : \Size \to \U$ is a map $f : \forall i \. (A \, i) \to (B \ i)$, which we will write as $f : A \overset{i}{\to} B$. 

\begin{remark*}
Because we do not assume uniqueness of identity proofs (UIP), $U$ and \(\Size\to U\) are not technically categories, but \((\infty,1)\)-categories \cite{van2011types}.
So, when we talk about functors it would also be more proper to talk about \((\infty,1)\)-functors, which satisfy additional coherence conditions.
Although we suspect that the functors we give here satisfy these coherences, we do not require these conditions for our purpose.
Thus, our internal notions corresponds to those of \emph{wild category theory} \cite{guarded-cubical}, although we will stick to the terms category and functor.
For example, an endofunctor on \(U\) (a functor from \(U\) to \(U\)) consists of the following terms:\begin{align*}
    \F : \U \to \U, \quad\text{and}\quad \F_{A,B} : (A \to B) \to (\F A \to \F B)\text{ for all }A,B:\U,
\end{align*} which satisfy \begin{align*} 
    \phi_A : \F_{A, A}(\id_A) = \id_{FA},  \quad \text{and} \quad
    \phi_{f, g} : \F_{A, C}(g \circ f) = \F_{B, C} \, g \circ \F_{A, B} f.
\end{align*} 
\end{remark*}

We briefly recall polynomial endofunctors, and how they relate to inductive types (W-types) and coinductive types (M-types), which are the types of well-founded trees and non-well-founded trees respectively.
Given a small type $A : U$ and a type $B : A \to U$, the polynomial endofunctor $P_{A, B}$ on the category of small types is defined as:
\begin{align*}
    P_{A,B}(X) &\coloneq  \Sigma{a:A} \. (B(a) \to X ), \\
    P_{A, B}(f) &\coloneq  \lambda \langle a, g\rangle \. \langle a, f \circ g \rangle.
\end{align*}
We can encode W-types and M-types as initial algebras and terminal coalgebras of polynomial endofunctors respectively.
These W-types and M-types can in turn be used to encode all inductive and coinductive types.
We give some examples of (infinitely-branching) inductive types encoded as W-types. More on W-types and M-types, their elimination principles, and  their construction can be found in, for example, \cite{containers, ahrens_et_al:LIPIcs.TLCA.2015.17}.

Given $A : U$ and $B : A \to U$, the associated W-type  \(\rm W x : A.\,B(x) \) of well-founded trees has a single constructor rule:
\begin{center}
\begin{bprooftree}
    \AxiomC{$\Gamma \vdash a : A$}
    \AxiomC{$\Gamma \vdash f : B(a) \to\rm W x : A.\,B(x) $}
    \BinaryInfC{$\Gamma \vdash \mathsf{tree}(a, f) : \rm W x : A.\, B(x) $}
\end{bprooftree}
\end{center}
Intuitively, the type $A$ determines the shapes or constructors of nodes in a well-founded tree, and $B$ determines the branching structure, i.e. arity of each constructor. The term $\mathsf{tree}(a, f)$ can be thought of as the tree with the root node labelled by $a$, and children given by $f$. For example, because the type of natural numbers has two constructors, zero and successor, it is encoded as a W-type by taking the shape to be the boolean type. The zero constructor takes no arguments and the successor constructor takes a single argument. Thus, the type $B : \mathsf{Bool} \to U$ determining the arity should defined by \( B(\mathsf{ff}) = \bot \) and \( B(\mathsf{tt}) = \top \). 

More interestingly, infinitely-branching inductive types can also be encoded as W-types by letting $B$ be infinite. One example is the inductive type $\mathsf{Brw}$ of Brouwer ordinals, which extends the constructors for the natural numbers with a limit constructor $\mathsf{lim} :(\mathbb{N} \to \mathsf{Brw}) \to \mathsf{Brw}$. Informally, assuming an existing type of natural numbers, this is encoded as the type $\rm Wx:A.\,B$ where $A\coloneqq\top+\top+\top$ is a type with three terms, which we may suggestively call $\mathsf{zero}, \mathsf{succ}$, and $\mathsf{lim}$.
We can then define $B : A \to U$ by 
\begin{align*}
    B(\mathsf{zero}) = \bot, \quad B(\mathsf{succ}) = \top, \quad \text{and} \quad B(\mathsf{lim}) = \mathbb{N}.
\end{align*}

Infinitely-branching well-founded trees are also essential in the model of constructive set theory given by Aczel \cite{ACZEL197855}. In this model, the type of all sets is given by the W-type \(W_{A : U} A\), where $U$ is the type of the constructors, and the arity is given by the identity on $U$. The idea for this type is that each set can be viewed as a well-founded tree, where its direct successors are the element of the set. 

\subsection{Size-indexed functors}
In the next sections, we will develop our theory for general functors on small types and size-indexed small types, and show how polynomial functors fit in this framework.
We will call endofunctors on the category of size-indexed small types \emph{size-indexed endofunctors}.
Following existing notation \cite{parametric-quantifier, licentiate}, we define size-indexed endofunctors $\Diamond$ and $\Box$, corresponding to bounded existential and universal quantification.
\begin{proposition}
    We have an endofunctor of the category \(\Size\to U\): 
    \begin{align*}
        &\Diamond : (\Size \to \U) \to (\Size \to \U), &
        &\Diamond_{A,B}:(A \overset{i}{\to} B) \to (\Diamond A \overset{i}{\to} \Diamond B), \\
        &\Diamond \, A  \, i \coloneq   \exists j < i\. A \, j; &
        &\Diamond_{A, B} \, f \, i \, \langle j, a \rangle \coloneq  \brac{j, f \, j \, a}.
    \end{align*}
\end{proposition}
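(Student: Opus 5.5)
We need to check that the object part and morphism part are well-typed and that the two functor laws $\phi_A$ and $\phi_{f,g}$ from the Remark hold, in the wild-categorical sense. No coherences beyond these two laws are needed.

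\emph{Well-typedness.} Since $A\,j:\U$ and $j<i$ is a small mere proposition, the type $(j<i)\times A\,j$ is small. The formation rule for $\exists$ then gives $\Diamond A\,i\coloneqq\exists j.\,((j<i)\times A\,j):\U$. Abstracting over $i$ yields $\Diamond A:\Size\to\U$, using the identification of families of small types with functions into $\U$ noted in \autoref{sec:type-theory}.

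For the action on morphisms, take $f:A\sto B$ and $i:\Size$. We define $\Diamond_{A,B}\,f\,i:\Diamond A\,i\to\Diamond B\,i$ by $\ind_\exists$ into the small type $\Diamond B\,i$. On a canonical pair $\<j,\<p,a\>\>^\exists$ it returns $\<j,\<p,f\,j\,a\>\>^\exists$; the inner pair is eliminated with the ordinary $\Sigma$-eliminator. This elimination is allowed because the target is small, so the restricted eliminator of $\exists$ is no obstacle. Wrapping with $\lambda^\forall i$ gives a term of $\Diamond A\sto\Diamond B$.

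\emph{Functor laws.} Both laws are equalities of terms in $\forall i.\,\Diamond A\,i\to\Diamond C\,i$. Via the equivalence $\Pi i{:}\Size.\,(-)\simeq\forall i.\,(-)$, function extensionality for $\Pi$ yields function extensionality for $\forall$. So by $\mathsf{funext}$ twice it suffices to give, for each $i$ and each $z:\Diamond A\,i$, a pointwise path.

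\begin{itemize}
\item \emph{Identity law.} We need $\Diamond_{A,A}\,\id\,i\,z=z$, where $\id\coloneqq\lambda^\forall i.\,\id_{A\,i}$. The family $z\mapsto(\Diamond_{A,A}\,\id\,i\,z=z)$ is a family of small types, since the universe contains codes for identity types of small types. Hence we may apply $\ind_\exists$ on $z$, and then $\Sigma$-induction on the inner pair. On $\<j,\<p,a\>\>^\exists$, the $\beta$-rule for $\ind_\exists$ together with $\id\,j\,a\equiv a$ makes both sides definitionally equal, so $\refl$ suffices.
\item \emph{Composition law.} Here $g\circ f\coloneqq\lambda^\forall i.\,g\,i\circ f\,i$. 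By the same double induction into a small identity type, on canonical elements both sides reduce to $\<j,\<p,g\,j\,(f\,j\,a)\>\>^\exists$, so again $\refl$ suffices.
\end{itemize}

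\emph{Main point to check.} The only delicate point is that every induction used above eliminates into a \emph{small} type. The admissible motives are those with $P(z):\U$, and this holds because $\Diamond B\,i$ is small and identity types of small types are small. This is exactly why smallness of $j<i$ and of the quantified types matters. Given this, everything else follows from the computation rules, and no parametricity axioms from \autoref{ss:equivalences} are needed.
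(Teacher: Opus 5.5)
Your proposal is correct. The paper states this proposition without proof and treats it as routine, so your argument is exactly the check it leaves implicit: define the action on maps by $\ind_\exists$ into the small type $\Diamond B\,i$, prove both functor laws pointwise by $\ind_\exists$ (with $\Sigma$-induction) into small identity types, then apply $\mathsf{funext}$ for $\forall$ and $\Pi$, noting as you do that every motive is small.
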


\begin{proposition}
    We have an endofunctor of the category \(\Size\to U\):  
    \begin{align*}
        &\Box : (\Size \to \U) \to (\Size \to \U), &
        &\Box_{A,B} : (A \overset{i}{\to} B) \to (\Box A \overset{i}{\to} \Box B), \\
        &\Box \, A  \, i \coloneq  \forall j < i \. A \, j; &
        &\Box_{A , B} \, f \, i \, g \coloneq  \lambda j \. f \, j \, (g \, j).
    \end{align*}
\end{proposition}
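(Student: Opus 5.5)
We prove that $\Box$, with the given action on morphisms, is an endofunctor of $\Size\to\U$, in the wild sense of the earlier remark. There are three things to check: the object part is well typed, the morphism part is well typed, and the two functor laws $\phi_A$ and $\phi_{f,g}$ hold up to propositional equality.

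\emph{Object part.} For $A:\Size\to\U$ and $i:\Size$, the type $\Box\,A\,i\coloneqq\forall j.\,(j<i)\to A\,j$ is small. The inner type $(j<i)\to A\,j$ is a small $\Pi$-type, because $j<i$ is the small type $\uparrow j\le i$ and $A\,j:\U$. The formation rule for $\forall$ then yields a small type. By the $\eta$-rule for $\Pi$, we may regard $\Box\,A$ as a function $\Size\to\U$.

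\emph{Morphism part.} Take $f:A\sto B$ and $g:\Box\,A\,i$. The term $\lambda j.\,\lambda p.\,f\,j\,(g\,j\,p)$ has type $\forall j<i.\,B\,j$; here the proof $p$ is written explicitly, although the statement suppresses it. Abstracting over $i$ and $g$ gives an element of $\Box A\sto\Box B$.

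\emph{Functor laws.} We expect both laws to hold definitionally, so that $\refl$ witnesses them.
\begin{itemize}
\item For identities, unfold $\Box_{A,A}(\id)$. Using $\beta$ for $\Pi$ and $\forall$, its value at $i,g$ is $\lambda j.\lambda p.\,g\,j\,p$. This is $g$ by the $\eta$-rules for $\Pi$ and for $\forall$, both of which the theory includes definitionally. Applying $\eta$ once more in the variables $g$ and $i$ shows that $\Box_{A,A}(\id)\equiv\id_{\Box A}$. Here the identity morphism is $\lambda i.\lambda x.\,x$.
\item For composition, take $f:A\sto B$ and $h:B\sto C$. Both $\Box(h\circ f)$ and $\Box h\circ\Box f$ $\beta$-reduce, at $i,g,j,p$, to $h\,j\,(f\,j\,(g\,j\,p))$. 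Here composition in $\Size\to\U$ is the pointwise $(h\circ f)\,i\coloneqq h\,i\circ f\,i$.
\end{itemize}

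Function extensionality is not needed here. If one prefers not to rely on the $\eta$-rules, function extensionality for $\Pi$ could be used instead, together with the induced extensionality for $\forall$ noted in \autoref{ss:sizes}. With it, one lifts pointwise $\refl$ to the required paths. The argument is the same as for $\Diamond$, except that $\Box$ needs $\eta$ rather than the $\exists$-eliminator.

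The only point needing care is the suppressed inequality proofs. Because $\Box_{A,B}\,f$ passes the same proof $p$ through unchanged, no appeal to $i<j$ being a mere proposition is needed. So I expect no real obstacle; the main work is keeping track of the hidden $p$.
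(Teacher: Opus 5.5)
Your proposal is correct. The paper states this proposition without proof and treats it as routine; your check is exactly the verification it leaves implicit: $\Box A$ and $\Box_{A,B}f$ are well-typed once the suppressed proof $p : j<i$ is made explicit, and both functor laws hold definitionally by the $\beta$- and $\eta$-rules for $\Pi$ and $\forall$, so $\refl$ witnesses $\phi_A$ and $\phi_{f,g}$. One small caveat on your optional aside: the paper derives extensionality for $\forall$ from $\Pi i:\Size.\,A(i)\simeq\forall i.\,A(i)$, and that derivation itself uses the definitional $\eta$-rule for $\forall$, so it is not a genuinely $\eta$-free alternative. Your main argument does not depend on it.
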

We will write $\Diamond_i \, A$ for $\Diamond \, A \, i \coloneq  \exists j < i \. A \, j$ and similarly $\Box_i \, A$ for $\Box \,  A \, i \coloneq  \forall j < i\. A \, j$.

Given a functor $F$ on small types, we can make use of $\Diamond$ to `lift' $F$ to the category of size-indexed types by precomposing with $\Diamond$:
\begin{align*}
&\F[\Diamond -]  : (\Size \to \U) \to (\Size \to \U), &
&\F[\Diamond -]_{A,B} : (A \overset{i}{\to} B) \to ( \F[\Diamond \, A] \overset{i}{\to} \F[\Diamond \, B]), \\
&\F[\Diamond \, A] \, i \coloneq  \F( \Diamond_i \, A ) \coloneq  F (\exists j < i \. A \, j); &
&\F[\Diamond \, f]_{A, B} \, i \coloneq  F_{\Diamond_i  A, \Diamond_i  B} (\Diamond_{A, B} \, f \, i).
\intertext{Given a functor $F$ on the category of small types describing the shape of an inductive type, the size-indexed functor $\F[\Diamond -]$ should describe the shape of the size-indexed variant of the type. Analogously, \( \F[\Box -] \) is defined by precomposition with \( \Box \):}
    &\F[\Box -]  : (\Size \to \U) \to (\Size \to \U), &
    &\F[\Box -]_{A,B} : (A \overset{i}{\to} B) \to ( \F[\Box \, A] \overset{i}{\to} \F[\Box \, B]), \\
    &\F[\Box \, A] \, i \coloneq  \F( \Box_i \, A ) \coloneq  F (\forall j < i \. A \, j); & 
    &\F[\Box \, f]_{A, B} \, i \coloneq  F_{\Box_i  A, \Box_i  B}(\Box_{A, B} \, f \, i).
\end{align*}

\subsection{Inductive types}
Since we encode inductive type as initial algebras, we recall the notion of an algebra for a endofunctor on $U$ and define it for size-indexed functors.

\begin{definition}[Algebra]
    Given an endofunctor \( F : U \to U \), an \emph{F-algebra} is a pair \( (A, k_A) \) consisting of a type \( A : U \) and a structure map \( k_A : FA \to A \). A map of coalgebras \( (h, s_h) : (A, k_A) \to (B, k_B) \) is a pair of a map \( h : A \to B \) and a path \( s_h : h \circ k_A = k_B \circ Fh \). We write $\FAlg$ for the type of $F$-algebras.
An \emph{initial $F$-algebra} is an algebra \( (\mu F, \In) \), such that for any algebra \( (A, k_A) \), the following type is contractible: 
\[\sum_{h : \mu F \to A}(h \circ \In = k_A \circ Fh).\]
\end{definition}
We define \emph{size-indexed \( F \)-algebras} as algebras for \( F[\Diamond -] \).
Explicitly, a size-indexed \( F \)-algebra is a pair \( (A, k_A) \) where \( A : \Size \to U \) and \( k_A : F[\Diamond A] \sto A \).

To encode size-indexed inductive types, we need to construct initial algebras for size-indexed functors. This is done in the following lemma, where we make use of the fixpoint operator:
\begin{lemma}\label{lemma:inductive-sized-encoding}
    Let \( F : U \to U \) be a functor. The size-indexed functor \( F[\Diamond -] \) has an initial algebra \( (\mu^\Diamond F, \In^\Diamond) \), where the first component is given by:
\[
    \mu^\Diamond F \, i \coloneqq  \fix (\lambda i. \lambda  X. F(\Diamond_i X)) \, i.
\] 
\end{lemma}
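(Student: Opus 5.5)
We first define the structure map, then reduce initiality to uniqueness of fixpoints.

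\emph{Structure map.} Write $g \coloneqq \lambda i.\lambda X. F(\Diamond_i X)$, where $X : \forall j<i.\,U$ is read as a family on sizes below $i$. The $\beta$-rule $\fix^\beta\, g$ gives a path
\[
    \mu^\Diamond F\, i = F(\exists j<i.\,\mu^\Diamond F\, j) \equiv F[\Diamond\,\mu^\Diamond F]\, i .
\]
We take $\In^\Diamond\, i$ to be transport along the inverse of this path in the family $\El$. Transport along a path is an equivalence, so $\In^\Diamond$ is a pointwise equivalence. Its inverse is $\out^\Diamond \coloneqq \lambda i.\,\mathsf{transport}(\fix^\beta g\, i)$.

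\emph{Reduction to a fixpoint.} Fix a size-indexed algebra $(A,k_A)$. A map of algebras is a pair $(h,s_h)$ with $h : \mu^\Diamond F \sto A$ and $s_h : h\circ \In^\Diamond = k_A \circ F[\Diamond h]$. Since $\In^\Diamond$ is an equivalence, we can precompose with $\out^\Diamond$ and use function extensionality (for $\forall$ and $\Pi$). This turns the type of algebra maps into
\[
    \sum_{h : \mu^\Diamond F \sto A} \forall i.\; h\, i = k_A\, i \circ F_{\Diamond_i \mu^\Diamond F,\, \Diamond_i A}(\Diamond\, h\, i) \circ \out^\Diamond\, i .
\]
The right-hand side at stage $i$ only uses $h\, j$ for $j<i$, because $\Diamond\, h\, i\,\langle j,a\rangle = \langle j, h\, j\, a\rangle$. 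So we define the family of sets $C(i) \coloneqq \mu^\Diamond F\, i \to A\, i$ and the step function
\[
    f \coloneqq \lambda i.\lambda h'.\; k_A\, i \circ F(\lambda \langle j,a\rangle.\langle j, h'\, j\, a\rangle) \circ \out^\Diamond\, i ,
\]
with $f : \forall i.(\forall j<i.\,C(j)) \to C(i)$. The displayed type is then exactly $\sum_{h:\forall i.C(i)} \forall i.\,(h\,i = f\,i\,(\lambda j<i.\,h\,j))$. This is contractible by \autoref{lemma:unique-fix}.

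\emph{Main obstacle.} The delicate step is the equivalence of $\Sigma$-types in the reduction. We must check that $F[\Diamond h]\, i$ agrees definitionally with $F$ applied to $\lambda\langle j,a\rangle.\langle j, h\,j\,a\rangle$, which holds because $\Diamond$ acts by $\ind_\exists$ and is valid since the target is small. We must also check that composing a path equation with the equivalence $\In^\Diamond$ is itself an equivalence of path types, which is the standard fact that $\mathsf{ap}$ of an equivalence is an equivalence. Together these make the two $\Sigma$-types equivalent, and contractibility transfers across equivalences. None of this uses UIP, so the argument stays within the weak theory.
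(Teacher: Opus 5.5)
Your proposal is correct and follows essentially the same route as the paper. You define $\In^\Diamond$ and $\out^\Diamond$ by transport along $\fix^\beta$, turn the algebra-morphism equation into a fixpoint equation by precomposing with $\out^\Diamond$ and using function extensionality, and conclude by \autoref{lemma:unique-fix}. The paper also writes out $\fold^\Diamond$ explicitly as the centre of contraction, which you get for free from the lemma, and your decision to stop at paths in $\mu^\Diamond F\,i \to A\,i$ rather than going pointwise in $m$ matches the lemma's statement slightly more directly.
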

\begin{proof}[Proof Sketch]
    We note that the propositional \( \beta \)-rule for the fixpoint operator gives us the following equality:
    \begin{align*}
        \mu^\Diamond F \, i &\coloneqq\fix (\lambda i \. \lambda X\. F(\Diamond_i X)) \, i \\
        &= (\lambda i \. \lambda X\. F(\Diamond_i X)) \, i \, (\lambda r < s \. \fix (\dots) \, r) \\
        &\equiv F(\Diamond_i \, \fix (\lambda s \. \lambda X\. F(\Diamond_s X)) ) \\
        & =: F [\Diamond (\mu^\Diamond F)] \, i
    .\end{align*}
    The structure map \( \In^\Diamond : F[\Diamond (\mu^\Diamond F)] \sto \mu^\Diamond F \) is then given by the pointwise transport operation. Given an algebra \( (A, k_A) \), the term \( \fold^\Diamond \, A \, k_A : \mu^\Diamond F \sto A\) is also defined using the fixpoint operator, and can be shown to be an algebra morphism. Initiality then follows from the contractibility of \( \fix \!\) described in \autoref{lemma:unique-fix}, together with function extensionality.
\end{proof}
Thus, we have a way of constructing size-indexed types, which can be thought of as the ordinal approximations for our inductive types. But our main interest is in showing that $F$ itself has an initial algebra that can be constructed via existential quantification over the initial \( F[\Diamond{-}] \)-algebra.
To this end, we first note the following properties:
\begin{lemma}
    Every \( F \)-algebra \( (A, k_A) \) induces an \( F[\Diamond{-}] \)-algebra \( (TA, k_{TA}) \), where the first component is 
    \(
        TA \coloneqq (\lambda i. A) : \Size \to U.
    \) 
This extends to a functor \( T : \FAlg \to \sFAlg \). 
\end{lemma}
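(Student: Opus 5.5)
We need to build, for each $F$-algebra $(A,k_A)$, a structure map $k_{TA} : F[\Diamond\, TA] \sto TA$, i.e.\ for each $i$ a function $F(\exists j<i.\,A) \to A$. The plan is to obtain it by composing $k_A$ with $F$ applied to a map $\epsilon_i : \exists j<i.\,A \to A$. We then extend the construction to algebra morphisms and check functoriality.

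For $\epsilon_i$ we use $\ind_\exists$: its target $A$ is small, so elimination is allowed, and we set $\epsilon_i\,\langle j,\langle p,a\rangle\rangle \coloneqq a$, forgetting both the size and the inequality proof. Equivalently, $\epsilon_i$ is the evident projection $\exists j.\,((j<i)\times A) \to A$. This family is natural in $i$ trivially, since nothing depends on $i$. We then define
\[
    k_{TA}\,i \coloneqq k_A \circ F_{\Diamond_i TA,\,A}(\epsilon_i).
\]

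Next, for an $F$-algebra morphism $(h,s_h) : (A,k_A)\to(B,k_B)$, we set $Th \coloneqq \lambda i.\,h : TA \sto TB$. Its algebra-square proof $s_{Th}$ is obtained pointwise in $i$, followed by $\mathsf{funext}$ for $\forall$ (available via $\Pi i{:}\Size.\,A(i)\simeq\forall i.\,A(i)$ from \autoref{ss:sizes}). At each $i$ we need
\[
    h\circ k_A\circ F\epsilon^A_i \;=\; k_B\circ F\epsilon^B_i\circ F(\Diamond_{TA,TB}(Th)\,i).
\]
We get this by first rewriting the left side with $s_h$ to $k_B\circ Fh\circ F\epsilon^A_i$. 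Then we apply the functor law $\phi_{f,g}$ twice to move the compositions inside $F$. Finally, we use the equation $h\circ\epsilon^A_i = \epsilon^B_i\circ \Diamond(Th)\,i$, which holds pointwise by $\ind_\exists$ into the small identity type: on $\langle j,a\rangle$ both sides compute to $h\,a$ definitionally. Function extensionality turns this pointwise equality into an equality of maps.

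Functoriality of $T$ is immediate on underlying maps: $T\id = \lambda i.\id$ and $T(g\circ h)=\lambda i.\,g\circ h$ hold definitionally. For the path components, in the wild-categorical sense adopted in the remark, we only need the underlying equalities, possibly together with a routine check that the composite path $s_{T(g\circ h)}$ agrees with the pasted one. The main obstacle is therefore bookkeeping: assembling $s_{Th}$ correctly from $s_h$, the $\phi$'s and funext. Should coherence of the path parts be demanded, I would note that identities of algebra morphisms reduce to those of underlying maps plus paths, and handle them by path induction on $s_h$. No new idea beyond the elimination of $\exists$ into a small type is required.
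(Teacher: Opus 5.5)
Your proposal is correct and follows the paper's proof: your $\epsilon_i$ is the paper's $\mathsf{extract}_A\,i$, you define $k_{TA}\,i \coloneqq k_A\circ F(\epsilon_i)$ and $Th \coloneqq \lambda i.\,h$ exactly as the paper does, and your pointwise equation $h\circ\epsilon^A_i=\epsilon^B_i\circ\Diamond(Th)\,i$ is the naturality of $\mathsf{extract}$ in $A$ that the paper invokes. You also spell out steps the paper's sketch leaves implicit, namely elimination of $\exists$ into the small identity type, the functor laws $\phi$, and function extensionality for $\forall$.
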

\begin{proof}[Proof Sketch]
    We require a structure map \( k_{TA} : F[\Diamond TA] \sto TA \). Noting that \( i \notin \text{fv}(A) \), we can define the following auxiliary term:
    \begin{align*}
        & \mathsf{extract} : \Pi A:U \. \Diamond T A \sto TA, \\
        & \mathsf{extract}_A \, i \, \langle j, a \rangle \coloneqq  a
    .\end{align*}
    This is well-defined because $TA \, j \equiv A$. As the name suggests, it allows us to extract a term away from the irrelevant size-index.
    This allows us to define \( k_{TA} \, i \coloneqq k_A \circ F(\mathsf{extract}_A \, i)\), since $F(\mathsf{extract}_A \, i) : F[\Diamond_i TA] \to FA$.
    
    Given a algebra map \( f : (A, k_A) \to (B, k_B) \), we lift this to a map between size-indexed algebras \( T\,f : (TA, k_{TA}) \to (TB, k_{TB}) \) defined by \( Tf \coloneqq \lambda i . f \). The fact that this forms an algebra map follows from the fact that $\mathsf{extract}_A$ is natural in $A$.
\end{proof}

We can also define a functor from size-indexed algebras to algebras by existential quantification on the carrier of the size-indexed algebra.
However, this construction only works on the class of functors satisfying the following condition. As we will see, the equivalences we axiomatised ensure that this includes all polynomial functors.

\begin{definition}\label{def:weakly-commute-exist}
    Let $\F$ be an endofunctor on the category of small types. We say that $\F$ \emph{weakly commutes} with the existential type if the following canonical map is an equivalence for any $A : \Size \to \U$:
    \begin{align*}
    \mathsf{can}_A &: \exists i \. F [\Diamond A] \, i \to \F (\exists i .A \, i), & \varphi_A& : \forall i \. (\exists j < i . A \, j \to \exists j . A \, j), \\
    \mathsf{can}_A & \, \langle i, a' \rangle \coloneq  F \, (\varphi_A \, i) \, a' & \varphi_A& \, i \, \brac{j, a} \coloneq  \brac{j, a}.
    \end{align*}
\end{definition}

For functors that satisfy this property, we can define a functorial map $\exists : F[\Diamond{-}]\text-\sf{Alg} \to F\text-\sf{Alg}$ by lifting the existential quantifier to work on algebras.
\begin{proposition}
    Given a functor $F$ on the category of small types that weakly commutes with the existential quantifier, a size-indexed $F$-algebra $(A, k_A)$ gives rise to an $F$-algebra $(\exists i . A \, i, k_{\exists A})$. Moreover, this operation is functorial.
\end{proposition}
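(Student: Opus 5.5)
We define the structure map using the inverse of the canonical map from \autoref{def:weakly-commute-exist}. Since $F$ weakly commutes with $\exists$, $\mathsf{can}_A$ has a quasi-inverse
\[
\mathsf{can}_A^{-1} : F(\exists i.\,A\,i) \to \exists i.\,F[\Diamond A]\,i .
\]
We then take
\[
k_{\exists A} \coloneqq \lambda z.\,\ind_\exists\bigl(\mathsf{can}_A^{-1}\,z,\ \langle i, k_A\,i\,a'\rangle^\exists\bigr),
\]
where the bound variables of $\ind_\exists$ are $i$ and $a' : F[\Diamond A]\,i$. In words, we move the existential outside, apply the size-indexed structure map at the abstract size $i$, and repackage.

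This elimination is legitimate because the motive $\exists i.\,A\,i$ is a small type. Equivalently, $k_{\exists A}$ is $\mathsf{can}_A^{-1}$ followed by the map $\exists k_A : \exists i.\,F[\Diamond A]\,i \to \exists i.\,A\,i$ induced by $k_A$. This yields the $F$-algebra $(\exists i.\,A\,i, k_{\exists A})$.

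For functoriality, let $(h, s_h) : (A,k_A)\to(B,k_B)$ be a map of size-indexed algebras. Here $h : A\sto B$ and $s_h : h\circ k_A = k_B\circ F[\Diamond h]$, read pointwise in $i$. We define the underlying map
\[
\exists h : \exists i.\,A\,i \to \exists i.\,B\,i,\qquad \exists h\,\langle i,a\rangle \coloneqq \langle i, h\,i\,a\rangle .
\]
To obtain $s_{\exists h} : \exists h \circ k_{\exists A} = k_{\exists B}\circ F(\exists h)$, we apply function extensionality and prove the equation pointwise in three steps.

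\textbf{Naturality of $\mathsf{can}$.} First we show $F(\exists h)\circ \mathsf{can}_A = \mathsf{can}_B\circ \exists(F[\Diamond h])$. This holds because $\varphi$ is natural, i.e.\ $\exists h\circ\varphi_A\,i = \varphi_B\,i\circ\Diamond_{A,B}\,h\,i$, which is a definitional computation on pairs. Combining this with the functor law $\phi_{f,g}$ of $F$ gives the claimed equation. Naturality then transfers to the inverses, $\mathsf{can}_B^{-1}\circ F(\exists h) = \exists(F[\Diamond h])\circ\mathsf{can}_A^{-1}$, by pre- and post-composing with the (half-adjoint) inverse data.

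\textbf{Naturality of $\exists$ on $k$.} Next we show $\exists h\circ\exists k_A = \exists k_B\circ\exists(F[\Diamond h])$. This follows from $s_h$ by $\ind_\exists$ into the small identity type $\exists i.\,B\,i$, applying $\mathsf{ap}$ of the pairing map to $s_h\,i\,a'$.

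\textbf{Pasting.} Finally we paste these two squares together to obtain $s_{\exists h}$.

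The main obstacle is the functor laws for $\exists$ on algebras, $\exists(\id) = \id$ and $\exists(g\circ h) = \exists g\circ\exists h$. Because we are working without UIP, these require equalities of pairs $(\exists h, s_{\exists h})$, so the path components must agree as well. On underlying maps both laws hold definitionally up to $\eta$ for $\exists$, which we obtain by $\ind_\exists$ plus function extensionality. For the path components we have to check that the pasted squares compose correctly. Consistent with the wild-category convention in the Remark, I would prove only the equality of the underlying maps, and the path components up to the evident coherence argument.

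The naturality of $\mathsf{can}_A^{-1}$, derived from the naturality of $\mathsf{can}_A$, is the step that needs the most careful path algebra. I would handle it via the standard fact that a natural transformation whose components are equivalences has a natural inverse.
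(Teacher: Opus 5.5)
Your proposal is correct and follows the paper's proof. You define $k_{\exists A} = \exists k_A \circ \mathsf{can}_A^{-1}$, derive the algebra-morphism square for $\exists h$ from naturality of $\mathsf{can}$, and spell out the steps the paper's sketch leaves implicit: transfer to $\mathsf{can}^{-1}$, the square obtained from $s_h$, and the pasting. Like the paper, you leave the coherence of path components in the functor laws to the wild-category convention.
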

\begin{proof}[Proof Sketch]
    We first note that for any map $f : A \sto B$, we can define a map between existentially quantified types as follows:
    \begin{align*}
        \exists f : \exists i . A \, i \to \exists i . B \, i, \\
        \exists f \, \langle i, a \rangle \coloneq  \langle i, f \, i \, a \rangle.
    \end{align*}

    Let $\mathsf{can}^{-1}_A : F(\exists i . A \, i) \to (\exists i . F [\Diamond A] \, i) $ be the equivalence obtained from $F$ weakly commuting with the existential type. We define the map $k_{\exists A} : F (\exists i . A \, i) \to \exists i. A \, i$ by
    \begin{align*}
    k_{\exists A}  \coloneq  \exists \, k_A \circ \mathsf{can}^{-1}_A.
    \end{align*}
    Given an algebra map $f :(A, k_A) \to (B, k_B)$, we define an algebra map $(\exists i . A \, i, k_{\exists A}) \to (\exists i . B \, i, k_{\exists B})$ as $\exists f : \exists i . A \, i \to \exists i . B \, i$.
    The fact that this constitutes an algebra morphism follows from $\mathsf{can}_A$ being natural in $A$. Thus, we have a functor $\exists : \sFAlg \to \FAlg$.
\end{proof}

Finally, we are able to prove our desired result, namely that for functors that weakly commute with the existential type, their initial algebra is obtained by existential quantification over the initial size-indexed algebra.
\begin{theorem}\label{theorem:inductive-construction}
    Let $\F$ be a functor on the category of small types that weakly commutes with the existential type. Then $\F$ has an initial algebra with first component:
    \[
    \mu  \F \coloneq  \exists i . (\mu^\Diamond F) \, i.
    \]
\end{theorem}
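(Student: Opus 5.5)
The algebra structure on $\mu F$ is the one the functor $\exists : \sFAlg \to \FAlg$ produces from the initial size-indexed algebra $(\mu^\Diamond F, \In^\Diamond)$ of \autoref{lemma:inductive-sized-encoding}:
\[
\In \coloneqq k_{\exists \mu^\Diamond F} = \exists\,\In^\Diamond \circ \mathsf{can}^{-1}_{\mu^\Diamond F} : F(\exists i.\,\mu^\Diamond F\, i) \to \exists i.\,\mu^\Diamond F\, i .
\]
Fix an $F$-algebra $(A,k_A)$. We must show that the type of algebra maps $(\mu F,\In)\to(A,k_A)$ is contractible. The idea is to reduce this to initiality of $\mu^\Diamond F$ via an adjunction-like correspondence $\exists \dashv T$ between the two algebra categories.

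\textbf{Existence.} Apply the functor $T$ to $(A,k_A)$. Initiality of $\mu^\Diamond F$ gives a unique size-indexed algebra map $\fold^\Diamond : (\mu^\Diamond F,\In^\Diamond)\to(TA,k_{TA})$. Functoriality of $\exists$ turns this into an $F$-algebra map
\[
\exists\,\fold^\Diamond : (\mu F,\In) \to (\exists i.\,A,\ k_{\exists TA}).
\]
We then compose with a counit $\varepsilon_A : \exists i.\,A \to A$, defined by $\ind_\exists$ as $\langle i,a\rangle \mapsto a$. This is legitimate since $A$ is small. It remains to check that $\varepsilon_A$ is an algebra map $(\exists i.\,TA\,i, k_{\exists TA}) \to (A,k_A)$, that is,
\[
\varepsilon_A \circ \exists\,k_{TA} \circ \mathsf{can}^{-1} = k_A \circ F\varepsilon_A .
\]
Precompose with the equivalence $\mathsf{can}$ and use $k_{TA}\,i = k_A\circ F(\mathsf{extract}_A\,i)$. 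The identity then reduces to $F(\mathsf{extract}_A\,i) = F\varepsilon_A \circ F(\varphi_{TA}\,i)$. This holds pointwise, since both sides send $\langle j,a\rangle \mapsto a$, so it follows from functoriality of $F$ and an $\ind_\exists$ into the small identity type.

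\textbf{Uniqueness.} Set up an equivalence between $F$-algebra maps $(\exists i.\,X\,i, k_{\exists X}) \to (A,k_A)$ and size-indexed algebra maps $(X,k_X)\to(TA,k_{TA})$, for any size-indexed algebra $(X,k_X)$.

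\emph{Forward direction.} Send $h$ to $\lambda i.\lambda x.\, h\langle i,x\rangle$. The algebra-map square is transported along $\mathsf{can}$, using the same $\mathsf{extract}$/$\varphi$ identity as above.

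\emph{Backward direction.} Send $g$ to $\varepsilon_A\circ\exists g$.

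\emph{Inverses.} One composite is the identity by $\ind_\exists$: two maps out of $\exists i.\,X\,i$ into the small type $A$ agreeing on pairs are equal, via function extensionality together with $\ind_\exists$ into the small type $h\,z = h'\,z$. The other composite is the identity definitionally, by the $\beta$-rule for $\ind_\exists$.

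\textbf{Main obstacle.} The underlying functions are matched easily. The hard part is matching the \emph{path components}, i.e.\ showing that the homotopy $s_h$ and its transported version correspond, without assuming UIP. The first attempt will be to rewrite the type of algebra maps as a $\Sigma$ of $h$ and a pointwise homotopy. Then, using that $\mathsf{can}$ is an equivalence, reindex the homotopy over $\exists i.\,F[\Diamond X]\,i$. Finally, use the universal property of $\exists$ for small dependent families, namely that $\Pi z{:}\exists i.\,B\,i.\,P(z) \simeq \forall i.\,\Pi b.\,P\langle i,b\rangle$ for small $P$, which follows from $\ind_\exists$ and its $\beta$-rule together with function extensionality. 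This turns the homotopy into the size-indexed one component-wise, yielding an equivalence of $\Sigma$-types.

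\textbf{Conclusion.} Contractibility for $\mu F$ then transfers from contractibility for $\mu^\Diamond F$, which is given by \autoref{lemma:inductive-sized-encoding}, across this equivalence instantiated at $X = \mu^\Diamond F$.
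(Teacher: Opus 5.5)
Your proposal is correct and follows essentially the same route as the paper: the currying correspondence behind $\exists \dashv T$ is lifted to an equivalence between $F$-algebra maps $(\exists i.\,X\,i, k_{\exists X}) \to (A,k_A)$ and size-indexed algebra maps $(X,k_X) \to (TA,k_{TA})$, and contractibility is then transferred from the initiality of $(\mu^\Diamond F, \In^\Diamond)$. Your treatment of the path components (reindexing along the equivalence $\mathsf{can}$, then using the universal property of $\exists$ for small families) fills in detail that the paper's sketch leaves implicit, and your separate existence step is already covered by that equivalence.
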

\begin{proof}[Proof Sketch]
    The categorical intuition is that this follows from the fact that the functorial map $\exists : \sFAlg \to \FAlg$ of the previous proposition is the left adjoint of $T : \FAlg \to \sFAlg$.
    Since left adjoints preserve initial objects, we find that $\exists i. (\mu^\Diamond F) \, i$ is the initial algebra of $F$, with the structure map $\In \coloneq  \exists \, \In^\Diamond \circ \mathsf{can}^{-1}_{\mu^\Diamond F}$, which is also an equivalence.

    We can work this out in more detail: for any $A : \Size \to U$ and $B : U$, there is an isomorphism between maps of type $A \sto TB$ and those of type $\exists i . A \, i \to B$, given by the currying isomorphism. This extends to an isomorphism between size-indexed algebra maps of type $(A, k_A) \to (TB, k_{TB})$ and algebra maps of type $(\exists i . A \, i, k_{\exists A}) \to (B, k_B)$. Thus, defining an algebra map $(\mu \F, \In) \to (B, k_B)$ is equivalent to defining a size-indexed algebra map $(\mu^\Diamond F, \In^\Diamond) \to (TB, k_{TB})$. This map is uniquely given by the initiality of $(\mu^\Diamond F, \In^\Diamond)$.  
\end{proof}

Lastly, we show that all polynomial endofunctors on small types satisfy the condition of weakly commuting with the existential quantifier.
The equivalences that we have axiomatised entail the following chain of equivalences, which corresponds precisely to the canonical map denoted in \autoref{def:weakly-commute-exist}, noting that $i \notin \text{fv}(A)$ and $i \notin \text{fv}(B(a))$ for $a : A$:
\begin{align*}
    \exists i . P_{A,B}(\Diamond X) \, i &\coloneq  \exists i \. \Sigma {a : A}  \. (B(a) \to \exists j < i . X \, j) \\
    &\simeq \Sigma {a : A} \. (\exists i . B(a) \to \exists j < i . X \, j)\\
    &\simeq \Sigma {a : A} \. (B(a) \to \exists i . \exists j < i . X \, j)\\
    &\simeq \Sigma {a : A} \. (B(a) \to \exists i . X \, i) =: P_{A,B}(\exists i . X \, i).
\end{align*}
Since all polynomial functors (both finitely- and infinitely-branching) weakly commute with the existential quantifier, their initial algebra is constructed via existential quantification over the corresponding initial size-indexed algebra. This allows us to construct W-types via quantification over size-indexed types.

\subsection{Coinductive types}
The coalgebra construction is similar to that of algebras. We define size-indexed coalgebras and construct their terminal coalgebras using the fixpoint operator. This allows us to obtain terminal coalgebras for functors satisfying an analogous property: weakly commuting with the universal quantifier. This is done by showing that the universal quantifier lifts to a functor from size-indexed coalgebras to coalgebras, which is right adjoint to the inclusion map from coalgebras to size-indexed coalgebras. 
The equivalences allow us to show that all polynomial functors weakly commute with the universal quantifier.
The details are given in \autoref{appendix:encoding}.

    \section{Realisability model}
    \label{sec:model}

\subsection{Assemblies and partial equivalence relations}

In this realisability model, based on Hyland's effective topos, \cite{hyland_effective_1982}, types are interpreted as \emph{assemblies}: sets with associated computability data in the form of a relation between a set of \emph{realisers} and elements of the set.
The intuition given by Reus \cite{REUS1999128} is that realisers can be seen as `codes' in a low-level language or a general model of computation. A code that is related to an object in the set is said to realise it, and can be seen as a representation of that object. Maps between such assemblies must be computable in this general model of computation. 

The definition of an assembly is parameterised by a general model of computation, which is given by a partial combinatory algebra (pca) \cite{feferman}, providing the set of realisers. We will restrict ourselves to the most well-known pca, the first Kleene algebra \( \mathcal{K}_1 \), although the construction should work in general.
Here, the realisers are the natural numbers, which enumerate the partial recursive functions \(\{\varphi_n\}_{n \in \bN} \).  We define a partial operation \( \cdot : \bN \times \bN \rightharpoonup \bN \) on this set of realisers by \( m \cdot n := \varphi_m(n) \), and we write \( {m \cdot n \downarrow} \) when \( \varphi_m(n) \) is defined. 
\begin{definition}
    An \emph{assembly} is a pair $(X, \Vdash_X)$, where $X$ is a set and ${\Vdash_X} \subseteq \bN \times X$ is a relation such that for every \( x \in X \) there exists some \( n \in \bN \) such that \( n \Vdash_X x \). We will call \( \Vdash_X \) the \emph{realisability} relation of \( X \). A morphism between assemblies \((X, \Vdash_X)\) and \( (Y, \Vdash_Y) \) is a function \( f : X \to  Y \) such that there exists an \( e \in \bN \) (not part of the data) satisfying:
\[
    \forall n \in \bN.\,\forall x \in X.\,(n \Vdash_X x) \implies (e \cdot n \downarrow \mbox{ and } e \cdot n \Vdash_Y f(x)).
\] 
In this case, we say that \(f\) is \emph{tracked} by \(e\). We denote the category of assemblies by \( \Asm \).
\end{definition}
Note that there is a full and faithful inclusion from sets to assemblies defined by assigning the largest realisability relation; we denote this by \( \nabla X \coloneqq (X, \bN \times X) \). 

Since morphisms between assemblies are functions that are tracked by a realiser, they must be sufficiently computable.
This ensures that the dependent product can be small, even when quantifying over large types, which allows us to model the impredicative universal quantifier.
It also allows us to enforce a notion of parametricity in the model via the inclusion from sets to assemblies. The intuition is that \( \nabla X \) carries no meaningful computational data, making the objects of \( X \) indistinguishable when defining morphisms. Since we view the axioms that we added to the theory as a kind of parametricity, this construction allows us to enforce these axioms in the model. 

Since we will interpret types as assemblies, we need a sufficiently `small' subcategory to interpret the small types. As a step towards this goal, we first consider the following subcategory of assemblies.
\begin{definition}
An assembly $(X, \Vdash_x)$ is called a \emph{modest set} if every $n \in \bN$ codes at most one $a \in X$: 
\[
    \forall n \in \bN \. (n \Vdash_X x_1, x_2) \implies(x_1 = x_2).
\]
\end{definition}
We denote the category of modest sets by $\Mod$, which is a full subcategory of $\Asm$. However, the size of the carrier set of a modest set is not bounded in any way, so we cannot equip the `set' of all modest sets with the structure of an assembly.
Instead, we note that the elements of a modest set are not that important: we only need to know when two realisers encode the same element, as this lets us represent an object by the equivalence class of its realisers.
This insight gives us a category that is equivalent to the modest sets, namely that of partial equivalence relations on the set of realisers. 
\begin{definition}
    A \emph{partial equivalence relation (PER)}  is a symmetric and transitive relation $R \subseteq \mathbb{N} \times \mathbb{N}$. We define the following notation: 
\begin{align*}
	\rm{dom}(R) &:= \{n \in \mathbb{N} \mid n\mathbin Rn\}, &
	[n]_R &:= \{m \in \mathbb{N} \mid nRm\}, &
	\mathbb{N} /R &:= \{[n]_R \mid n \in \rm{dom}(R)\}.
\end{align*} 
A morphism between PERs R and S is a function \( f : \bN / R \to \bN /S \) such that \( f([n]_R) = [e \cdot n]_S \) for some \( e \in \bN \), in which case we say that \( f \) is tracked by \( e \).
We denote this category by \( \Per \).
\end{definition}
Note that a PER R is an equivalence relation on \( \rm{dom}(R) \).
\begin{lemma}\label{lemma:per-mod-equiv}
There is an equivalence between the categories $\Per$ and $\Mod$.
\end{lemma}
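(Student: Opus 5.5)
I will construct two functors $\Phi:\Per\to\Mod$ and $\Psi:\Mod\to\Per$ and show that both composites are naturally isomorphic to the identity. This is the standard route, and it is more direct than proving that a single functor is full, faithful and essentially surjective.

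\textbf{The functor $\Phi$.} For a PER $R$, let $\Phi(R)\coloneqq(\bN/R,\Vdash_R)$, where $n\Vdash_R [m]_R$ holds iff $n\in[m]_R$.
\begin{itemize}
\item Every class $[m]_R$ with $m\in\rm{dom}(R)$ contains $m$, so it is realised.
\item The assembly is modest: if $n\Vdash_R [m]_R$ and $n\Vdash_R [m']_R$, then $nRm$ and $nRm'$. Symmetry and transitivity give $mRm'$, hence $[m]_R=[m']_R$.
\end{itemize}
On morphisms, $\Phi$ is the identity on underlying functions. If $f$ is tracked by $e$ in the sense of PERs, then for $n\Vdash_R[m]_R$ we have $[n]_R=[m]_R$, so $f([m]_R)=[e\cdot n]_S$. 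Hence $e\cdot n$ is defined and realises $f([m]_R)$, so the same $e$ tracks $f$ as an assembly map. Functoriality is immediate.

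\textbf{The functor $\Psi$.} For a modest set $(X,\Vdash_X)$, set $n\mathrel{R_X}m$ iff there is $x\in X$ with $n\Vdash_X x$ and $m\Vdash_X x$.
\begin{itemize}
\item Symmetry is clear.
\item Transitivity uses modesty. From $n,m\Vdash x$ and $m,k\Vdash y$, the realiser $m$ realises both $x$ and $y$, so $x=y$.
\end{itemize}
A morphism $f:X\to Y$ tracked by $e$ is sent to $[n]\mapsto[e\cdot n]$.
\begin{itemize}
\item This is well defined: if $n,m\Vdash x$, then $e\cdot n$ and $e\cdot m$ both realise $f(x)$, so they are $R_Y$-related.
\item It is independent of the choice of tracker: any two trackers send $n\Vdash x$ to realisers of the same $f(x)$.
\end{itemize}
This independence is what makes $\Psi$ a functor even though trackers are not part of the morphism data. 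Preservation of identities and composition then follows by choosing the evident trackers (identity and composite codes) and invoking this independence.

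\textbf{The natural isomorphisms.}
\begin{itemize}
\item $\Psi\Phi R=R$ on the nose. Two realisers realise a common class $[m]_R$ iff they are $R$-related.
\item For $\Phi\Psi X\cong X$, send $x\in X$ to the class of its realisers, $\{n\mid n\Vdash_X x\}$. This set is nonempty by the totality condition on assemblies, and it is an $R_X$-class.
  \begin{itemize}
  \item The map is injective by modesty and surjective by the definition of $\rm{dom}(R_X)$.
  \item Both it and its inverse are tracked by the identity code, since the realisers of $x$ are exactly the realisers of its image.
  \item Naturality in $X$ holds because $\Psi f$ was defined from realisers of $f(x)$.
  \end{itemize}
\end{itemize}

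The main subtlety is the well-definedness of $\Psi$ on morphisms, which is where modesty and tracker-independence are essential. Everything else is routine bookkeeping.
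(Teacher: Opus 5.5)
Your proof is correct and uses the same constructions as the paper: your $\Phi(R)$ is the paper's $\iota(R)=(\bN/R,\in)$, and your $\Psi(X)$ is the paper's relation $\sim_X$. The paper only asserts that these ``induce an equivalence,'' and your checks (modesty of $\Phi(R)$, transitivity of $\Psi(X)$ via modesty, tracker-independence of $\Psi$ on morphisms, and the two natural isomorphisms) supply exactly the details it omits.
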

\begin{proof}[Proof Sketch]
    Given a PER \( R \subseteq \bN \times \bN \) the pair $\iota (R) \coloneqq  (\bN/R, \in)$ is a modest set.
    Conversely, given a modest set \( (X, \Vdash_X) \), we can define a relation on the set of realisers as follows: \begin{align*}
            n_1\sim_{X}n_2 \iff \exists x \in X \. (n_1 \Vdash_X x) \land (n_2 \Vdash_X x).
    \end{align*}
    This induces an equivalence between PERs and modest sets.
\end{proof}

In our model, we will use assemblies to interpret large types, while \autoref{lemma:per-mod-equiv} allows us to make use of the category of PERs to interpret the small types, since every PER induces a modest set with a countable carrier. 
To describe our model, we make use of the categories with families (CwF) framework, which closely follows the syntactic structure of dependent type theory.

We recall the basic parts of which a CwF \( \mathcal{C} \) consists:  
\begin{itemize}
    \item A category \( \C \) of \emph{semantic contexts} and context morphisms;
    \item For every semantic context \( \Gamma \in \C \), a collection \( \Ty(\Gamma) \) of \emph{semantic types};
    \item For every semantic context \( \Gamma \in \C \) and semantic type \( A \in \Ty(\Gamma) \), a collection \( \Tm(\Gamma, A) \) \emph{of semantic terms};
    \item For every context morphism $f : \Delta \to \Gamma$, an operation $-[f] : \Ty(\Gamma) \to \Ty(\Delta)$ for \emph{semantic substitution} on types, and for $A \in \Ty(\Gamma)$, an operation $-[f] : \Tm(\Gamma, A) \to \Tm(\Delta, A[f])$ for semantic substitution on terms;
    \item For every semantic context $\Gamma \in \C$ and $A \in \Ty(\Gamma)$, an extended context $\Gamma.A \in \C$. 
\end{itemize}

The category of assemblies can be given the structure of a CwF, with semantic contexts given by assemblies. A semantic type \( A \in \Ty(\Gamma) \) is given by a \( \Gamma \)-indexed family of assemblies \( (A_\gamma)_{\gamma \in \Gamma} \). Given a semantic type \( A \in \Ty(\Gamma) \), a semantic term \( a \in \Tm(\Gamma, A) \) is given by a \emph{dependent morphism} \( a : (\gamma \in \Gamma) \to A_\gamma \), that is, a dependent function tracked by an \(e\in\bb N\): 
\[
    \forall n \in \bN.\,\forall \gamma \in \Gamma \. (n\Vdash_\Gamma \gamma) \implies (e\cdot n\Vdash_{A_\gamma}a(\gamma)).
\]
We will also write this as a family \( (a_\gamma)_{\gamma \in \Gamma} \). The semantic substitution operation $-[f]$ on both types and terms is given by precomposing with $f$. Finally, context extension is given by dependent pairing (as in the ussual set-theoretic model).

We will omit the term `semantic', blurring the distinction between syntax and semantics. Further details on categories with families and the associated structures in assemblies can be found in, for example, \cite{Hofmann_1997, REUS1999128}. Assemblies support the interpretation of our basic types, dependent products and sums, and the identity type. In the following subsections, we highlight the relevant constructions.

\subsection{Universe type}
We give a brief description of how the universe type is interpreted in the model, as it will provide intuition on why we interpret the type of sizes in the same way.
The universe type \( U \) in a context \( \Gamma \) is interpreted as a semantic type \( U \in \Ty(\Gamma) \), meaning a \( \Gamma \)-indexed family of assemblies. Since the universe type is a non-dependent type, its interpretation should be a constant family. As we will want to interpret the small types using PERs, we will interpret \( U \) as the constant family of an assembly where the carrier set is the set of all PERs, which we also denote by \( \Per \). To construct an assembly out of \( \Per \), we use the inclusion from sets to assemblies, giving us \( \nabla \Per \). Recall that the intuition is that \( \nabla \Per \) carries no meaningful computational data, making the PERs computationally indistinguishable when defining morphisms.
The decoder \( \El \) is interpreted by the inclusion of PERs into assemblies.

\subsection{Type of sizes}
In previous work \cite{parametric-quantifier}, the type of sizes is interpreted as the set of natural numbers. By contrast, the impredicativity of our realisability model allows us to interpret the type $\Size$ as a large type, while still having universal quantification over $\Size$ with a small type live in $\U$. Given a context $\Gamma$, we define the semantic type $\Size \in \Ty(\Gamma)$ as $\Size_\gamma = \nabla \omega_1$, where $\omega_1$ is the first uncountable ordinal: the set of all countable ordinals.
We construct an assembly from the ordinal \( \omega_1 \) by giving it the full realisability relation, just as we did for the interpretation of the universe type. This ensures that morphisms from \( \nabla \omega_1 \) must treat all sizes uniformly.

The reason for interpreting the type of sizes as $\omega_1$ is that it allows us to justify the following equivalence:
\begin{align*}
	\Pi{x : A} \. \exists i . B(x, i)&\simeq\exists i\. \Pi {x : A} \. \exists j < i . B(x, j). \tag{\(\exists\) parametric over \(\Pi\)}
\end{align*}
Intuitively, the left-to-right direction of this equivalence states the following: given a dependent function that sends a term \( a : A \) to a pair $\langle s, b \rangle : \exists i . B(a, i)$, there is a limit size $\alpha$ such that $s < \alpha$ for each $\brac{s, b}$ in the range of the dependent function. This requires limits of sequences of sizes to be a size as well. Note that, since $A$ here is a small type, it is interpreted as a partial equivalence relation over $\bN$, hence it has countably many equivalence classes. Since countable limits of countable ordinals are themselves countable, $\omega_1$ suffices for our purpose.

The ordering on sizes is interpreted as the ordering on \( \omega_1 \), interpreting \( s_1 \le s_2 \) as the terminal singleton assembly when true, and the initial empty assembly otherwise. This interpretation validates the encoding of the ordering on sizes as a proposition in the theory.

\subsection{Bounded quantification and the fixpoint operator}
Recall that the notation $i : \Size \vdash \forall j < i \. A(j)$ for bounded universal quantification is shorthand for $i : \Size \vdash \forall j \. (j < i \to A(j))$. A term of this type is interpreted by a family of dependent morphisms, where the domain is bounded by the size in context, which confirms the syntactic intuition for this type.

%

Given a term \( f : \forall i. (\forall j < i. A(j)) \to A(i) \), the fixpoint operator \( \fix \, f : \forall i . A(i) \) is interpreted as a morphism, where the underlying dependent function is defined by well-founded induction on \( \omega_1 \). The definition given by well-founded induction corresponds exactly to the \( \beta \)-rule in the theory. In order for \( \fix \, f \) to be a morphism, it must be tracked by a realiser. The following fact, which states that a fixpoint operator exists in every partial combinatory algebra, allows us to find an appropriate realiser for the fixpoint operator.
\begin{proposition}[Fixpoint operator in a pca \cite{oosten}]
    For every pca $(\cA, \cdot)$, there exists a term $\fix\! \in \cA$ such that the following hold for any \( f, a \in \cA \) we have $\fix\,f \downarrow$ and $\fix\,f \, a \simeq f \, (\fix\,f) \, a$.
\end{proposition}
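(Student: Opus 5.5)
The plan is to build $\fix$ as a Turing-style fixpoint combinator from the combinatory completeness of the pca. The one delicate point is that we need $\fix\,f\downarrow$ for \emph{every} $f$, not only that $\fix\,f\,a$ agrees with $f\,(\fix\,f)\,a$ whenever both sides are defined. So the self-application must be kept under an abstraction and never evaluated eagerly.

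First I recall combinatory completeness in the form with partial applications defined. For every term $t(x_1,\dots,x_n)$ built from variables, constants of $\cA$ and application, there is an element $\lambda^* x_1\dots x_n.\,t \in \cA$ with two properties. For all $a_1,\dots,a_{n-1}$ the partial application $(\lambda^* x_1\dots x_n.\,t)\,a_1\cdots a_{n-1}$ is defined. For all $a_1,\dots,a_n$ we have $(\lambda^* x_1\dots x_n.\,t)\,a_1\cdots a_n \simeq t(a_1,\dots,a_n)$. This is the standard bracket abstraction via $\k$ and $\s$, using $\k\,a\downarrow$, $\s\,a\,b\downarrow$, $\k\,a\,b = a$ and $\s\,a\,b\,c \simeq a\,c\,(b\,c)$. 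The definedness of the partial applications follows because every abstraction is, at the outermost level, of the form $\k\,a$, $\s\,a\,b$ or a primitive combinator. I will take this from \cite{oosten} rather than reprove it.

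Next I define
\[
W \coloneqq \lambda^* x\,f\,a.\; f\,(x\,x\,f)\,a, \qquad \fix \coloneqq W\,W .
\]
Here $W$ has three abstracted variables, so $W\,W$ is defined, being a partial application with fewer than three arguments. For the same reason $\fix\,f = W\,W\,f$ is defined for any $f$, which is the first required property. For the unfolding property I compute
\[
\fix\,f\,a \equiv W\,W\,f\,a \simeq f\,(W\,W\,f)\,a \equiv f\,(\fix\,f)\,a,
\]
using the defining property of $\lambda^*$ with all three arguments supplied. This is legitimate because $W\,W\,f$ is defined, so substituting it for the subterm $x\,x\,f$ in $t$ is sound.

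The main obstacle is the bookkeeping of Kleene equality $\simeq$ against definedness. A naive Curry combinator $\lambda^* f.\,(\lambda^* x.\,f\,(x\,x))(\lambda^* x.\,f\,(x\,x))$ would force evaluation of $x\,x$ and could diverge, so $\fix\,f$ would fail to be defined. The extra argument $a$ in $W$ is exactly what delays this evaluation. The proof therefore hinges on using the strong form of combinatory completeness in which partial applications of abstractions are always defined, and I would state that lemma explicitly before the computation above.
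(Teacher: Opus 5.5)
Your proof is correct and is essentially the standard argument in the cited source \cite{oosten}; the paper itself gives no proof and only cites that reference. Like van Oosten, you take $W$ abstracting three variables with body $f\,(x\,x\,f)\,a$, set $\fix \coloneqq W\,W$, and use the unoptimised bracket abstraction (all partial applications defined) to get both $\fix\,f\downarrow$ and the unfolding equation.
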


\subsection{Universal and existential type}
The assembly model supports the interpretation of general impredicative products. Although we do not add impredicative products to the theory in full generality, this does allow us to interpret the (impredicative) universal quantifier $\forall i . A(i)$ where $A : U$ in the model simply as a dependent product. The interpretation of sizes as $\nabla \omega_1$ then suffices to validate the equivalences for the universal quantifier.

To interpret the existential type \( \exists i.A(i) \), we construct a quotient of the corresponding \( \Sigma \)-type that lives inside the universe. Thus, we should canonically construct a family of PERs from the semantic $\Sigma$-type, \( \Sigma(\Size, A) \), which is a family of assemblies defined in \autoref{appendix:model}. The following lemma gives us a quotient construction resulting in a modest set.
\begin{lemma}
    The inclusion functor \( I : \Mod \to \Asm \) has a left adjoint \( \M : \Asm \to \Mod \).
\end{lemma}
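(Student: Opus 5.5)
The plan is to build $\M$ by quotienting the realisers of an assembly, using \autoref{lemma:per-mod-equiv} to land in modest sets. Given an assembly $(X,\Vdash_X)$, define a relation $\approx_X$ on $\bN$ as the least equivalence relation on $E_X \coloneqq \{n \mid \exists x\in X.\, n\Vdash_X x\}$ containing all pairs $(n,m)$ that realise a common element:
\[
n \mathrel{R_0} m \iff \exists x\in X.\,(n\Vdash_X x)\land(m\Vdash_X x).
\]
So $n\approx_X m$ holds when there is a finite zigzag $n=n_0\mathrel{R_0}n_1\mathrel{R_0}\cdots\mathrel{R_0}n_k=m$. This relation is symmetric and transitive, hence a PER. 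We set $\M(X)\coloneqq \iota(\approx_X)=(\bN/{\approx_X},\in)$, which is a modest set by \autoref{lemma:per-mod-equiv}.

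The unit $\eta_X : X\to I\M(X)$ sends $x$ to the class $[n]_{\approx_X}$ for any $n\Vdash_X x$. This is well defined because any two realisers of $x$ are $R_0$-related. It is tracked by the identity realiser, since $n\Vdash_X x$ implies $n\in[n]_{\approx_X}=\eta_X(x)$.

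Next comes the universal property. Let $f:X\to I(Y)$ be a morphism into a modest set $Y$, tracked by $e$. We define $\bar f:\M(X)\to Y$ by $\bar f([n])\coloneqq$ the unique $y$ with $e\cdot n\Vdash_Y y$. Well-definedness is the key step, and it is where modesty of $Y$ is used. If $n\mathrel{R_0}m$ via some $x$, then both $e\cdot n$ and $e\cdot m$ realise $f(x)$. Hence the element of $Y$ realised by $e\cdot n$ equals $f(x)$, and so does the one realised by $e\cdot m$. Induction along zigzags then shows that $\bar f$ is constant on $\approx_X$-classes. Uniqueness of the element $y$ realised by $e\cdot n$ also uses modesty of $Y$. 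The map $\bar f$ is tracked by $e$ again, and $\bar f\circ\eta_X=f$ holds by construction.

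For uniqueness of $\bar f$, note that every class in $\bN/{\approx_X}$ has the form $[n]$ with $n\in E_X$, so it lies in the image of $\eta_X$. Any $g$ with $g\circ\eta_X=f$ therefore agrees with $\bar f$. This gives the adjunction $\M\dashv I$, with $\M$ extended to morphisms via the universal property in the standard way.

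I expect the main obstacle to be the well-definedness of $\bar f$ on the transitive closure, together with making sure the factorisation does not depend on the chosen tracker $e$. The latter follows because $\bar f([n])$ is determined by $f$ itself: it equals $f(x)$ for any $x$ realised by $n$.
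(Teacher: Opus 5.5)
Your proof is correct and follows the same strategy as the paper. You quotient by the zigzag closure of ``sharing a realiser'', take the unit tracked by the identity, make the factorisation through a modest set well defined by using modesty of the target along zigzags, and get uniqueness from surjectivity of the unit. The only difference is that you quotient the realisers, producing the PER $\approx_X$ directly. The paper instead quotients the elements of the assembly, identifying $a_1,a_2$ when some $n$ realises both, and sets $n \Vdash [a]$ iff $n$ realises some member of $[a]$. The two modest sets are isomorphic, and your version has the small advantage of landing directly in $\Per$, which is what the $U$-truncation needs.
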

\begin{proof}[Proof Sketch]
    For an assembly $(A, \Vdash_A)$, we define a modest set $\M(A) = ( A {/ \!\sim_\M}, \Vdash_{\M(A)})$ where $\sim_\M$ is the least equivalence relation satisfying the following: $\exists n \in \bN \. (n \Vdash_A a_1, a_2) \implies (a_1 \sim_\M a_2)$.
The relation is defined by
\(
n \Vdash_{\M(A)} [ a ]_{\sim_\M} \Iff \exists a' \in [ a ]_{\sim_\M} (n \Vdash_A a').\)
\end{proof}
However, the universe consists of PERs rather than modest sets. Following \autoref{lemma:per-mod-equiv}, every modest set is isomorphic to a PER. Thus, we may treat \( \M(A) \) as the PER that is isomorphic to it. Given a type \( A \in \Ty(\Gamma) \), we can define its \(U\)-\emph{truncation} \( \Tr{A} \in \Tm(\Gamma, U) \) as the family of PERs \( \Tr{A}_\gamma = \M(A_\gamma) \). 
Given a small type \( A \in \Tm(\Gamma, U) \), this allows us to interpret the small type \( \exists i. A(i) \in \Tm(\Gamma, U)\) as the \(U\)-truncation of the \( \Sigma \)-type: 
\[
    \exists i. A(i) \coloneqq \Tr{\Sigma(\Size, \El A)}.
\] 

This also allows us to interpret the pairing operation and elimination of existential types. Moreover, it validates the equivalences we have axiomatised in the theory related to the existential quantifier in the model. Further details of the model are given in \autoref{appendix:model}.

    \section{Conclusion} 
    \label{sec:conclusion}
    \DeclareFontFamily{U}{matha}{\hyphenchar\font45}
\DeclareFontShape{U}{matha}{m}{n}{
    <5> <6> <7> <8> <9> <10> gen * matha
    <10.95> matha10 <12> <14.4> <17.28> <20.74> <24.88> matha12
}{}
\newcommand\bbl{\mathrel{\ooalign{\usefont{U}{matha}{m}{n}\symbol{\string"CE}}}}

We continue the development of sized types with parametric quantifiers over sizes.
In previous work \cite{licentiate, parametric-quantifier}, the type of sizes was interpreted in the semantics as the natural numbers.
This means that the following axiom for small types was restricted to the case where \(A\) is finite: \begin{align*}
    \Pi{x : A} \. \exists i . B(x, i)&\simeq\exists i\. \Pi {x : A} \. \exists j < i . B(x, j). \tag{\(\exists\) parametric over \(\Pi\)}
\end{align*}
Our main contribution is a model where the type of sizes is instead interpreted as a larger ordinal.
Indeed, for infinite \(A\), the left-to-right direction of the equivalence intuitively shows that limit ordinals exist.
This model allows us to remove the restriction that \(A\) is finite, which enables us to define all (co)inductive types, not just the finitely-branching ones.

Our model is based on realisability instead of the reflexive graphs used in previous work \cite{licentiate, parametric-quantifier}.
This semantics fits well with the intuition of parametric quantifiers: \(\exists i.\,A(i)\) behaves like the union of the \(A(i)\), while \(\forall i.\,A(i)\) behaves like the intersection of the \(A(i)\).
The model also shows that our syntax is consistent with two additional principles: an impredicative universe of sets, and uniqueness of identity proofs.
However, we do not use these properties in the syntax, which makes our construction of (co)inductive types more widely applicable and compatible with homotopy type theory.

\subparagraph*{Discussion and Future work.}

A main motivation for our work is obtaining a consistent implementation of sized types in proof assistants.
Encoding (co)inductive types using quantifiers would allow the closure ordinal \(\infty\) to be removed from \texttt{Agda}, which is a source of inconsistency of the current implementation.
However, proof assistants should satisfy desirable computational properties like decidability and canonicity, and this requires computational behaviour for our parametricity axioms.
So, to use our results for this purpose, we need to connect them with notions of internal parametricity that compute, such as those based on bridge types \cite{parametric-quantifier}, which have also been developed in combination with cubical type theory \cite{cavallo2021internal,parametricity-agda}.

In addition, we need a computational version of the $\beta$-rule for the fixpoint operator.
We have only assumed a propositional version of this rule because a na\"ive definitional version of this rule breaks normalisation due to unrestricted unfolding.
For example, it would be possible to disallow unfolding under binders, only allowing unfolding when a smaller size exists in context, or to ignore computation on sizes entirely.
However, each of these options come with drawbacks and it is not clear to us what the most principled approach is.

Because we include suprema sizes, it is not generally possible to keep inequality decidable.
In the current \(\sf{Agda}\) implementation, \(i<j\) is a judgement that is automatically checked when applying a function \(f : (i : \sf{Size}_{<j}) \to A\). However, without decidability of inequality we need to ask the user for a term of the type \(i<j\), which increases the amount of manual work.
One partial solution is to have two versions of inequality: a propositional type \(i<j\) and a definitional judgemental fragment \(i\bbl j\) (similar to the two versions of equality, \(=\) and \(\equiv\)).
Then we could add a term \(\sf{ineq}\) so that \(\sf{ineq}:i<j\) holds whenever \(i\bbl j\) (similar to the term \(\sf{refl}\) where \(\refl:a=_Aa'\) holds whenever \(a\equiv a'\)).
This would allow the user to fill in a simple term when the computer can decide the inequality, while still allowing more complicated terms when using suprema sizes.

On the semantic side, it is worth investigating whether we can generalise our realisability model.
We restricted the model to the pca of partial recursive functions, as it allowed us to interpret the type of sizes as the first uncountable ordinal.
In principle, it seems that the model could generalise to other pcas, as long as the type of sizes is interpreted as a sufficiently large ordinal: one of larger cardinality than the pca. The realizability model could also be constructed using more general notions of computation, such as the monadic combinatory algebras introduced in \cite{Cohen-et-al-2025}, to allow for an integration of sizes with effects.

Another question is whether we can model both sizes and the univalence axiom. A possibly suitable model for this is given by cubical assemblies \cite{cubical-assemblies}, which form a model of a type theory with a univalent and impredicative universe. 
It seems likely that this construction can be extended to include sizes and parametric quantifiers in a similar way as in this work, but a more careful investigation is needed to confirm this.

We would also like to investigate how realisability models can model internal parametric quantifiers and related modalities such as irrelevance in dependent type theory more generally.
The inclusion from sets to assemblies assigns a set the complete realisability relation, which intuitively makes elements computationally indistinguishable and enables parametricity for sizes.
It seems reasonable that this construction would work for a more general notion of parametric quantifier in the theory.

    \bibliography{references}
    
    \appendix
    \section{The type theory}
\label{appendix:type-theory}
The symbol $\equiv$ denotes definitional equality in the theory for both types and terms. We make use of the following forms of judgements:
\begin{align*}
    \vdash \Gamma \; \Ctxt, \qquad \Gamma \vdash A \Type, \qquad \Gamma \vdash A \equiv B \Type, \qquad \Gamma \vdash a : A, \qquad \Gamma \vdash a \equiv b : A.
\end{align*}
For brevity, we omit the standard rules stating that definitional equality is a congruence relation with respect to all type- and term formers. We may also assume weakening and substitution rules with respect to all judgements, which are both admissible \cite{Hofmann_1997}.

The rules for context formation are given as follows:
    \begin{center}
        \begin{bprooftree}
            \AxiomC{$\phantom{\Gamma, x : A \vdash M : B}$}
            \UnaryInfC{$\vdash \diamond \; \Ctxt$}
        \end{bprooftree}\quad
        \begin{bprooftree}
            \AxiomC{$\vdash \Gamma \; \Ctxt$}
            \AxiomC{$\Gamma \vdash A \Type$}
            \BinaryInfC{$\vdash \Gamma, x : A \; \Ctxt$}
        \end{bprooftree}
    \end{center}
    \begin{center}
        \begin{bprooftree}
            \AxiomC{$\vdash \Gamma, x : A, \Delta \; \Ctxt$}
            \UnaryInfC{$\Gamma, x:A , \Delta \vdash x:A$}
        \end{bprooftree}\quad
        \begin{bprooftree}
            \AxiomC{$\Gamma, x : A\vdash b(x):B(x)$}
            \AxiomC{$\Gamma\vdash a:A$}
            \BinaryInfC{$\Gamma\vdash b(a):B(a)$}
        \end{bprooftree}
    \end{center}
\vspace{10pt}

The rules for the empty-, unit-, and boolean type are as follows, where their elimination and computation principles are left implicit.
\begin{center}
    \begin{bprooftree}
        \AxiomC{$\phantom{\Gamma \vdash M : \Pi x : A.\,B}$}
        \UnaryInfC{$\Gamma \vdash \bot \Type$}
    \end{bprooftree}\quad
    \begin{bprooftree}
        \AxiomC{$\phantom{\Gamma \vdash M : \Pi x : A.\,B}$}
        \UnaryInfC{$\Gamma \vdash \top \Type$}
    \end{bprooftree}\quad
    \begin{bprooftree}
        \AxiomC{$\phantom{\Gamma \vdash M : \Pi x : A.\,B}$}
        \UnaryInfC{$\Gamma \vdash \mathsf{Bool} \Type$}
    \end{bprooftree}\quad
\end{center}
\begin{center}
    \begin{bprooftree}
        \AxiomC{\phantom{$\Gamma, x : A \vdash b : B$}}
        \UnaryInfC{$\Gamma \vdash \star : \top$}
    \end{bprooftree}\quad
    \begin{bprooftree}
        \AxiomC{\phantom{$\Gamma, x : A \vdash b : B$}}
        \UnaryInfC{$\Gamma \vdash \mathsf{tt} : \mathsf{Bool}$}
    \end{bprooftree}\quad
    \begin{bprooftree}
        \AxiomC{\phantom{$\Gamma, x : A \vdash b : B$}}
        \UnaryInfC{$\Gamma \vdash \mathsf{ff} : \mathsf{Bool}$}
    \end{bprooftree}\quad
\end{center}
\vspace{10pt}

The rules for $\Pi$-types include definitional $\beta$- and $\eta$-rules.
\begin{center}
    \begin{bprooftree}
        \AxiomC{$\Gamma \vdash A \Type$}
        \AxiomC{$\Gamma, x : A \vdash B(x) \Type$}
        \BinaryInfC{$\Gamma \vdash \Pi x : A.\,B(x) \Type$}
    \end{bprooftree}\quad
\end{center}
\begin{center}
    \begin{bprooftree}
        \AxiomC{$\Gamma, x : A \vdash b(x) : B(x)$}
        \UnaryInfC{$\Gamma \vdash \lambda x \: A \. b(x): \Pi x : A.\,B(x)$}
    \end{bprooftree}\quad
    \begin{bprooftree}
        \AxiomC{$\Gamma \vdash f : \Pi x : A.\,B(x)$}
        \AxiomC{$\Gamma \vdash a : A$}
        \BinaryInfC{$\Gamma \vdash f \,a : B(a)$}
    \end{bprooftree}
\end{center}
\begin{center}
    \begin{bprooftree}
        \AxiomC{$\Gamma, x : A \vdash b(x) : B(x)$}
        \AxiomC{$\Gamma \vdash a : A$}
        \BinaryInfC{$\Gamma \vdash (\lambda x \: A \. b(x)) \, a \equiv b(a): B(a)$}
    \end{bprooftree}\quad
    \begin{bprooftree}
        \AxiomC{$\Gamma \vdash f : \Pi x : A.\,B(x)$}
        \UnaryInfC{$\Gamma \vdash (\lambda x \: A \. f \, x) \equiv f : \Pi x : A.\,B(x)$}
    \end{bprooftree}
\end{center}
\vspace{10pt}

Similarly to $\Pi$-types, $\Sigma$-types also come with $\beta$- and $\eta$-rules. The $\eta$-rule for $\Sigma$-types is defined via its destructors, where $\pi_1$ and $\pi_2$ are the projections derived from the elimination principle. 
\begin{center}
    \begin{bprooftree}
        \AxiomC{$\Gamma \vdash A \Type$}
        \AxiomC{$\Gamma, x : A \vdash B(x) \Type$}
        \BinaryInfC{$\Gamma \vdash \Sigma x:A.\,B(x) \Type$}
    \end{bprooftree}\quad
    \begin{bprooftree}
        \AxiomC{$\Gamma \vdash a : A$}
        \AxiomC{$\Gamma \vdash b : B(a)$}
        \BinaryInfC{$\Gamma \vdash \< a, b \> : \Sigma x:A.\,B(x)$}
    \end{bprooftree}
    \end{center}
    \begin{center}
        \begin{bprooftree}
            \AxiomC{$\Gamma, z : \Sigma x:A.\,B(x) \vdash P(z)\Type$}
            \noLine
            \UnaryInfC{$\Gamma, x : A, y : B \vdash p(x,y) : P(\< x, y \>)$}
            \UnaryInfC{$\Gamma, z : \Sigma x:A.\,B(x) \vdash \ind_\Sigma(z, p(x,y)) : P(z)$}
        \end{bprooftree}
    \end{center}
    \begin{center}
        \begin{bprooftree}
            \AxiomC{$\Gamma, z : \Sigma x:A.\,B(x) \vdash P(z)\Type$}
            \noLine
            \UnaryInfC{$\Gamma, x : A, y : B \vdash p : P(\< x, y \>)$}
            \UnaryInfC{$\Gamma, x : A, y : B \vdash \ind_\Sigma(\< x, y \>, p) \equiv p(x,y) : P(\< x, y \>)$}
        \end{bprooftree}
    \end{center}
    \begin{center}
        \begin{bprooftree}
            \AxiomC{$\Gamma \vdash c : \Sigma x:A.\,B(x)$}
            \UnaryInfC{$\Gamma \vdash c \equiv \< \pi_1(c), \pi_2(c) \> : \Sigma x:A.\,B(x)$}
        \end{bprooftree}
    \end{center}
\vspace{10pt}

Propositional identity types are defined with the standard induction principle.
    \begin{center}
        \begin{bprooftree}
            \AxiomC{$\Gamma \vdash A \Type$}
            \AxiomC{$\Gamma \vdash a_1, a_2 : A$}
            \BinaryInfC{$\Gamma \vdash a_1 =_A a_2 \Type$}
        \end{bprooftree}\quad
        \begin{bprooftree}
            \AxiomC{$\Gamma \vdash A \Type$}
            \AxiomC{$\Gamma \vdash a : A$}
            \BinaryInfC{$\Gamma \vdash \refl \, a : a =_A a$}
        \end{bprooftree}\quad
    \end{center}
    \begin{center}
        \begin{bprooftree}
            \AxiomC{$\Gamma , x : A, y: A, z : x =_A y \vdash P(x, y, z) \Type$}
            \noLine
            \UnaryInfC{$\Gamma , x : A \vdash p(x) : P(x, x, \refl \, x)$}
            \UnaryInfC{$\Gamma, x:A, y:A, z: x=_A y \vdash \ind_=(x,y, z, p) : P(x, y, z)$}
        \end{bprooftree}
    \end{center}
    \begin{center}
        \begin{bprooftree}
            \AxiomC{$\Gamma , x : A, y : A, z : x =_A y \vdash P(x, y, z) \Type$}
            \noLine
            \UnaryInfC{$\Gamma , x : A \vdash p(x) : P(x, x, \refl \, x)$}
            \UnaryInfC{$\Gamma, x:A \vdash \ind_=(x, x, \refl \, x, p) \equiv p(x) : P(x, x, \refl \, x)$}
        \end{bprooftree}
    \end{center}
\vspace{10pt}

The universe type is closed under the base types, propositional equality, and the type constructors. The definitional type equalities for the small types ensure that they lift to the proper type.
    \begin{center}
        \begin{bprooftree}
        \AxiomC{\phantom{$\Gamma \vdash u : \U$}}
        \UnaryInfC{$\Gamma \vdash \U \Type$}
        \end{bprooftree}\quad
        \begin{bprooftree}
        \AxiomC{$\Gamma \vdash A : \U$}
        \UnaryInfC{$\Gamma \vdash \El A \, \Type$}
        \end{bprooftree}
    \end{center}
    \begin{center}
        \begin{bprooftree}
            \AxiomC{$\phantom{\Gamma \vdash}$}
            \UnaryInfC{$\Gamma \vdash \bot^\U : \U$}
        \end{bprooftree}\quad
        \begin{bprooftree}
            \AxiomC{$\phantom{\Gamma \vdash}$}
            \UnaryInfC{$\Gamma \vdash \top^\U : \U$}
        \end{bprooftree}\quad
        \begin{bprooftree}
            \AxiomC{$\phantom{\Gamma \vdash}$}
            \UnaryInfC{$\Gamma \vdash \mathsf{Bool}^\U : \U$}
        \end{bprooftree}
    \end{center}
    \begin{center}
        \begin{bprooftree}
            \AxiomC{$\Gamma \vdash A : U$}
            \AxiomC{$\Gamma, x : A \vdash B(x) : \U$}
            \BinaryInfC{$\Gamma \vdash \Pi^U x:A.\, B(x): \U$}
        \end{bprooftree}
    \end{center}
    \begin{center}
        \begin{bprooftree}
            \AxiomC{$\Gamma \vdash A : \U$}
            \AxiomC{$\Gamma, x : \El A \vdash B(x) : \U$}
            \BinaryInfC{$\Gamma \vdash \Sigma^\U x:A.\, B(x) : \U$}
        \end{bprooftree}
    \end{center}
    \begin{center}
        \begin{bprooftree}
            \AxiomC{$\Gamma \vdash A : \U$}
            \AxiomC{$\Gamma\vdash a : \El A$}
            \AxiomC{$\Gamma\vdash a' : \El A$}
            \TrinaryInfC{$\Gamma \vdash a=^U_Aa' : \U$}
        \end{bprooftree}
    \end{center}
    \begin{center}
    \begin{align*}
        \El (\bot^\U) &\equiv \bot & \El (\Sigma^U x:A.\, B(x)) &\equiv \Sigma_{x:\El A} (\El B(x)) \\
        \El (\top^\U) &\equiv \top & \El (\Pi^U x:A.\, B(x)) &\equiv \Pi_{x:\El A} (\El B(x)) \\
        \El (\mathsf{Bool}^\U) &\equiv \mathsf{Bool} & \El(a=^U_Aa')&\equiv(a=_{\El A}a')
    \end{align*}
    \end{center}
\vspace{10pt}

The type of sizes is defined with constructors for zero and successor. The ordering on sizes is defined as a mere proposition, with constructors for reflexivity and transitivity.
\begin{center}
    \begin{bprooftree}
        \AxiomC{\phantom{$\Gamma \vdash s_1, s_2 : \Size$}}
        \UnaryInfC{$\Gamma \vdash \Size \Type$}
    \end{bprooftree}\quad
    \begin{bprooftree}
        \AxiomC{$\Gamma \vdash  i, j : \Size$}
        \UnaryInfC{$\Gamma \vdash \; i \leq j : U$}
    \end{bprooftree}\quad
    \begin{bprooftree}
        \AxiomC{\phantom{$\Gamma \vdash s_1, s_2 : \Size$}}
        \UnaryInfC{$\Gamma \vdash 0 : \Size$}
    \end{bprooftree}\quad
    \begin{bprooftree}
        \AxiomC{$\Gamma \vdash i  : \Size$}
        \UnaryInfC{$\Gamma \vdash {\uparrow \! i} : \Size$}
    \end{bprooftree}
\end{center}

\begin{center}
    \begin{bprooftree}
        \AxiomC{$\Gamma\vdash i:\Size$}
        \AxiomC{$\Gamma\vdash j:\Size$}
        \BinaryInfC{$\Gamma\vdash i\le j\Type$}
    \end{bprooftree}
    \begin{bprooftree}
        \AxiomC{$\Gamma \vdash i : \Size$}
        \UnaryInfC{$\Gamma \vdash{\sf{le}_0}\,i:\El(0 \le i)$}
    \end{bprooftree}\quad
    \begin{bprooftree}
        \AxiomC{$\Gamma \vdash i : \Size$}
        \UnaryInfC{$\Gamma \vdash{\sf{le}_{\sf{suc}}}\,i:\El(i \le {\uparrow}i) $}
    \end{bprooftree} \\[2ex]
    \begin{bprooftree}
        \AxiomC{$\Gamma \vdash i : \Size$}
        \UnaryInfC{$\Gamma \vdash\sf{le}_{\sf{refl}}\,i:\El(i \le i)$}
    \end{bprooftree}\quad
    \begin{bprooftree}
        \AxiomC{$\Gamma \vdash p:i\le j$}
        \AxiomC{$\Gamma \vdash q:j\le k$}
        \BinaryInfC{$\Gamma \vdash\sf{le}_{\sf{trans}}\,p\,q:\El(i \le k)$}
    \end{bprooftree} \\[2ex]
    \begin{bprooftree}
        \AxiomC{$\Gamma \vdash p:i\le j$}
        \AxiomC{$\Gamma \vdash q:i\le j$}
        \BinaryInfC{$\Gamma \vdash\sf{le}_{\sf{eq}}\,p\,q :\El(p=q)$}
    \end{bprooftree}
\end{center}
\vspace{10pt}

The type of the existential quantifier over sizes is defined as follows. Note that the elimination principle is limited to small types.
    \begin{center}
        \begin{bprooftree}
            \AxiomC{$\Gamma, i : \Size \vdash A(i) : \U$}
            \UnaryInfC{$\Gamma \vdash \exists i . A(i) : \U $}
        \end{bprooftree}
    \end{center}
    \begin{center}
        \begin{bprooftree}
            \AxiomC{$\Gamma, i : \Size \vdash A(i) : \U$}
            \AxiomC{$\Gamma \vdash s : \Size$}
            \AxiomC{$\Gamma \vdash a : \El A(i)$}
            \TrinaryInfC{$\Gamma \vdash \langle s, a \rangle^\exists : \El (\exists i . A(i))$}
        \end{bprooftree}
    \end{center}
    \begin{center}
        \begin{bprooftree}
            \AxiomC{$\Gamma, z : \El (\exists i. A(i)) \vdash P(z) : \U$}
            \noLine
            \UnaryInfC{$\Gamma, i : \Size, x : \El A \vdash p(i,x) : \El P(\langle i, x \rangle^\exists)$}
            \UnaryInfC{$\Gamma, z : \El(\exists i.A(i))\vdash \ind_\exists(z,p(i,x)) : \El P(z)$}
        \end{bprooftree}
    \end{center}
    \begin{center}
        \begin{bprooftree}
            \AxiomC{$\Gamma, z : \El (\exists i. A(i)) \vdash P(z) : \U$}
            \noLine
            \UnaryInfC{$\Gamma, i : \Size, x : \El A \vdash p : \El P(\langle i, x \rangle^\exists)$}
            \noLine
            \UnaryInfC{$\Gamma\vdash s:\Size$\qquad$\Gamma\vdash a:\El A(s)$}
            \UnaryInfC{$\Gamma\vdash \ind_\exists(\langle s, a \rangle^\exists, p(i,x)) \equiv p(s,a) : \El P(\langle s,a \rangle^\exists)$}
        \end{bprooftree}
    \end{center}
\vspace{10pt}

The universal quantifier is defined analogously to the $\Pi$-type, only as an impredicative quantifier.
\begin{center}
    \begin{bprooftree}
        \AxiomC{$\Gamma, i : \Size \vdash A(i) :U$}
        \UnaryInfC{$\Gamma \vdash \forall i . A(i) :U$}
    \end{bprooftree}\quad
\end{center}
\begin{center}
    \begin{bprooftree}
        \AxiomC{$\Gamma, i : \Size \vdash a(i) : A(i)$}
        \UnaryInfC{$\Gamma \vdash \lambda^\forall i : \Size \. a(i): \El (\forall i . A(i)) $}
    \end{bprooftree}\quad
    \begin{bprooftree}
        \AxiomC{$\Gamma \vdash f : \El(\forall i . A(i))$}
        \AxiomC{$\Gamma \vdash s : \Size$}
        \BinaryInfC{$\Gamma \vdash f \, s : \El A(s)$}
    \end{bprooftree}
\end{center}
\begin{center}
    \begin{bprooftree}
        \AxiomC{$\Gamma, i : \Size \vdash a(i) : \El A(i)$}
        \AxiomC{$\Gamma \vdash s : \Size$}
        \BinaryInfC{$\Gamma \vdash (\lambda^\forall i \: \Size \. a(i)) \, s \equiv a(s): \El A(s)$}
    \end{bprooftree}\quad
    \begin{bprooftree}
        \AxiomC{$\Gamma \vdash f : \El(\forall i . A(i))$}
        \UnaryInfC{$\Gamma \vdash (\lambda^\forall i \: \Size \. f \, i) \equiv f : \El (\forall i . A(i))$}
    \end{bprooftree}
\end{center}
\vspace{10pt}

Finally, the fixpoint operator on sizes, corresponding to well-founded induction, is defined as follows. Note that the associated $\beta$-rule is propositional.
    \begin{center}
        \begin{bprooftree}
            \AxiomC{$\Gamma, i : \Size \vdash A(i) \Type$}
            \AxiomC{$\Gamma \vdash f : \forall i . ((\forall j < i. A(j)) \to A(i))$}
            \BinaryInfC{$\Gamma \vdash \fix f : \forall i. A(i)$}
        \end{bprooftree}
    \end{center}
    \begin{center}
    \begin{bprooftree}
            \AxiomC{$f : \forall i . ((\forall j < i. A(j)) \to A(i))$}
            \UnaryInfC{$\Gamma \vdash \fix^\beta \, f : \forall i \. \fix f \, i = f \, i \, (\lambda j < i\. \fix f  \, j)$}
        \end{bprooftree}
    \end{center}

Function extensionality is encoded as in \cite{hottbook}: given $f, g : \Pi x : A.\,B(x)$, there is a map \[
    (f = g) \to \Pi x : A.\, (f(x) = g(x)),
\] and the function extensionality axiom states that this map is an equivalence.
Similarly, we note that for small types \(A\), \(B(i,x)\), and \(C(i)\) there are canonical maps, for which the parametricity axioms state that each such map is an equivalence: 
\begin{align*}
    \Pi x : A.\,\exists i . B(x, i)&\leftarrow\exists i \. \Pi x : A. \, \exists j < i .B(x, j), \\
    \Sigma x:A.\,\forall i . B(x, i) &\to \forall i \. \Sigma x : A.\,B(i, x), \\
    \exists i \.  C(i)&\leftarrow \exists i \. \exists j < i \.C(j), \\
    \forall i \.  C(i) &\to \forall i \. \forall j < i \.C(j).
\end{align*} 

\newpage

\section{Encoding}
\label{appendix:encoding}
\subsection{Inductive types}

\begin{proof}[Proof of \autoref{lemma:inductive-sized-encoding}]
    The propositional \( \beta \)-rule for the fixpoint operator gives us the following equality:
    \begin{align*}
        \mu^\Diamond F \, i &\coloneqq \fix (\lambda i \. \lambda X\. F(\Diamond_i X)) \, i \\
        &= (\lambda i \. \lambda X\. F(\Diamond_i X)) \, i \, (\lambda r < s \. \fix (\dots) \, r) \\
        &\equiv F(\Diamond_s \, \fix (\lambda i \. \lambda X\. F(\Diamond_i X)) ) \\
        & =: F [\Diamond (\mu^\Diamond F)] \, i
    .\end{align*}
    The structure map \( \In^\Diamond : F[\Diamond (\mu^\Diamond F)] \sto \mu^\Diamond F \) is then given by the pointwise transport operation on the path given by $\fix^\beta f \, s$. This map is invertible, and we denote its inverse by $\out^\Diamond$. Given an algebra \( (A, k_A) \), the term \( \fold^\Diamond \, A \, k_A : \mu^\Diamond F \sto A\) is also defined using the fixpoint operator as follows:
    \begin{align*}
        \fold^\Diamond \, A \, k_A  &\coloneqq \fix \, (\lambda i \. \lambda \fold^{\Diamond'} \. \lambda m \. (k_A \circ F[\Diamond \fold^{\Diamond'}] \circ \mathsf{out}^\Diamond) \, i \, m).
    \end{align*}
    Thus, for $i : \Size$ and $m : \mu^\Diamond F \, i$, by $\fix^\beta$ we have
    \[
    (\fold^\Diamond \, A \, k_A) \, i \, m = (k_A \circ F[\Diamond \fold^{\Diamond}] \circ \mathsf{out}^\Diamond) \, i \, m.
    \]
    We obtain a path $\fold^\Diamond \, A \, k_A \circ \In^\Diamond = k_A \circ F[\Diamond (\fold^\Diamond \, A \, k_A)]$ using $\fix^\beta f$ and the canonical proof that $\out^\Diamond$ is the inverse of $\In^\Diamond$. 
    To show that the following is contractible: \[\Sigma{(h : \mu^\Diamond F \to A)} \. (h \circ \In^\Diamond = k_A \circ F[\Diamond h])\] is equivalent to show that the following is contractible: \[\Sigma{(h : \mu^\Diamond F \to A)} \. (h = k_A \circ F[\Diamond h] \circ \out^\Diamond).\] 
    By function extensionality, this is then equivalent to showing that the following type is contractible: \[\Sigma{(h : \mu F \to A)} \, \forall i \. \Pi{(m : \mu^\Diamond F \, i)} \. (h \, i \, m = (k_A \circ Fh \circ \out^\Diamond) \, i \, m).\]
    Since $\fold^\Diamond \, A \, k_A$ is defined via the fixpoint operator, this follows from \autoref{lemma:unique-fix}.
\end{proof}

\subsection{Coinductive types}
\begin{definition}[Coalgebra]
    Given an endofunctor \( F : U \to U \), an \emph{F-coalgebra} is a pair \( (C, k_C) \), consisting of a type \( C : U \) and a structure map \( k_C : C \to FC \). A map of coalgebras \( (h, s_h) : (C, k_C) \to (D, k_D) \) is a pair of a map \( h : C \to D \) and a path \( s_h : k_D \circ h = Fh \circ k_C\). We write $\FCoalg$ for the type of $F$-algebras.

A \emph{terminal $F$-coalgebra} is an coalgebra \( (\nu F, \out) \), such that for any coalgebra \( (C, k_C) \), the following type is contractible: 
\[\Sigma{(h : C \to \nu F)} \, (\out \circ h = Fh \circ k_C).\]
\end{definition}
We define \emph{size-indexed \( F \)-coalgebras} as coalgebras for \( F[\Box -] \). Explicitly, a size-indexed \( F \)-coalgebra is a pair \( (C, k_C) \) where \( C : \Size \to U \) and \( k_C : C \sto F[\Box C] \).

To encode size-indexed coinductive types, we construct terminal coalgebras for size-indexed functors. This is done in the following lemma, where we make use of the fixpoint operator:
\begin{lemma}\label{lemma:coinductive-sized-encoding}
    Let \( F \) be a functor on small types. The size-indexed functor \( F[\Box -] \) has a terminal coalgebra \( (\nu^\Box F, \out^\Box) \), where the first component is given by:
\[
    \nu^\Box F \, i \coloneqq  \fix (\lambda i. \lambda  X. F(\Box_i X)) \, i.
\] 
\end{lemma}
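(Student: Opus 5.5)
We dualise the proof of \autoref{lemma:inductive-sized-encoding}. Write \(g \coloneqq \lambda i.\,\lambda X.\,F(\Box_i X)\), so \(\nu^\Box F \coloneqq \fix g\). The family here lives in \(U\), so \(\fix\) applies to the large type family \(i \mapsto U\).

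\emph{Step 1: the structure map.} The propositional \(\beta\)-rule \(\fix^\beta\,g\,i\) gives a path
\[
\nu^\Box F\,i = g\,i\,(\lambda j<i.\,\fix g\,j) \equiv F(\forall j<i.\,\nu^\Box F\,j) \equiv F[\Box(\nu^\Box F)]\,i .
\]
We define \(\out^\Box\,i\) as transport along this path. Its inverse \(\In^\Box\,i\) is transport along the reversed path, and the standard lemmas on transport give the two homotopies \(\In^\Box\circ\out^\Box = \id\) and \(\out^\Box\circ\In^\Box=\id\) pointwise.

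\emph{Step 2: the unfold map.} Given a size-indexed coalgebra \((C,k_C)\), define
\[
\unfold^\Box\,C\,k_C \coloneqq \fix\,(\lambda i.\,\lambda u'.\,\lambda c.\,\In^\Box\,i\,(F[\Box u']\,i\,(k_C\,i\,c))),
\]
with the family \(A(i)\coloneqq C\,i\to\nu^\Box F\,i\). One point needs checking: \(F[\Box u']\,i\) only requires \(u'\) on sizes \(j<i\), since \(\Box_{C,\nu}\,u'\,i\,h \coloneqq \lambda j<i.\,u'\,j\,(h\,j)\). So it is legitimate for \(u'\) to have the type \(\forall j<i.\,A(j)\) rather than a full morphism. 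We then obtain the coalgebra-morphism square \(\out^\Box\circ \unfold^\Box = F[\Box\,\unfold^\Box]\circ k_C\) in two moves. First, \(\fix^\beta\) unfolds \(\unfold^\Box\,i\,c\) to \(\In^\Box\,i\,(F[\Box(\lambda j<i.\,\unfold^\Box\,j)]\,i\,(k_C\,i\,c))\). Second, we cancel \(\out^\Box\circ\In^\Box\) and apply \(\mathsf{funext}\).

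\emph{Step 3: uniqueness (the main step).} We must show that
\[
\Sigma(h:C\sto\nu^\Box F).\,(\out^\Box\circ h = F[\Box h]\circ k_C)
\]
is contractible. Since \(\out^\Box\) is an equivalence, post-composing with \(\In^\Box\) turns this into an equivalent type \(\Sigma h.\,(h = \In^\Box\circ F[\Box h]\circ k_C)\). By function extensionality, for both \(\Pi\) and \(\forall\), the latter is equivalent to
\[
\Sigma h.\,\forall i.\,\bigl(h\,i = \lambda c.\,\In^\Box\,i\,(F[\Box(\lambda j<i.\,h\,j)]\,i\,(k_C\,i\,c))\bigr).
\]
This is exactly the fixpoint-uniqueness type of \autoref{lemma:unique-fix} for the functional in Step 2, hence contractible.

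The delicate points all sit in this step. The first is making sure each rewriting is a genuine equivalence of \(\Sigma\)-types: the map \(p\mapsto \mathsf{ap}_{\In^\Box\circ-}(p)\) composed with the cancellation path is an equivalence because \(\In^\Box\circ-\) is one. The second is checking that the right-hand side depends on \(h\) only through its restriction \(\lambda j<i.\,h\,j\), so that it matches the form \(f\,i\,(\lambda j<i.\,h\,j)\) required by \autoref{lemma:unique-fix}. This holds definitionally by the definition of \(\Box_{A,B}\) on morphisms, which is where the argument hinges. I would handle the bookkeeping of these equivalences as in the inductive case, avoiding UIP.
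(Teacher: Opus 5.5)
Your proposal is correct and follows the paper's own proof. As there, you define $\out^\Box$ by transport along $\fix^\beta$, define $\unfold^\Box$ via $\fix$ dually to $\fold^\Diamond$, and reduce terminality to \autoref{lemma:unique-fix} using $\In^\Box$ and function extensionality; your explicit check that $F[\Box h]\,i$ depends on $h$ only through $\lambda j<i.\,h\,j$ is a detail the paper's sketch leaves implicit.
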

\begin{proof}[Proof Sketch]
    We note that the propositional \( \beta \)-rule for the fixpoint operator gives us the following equality:
    \begin{align*}
        \nu^\Box F \, i &\coloneqq \fix (\lambda i \. \lambda X\. F(\Box_i X)) \, i \\
        &= (\lambda i \. \lambda X\. F(\Box_i X)) \, i \, (\lambda r < s \. \fix (\dots) \, r) \\
        &\equiv F(\Box_i \, \fix (\lambda s \. \lambda X\. F(\Box_s X)) ) \\
        & =: F [\Box (\nu^\Box F)] \, i
    .\end{align*}
    The structure map \( \out : \nu^\Box F \sto F[\Box (\nu^\Box F)] \) is then given by the pointwise transport operation. Given a coalgebra \( (C, k_C) \), the term \( \unfold^\Box \, C \, k_C : C \sto \nu^\Box F \) is also defined using the fixpoint operator in the same way as in the algebra case, and can be shown to be an coalgebra morphism. The fact that it is a terminal coalgebra then follows from the contractibility of \( \fix \!\) described in \autoref{lemma:unique-fix}, together with function extensionality.
\end{proof}

As for algebras, we can lift a coalgebra to a size-indexed coalgebra. We reuse the notation $T$ from algebras.
\begin{lemma}
    Every \( F \)-coalgebra \( (C, k_C) \) gives rise to an \( F[\Box{-}] \)-coalgebra \( (TC, k_{TC}) \), where the first component is defined by 
    \[
        TC \coloneqq \lambda i. C : \Size \to U.
    \] 
This gives rise to a functorial map \( T : \FCoalg \to \sFCoalg \). 
\end{lemma}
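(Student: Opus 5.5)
We mirror the algebra case, replacing $\Diamond$ and $\mathsf{extract}$ by $\Box$ and a dual `constant' map. For a small type $C$ with $i \notin \mathsf{fv}(C)$, define
\begin{align*}
    &\mathsf{const} : \Pi C : U \. TC \sto \Box TC, &
    &\mathsf{const}_C \, i \, c \coloneqq \lambda j < i \. c .
\end{align*}
This is well-typed because $TC \, j \equiv C$ for every $j$. Given an $F$-coalgebra $(C, k_C)$, we set
\[
    k_{TC} \, i \coloneqq F(\mathsf{const}_C \, i) \circ k_C ,
\]
which has type $C \to F(\Box_i TC) \equiv F[\Box TC] \, i$, as required. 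Here we use only the functorial action of $F$ on the map $\mathsf{const}_C \, i : C \to \forall j < i \. C$.

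For functoriality, let $(h, s_h) : (C, k_C) \to (D, k_D)$ be a coalgebra morphism. We put $Th \coloneqq \lambda i \. h : TC \sto TD$ and must supply a path
\[
    k_{TD} \circ Th = F[\Box (Th)] \circ k_{TC}
\]
in $TC \sto F[\Box TD]$. By function extensionality, for $\forall$ as well as for $\Pi$, it suffices to give such a path pointwise at each $i$. At a fixed $i$ we build it as a concatenation. First we rewrite $F(\mathsf{const}_D \, i) \circ k_D \circ h$ along $s_h$ to $F(\mathsf{const}_D \, i) \circ F h \circ k_C$. Then we use the composition law $\phi_{h, \mathsf{const}_D i}$ of $F$ to turn this into $F(\mathsf{const}_D \, i \circ h) \circ k_C$. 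Next we use naturality of $\mathsf{const}$, namely
\[
    \mathsf{const}_D \, i \circ h = \Box_{TC, TD}(Th) \, i \circ \mathsf{const}_C \, i ,
\]
which holds definitionally since both sides are $c \mapsto \lambda j < i \. h\, c$. This gives $F(\Box_{TC, TD}(Th) \, i \circ \mathsf{const}_C \, i) \circ k_C$. Finally, $\phi$ in the other direction yields $F[\Box (Th)] \, i \circ k_{TC} \, i$.

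Preservation of identities and composition is then checked at the level of the underlying maps, where it holds definitionally: $T\,\id = \lambda i \. \id$ and $T(g \circ h) = \lambda i \. g \circ h$. The associated paths follow from the functor laws $\phi$ of $F$. As in the wild-categorical convention of the Remark, we do not verify higher coherences.

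The main obstacle is the bookkeeping in the functoriality step: transporting along $s_h$ and combining it with the laws $\phi$ of $F$ inside $\forall i$. This is routine but needs care, because $\Box_i TC$ involves the propositional inequality argument. We sidestep that issue by noting that $\mathsf{const}$ ignores the proof of $j < i$, so the required naturality square commutes definitionally and no reasoning about those proofs is needed.
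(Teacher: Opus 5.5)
Your proposal is correct and follows the paper's proof. Your $\mathsf{const}$ is exactly the paper's $\mathsf{insert}_C\,i\,c \coloneqq \lambda j<i.\,c$, with the same structure map $k_{TC}\,i \coloneqq F(\mathsf{insert}_C\,i)\circ k_C$ and the same $Th \coloneqq \lambda i.\,h$; where the paper only asserts that naturality of $\mathsf{insert}$ makes $Th$ a coalgebra morphism, you spell out that pointwise path via $s_h$, the laws $\phi$ of $F$, and the definitional naturality square.
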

\begin{proof}[Proof Sketch]
    We require a structure map \( k_{TC} : TC \sto F[\Box TC]\). Noting that \( i \notin \text{fv}(C) \), we can define the following auxiliary term:
    \begin{align*}
        \mathsf{insert}& : \Pi{C:\U} \. TC \sto \Box TC, \\
        \mathsf{insert}&_C \, i \, c \coloneq  \lambda j < i \. c
    \end{align*}
    This allows us to define $k_{TC} \, i \coloneq  F(\mathsf{insert}_C \, i) \circ k_C$, which is well-typed.
    Given a coalgebra morphism $f : (C, k_C) \to (D, k_D)$, we define $Tf : ({TC}, k_{TC}) \to ({TD}, k_{TD})$ by
    \begin{align*}
        Tf &: TC \sto TD, \\
        Tf & \coloneq  \lambda i . f.
    \end{align*}

    By unfolding the definitions, it is clear that $\mathsf{insert}$ is natural in $C$. As in the algebra case, this is sufficient to show that $Tf$ satisfies the required equality for a coalgebra morphism. Thus, $T$ as defined here is a functorial map from $F\text-\sf{CoAlg}$ to $F[\Box{-}]\text-\sf{CoAlg}$.
\end{proof}

The analogous property that we require a functor to satisfy in order to construct a terminal coalgebra is given as follows:

\begin{definition}
\label{weakly-commute-forall}
    Let $\F$ be a functor on the category of small types. We say that $\F$ \emph{weakly commutes} with $\forall$ if the following canonical map
    \begin{align*}
    \mathsf{can}_C &: \F (\forall i .C \, i) \to (\forall i \. F [\Box C] \, i), \\
    \mathsf{can}_C &\, c' \, i \coloneq  F \, (\psi_C \, i) \, c' .
    \end{align*}
    is an equivalence for each $C : \Size \to \U$, where we define $\psi_C$ as:
    \begin{align*}
        \psi_C &: \forall i \. (\forall j \. C \, j \to \forall j < i \. C \, j), \\
        \psi_C & \, i \, c' \, j \coloneq  c' j.
    \end{align*}
\end{definition}

For functors that satisfy this property, we can define a functor $\forall : F[\Box{-}]\text-\sf{CoAlg} \to F\text-\sf{CoAlg}$ extending the action of the universal quantifier on size-indexed types as follows.
\begin{proposition}
    Given a functor $F$ on the category of small types that weakly commutes with $\forall$, every $F[\Box-]$-coalgebra $(C, k_C)$ gives rise to a type $\forall i . A \, i$ which is the first component of an $F$-coalgebra. Moreover, this operation is functorial.
\end{proposition}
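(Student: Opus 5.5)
We dualise the proof of the corresponding proposition for algebras. First, for any size-indexed map $f : C \sto D$ we define the induced map on universally quantified types,
\[
\forall f : \forall i.\,C\,i \to \forall i.\,D\,i, \qquad \forall f\,c \coloneqq \lambda i.\,f\,i\,(c\,i).
\]
This operation preserves identities and composition definitionally, by the $\eta$- and $\beta$-rules for $\forall$.

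Next, given an $F[\Box-]$-coalgebra $(C,k_C)$, let $\mathsf{can}^{-1}_C : (\forall i.\,F[\Box C]\,i) \to F(\forall i.\,C\,i)$ be the quasi-inverse obtained from the assumption that $F$ weakly commutes with $\forall$ (\autoref{weakly-commute-forall}). We define the structure map
\[
k_{\forall C} \coloneqq \mathsf{can}^{-1}_C \circ \forall k_C : \forall i.\,C\,i \to F(\forall i.\,C\,i).
\]
This map is well-typed, since $\forall k_C : \forall i.\,C\,i \to \forall i.\,F[\Box C]\,i$. So $(\forall i.\,C\,i, k_{\forall C})$ is an $F$-coalgebra.

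For functoriality, let $(f,s_f) : (C,k_C) \to (D,k_D)$ be a morphism of size-indexed coalgebras, so that $s_f : k_D \circ f = F[\Box f] \circ k_C$ as maps $C \sto F[\Box D]$. We take $\forall f$ as the underlying map and must produce a path
\[
\mathsf{can}^{-1}_D \circ \forall k_D \circ \forall f = F(\forall f) \circ \mathsf{can}^{-1}_C \circ \forall k_C .
\]
The left-hand side equals $\mathsf{can}^{-1}_D \circ \forall(F[\Box f]) \circ \forall k_C$, using $\mathsf{ap}$ of $\forall$ applied to $s_f$ (together with function extensionality for $\forall$ to pass between pointwise and global paths). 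It then suffices to show that $\mathsf{can}^{-1}$ is natural:
\[
\mathsf{can}^{-1}_D \circ \forall(F[\Box f]) = F(\forall f) \circ \mathsf{can}^{-1}_C .
\]
Since $\mathsf{can}_C$ and $\mathsf{can}_D$ are equivalences, this follows by conjugating the naturality of $\mathsf{can}$ itself:
\[
\forall(F[\Box f]) \circ \mathsf{can}_C = \mathsf{can}_D \circ F(\forall f).
\]
We verify this pointwise. Applied to $c' : F(\forall i.\,C\,i)$ at a size $i$, the left side gives $F_{\Box_i C,\Box_i D}(\Box f\,i)(F(\psi_C\,i)\,c')$ and the right side gives $F(\psi_D\,i)(F(\forall f)\,c')$. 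Using the functor law $\phi_{f,g}$ on both sides, it suffices that
\[
\Box f\,i \circ \psi_C\,i = \psi_D\,i \circ \forall f ,
\]
which holds definitionally: both sides send $c'$ to $\lambda j<i.\,f\,j\,(c'\,j)$. The identity and composition laws for the lifted morphisms hold because $\forall$ preserves them definitionally, as noted above. Since we only work in wild categories, we do not need to check any higher coherences for the paths.

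The main obstacle is this naturality step. It requires carefully combining the functor laws of $F$, function extensionality for $\forall$, and transport along the quasi-inverses $\mathsf{can}^{-1}$. In particular, deriving naturality of $\mathsf{can}^{-1}$ from naturality of $\mathsf{can}$ uses the homotopies $\mathsf{can}\circ\mathsf{can}^{-1} = \id$ and $\mathsf{can}^{-1}\circ\mathsf{can}=\id$. Everything else is routine bookkeeping.
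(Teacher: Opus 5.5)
Your proposal is correct and follows the paper's route: lift $f$ to $\forall f$, set $k_{\forall C} \coloneqq \mathsf{can}^{-1}_C \circ \forall k_C$, and deduce the coalgebra-morphism square from naturality of $\mathsf{can}$. You also supply details the paper leaves implicit, and they are sound: the pointwise check of naturality of $\mathsf{can}$ via the functor law and $\Box f\,i \circ \psi_C\,i \equiv \psi_D\,i \circ \forall f$, and the transfer of that naturality to $\mathsf{can}^{-1}$ using the inverse homotopies.
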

\begin{proof}
    First, we note that the universal quantifier works on maps too, meaning that for any map $f : C \sto D$, we define a map betewen universally quantified types as:
    \begin{align*}
        &\forall i . C \, i \to \forall i . D \, i, \\
        &\forall f \coloneq  \lambda i . f \, i \, (g \, i).
    \end{align*}

    Now let $\mathsf{can}_C^{-1} : \forall i \. F [\Box C] \, i \to \F (\forall i .C \, i) $ be the isomorphism obtained from $F$ weakly commuting with $\forall$.
    For an $F[\Box-]$-coalgebra $(C, k_C)$, where $k_C : C \sto F[\Box C]$, we obtain the $F$-coalgebra $(\forall i . \, C \, i \, , k_{\forall C})$ where we define
    \begin{align*}
        k_{\forall C}& : \forall i . \, C \, i \to F(\forall i . \, C \, i), \\
        k_{\forall C}& \coloneq  \mathsf{can}_C^{-1} \circ \forall  k_C.
    \end{align*}

    Given an coalgebra morphism $f: (C, k_C) \to (D, k_D)$, we obtain the map
    $\forall f : \forall i . C \, i \to \forall i . D \, i$. The fact that this constitutes a coalgebra morphism follows from the fact that $\mathsf{can}_C$ is natural in $C$.
\end{proof}

Dually to the case for initial algebras, we are able to prove that for functors that weakly commute with $\forall$, their terminal coalgebra is obtained by universal quantification over the terminal size-indexed coalgebra.
\begin{theorem}
    Let $\F$ be a functor on the category of small types that weakly commutes with the universal quantifier. Then $\F$ has a terminal coalgebra with first component:
    \[
    \nu  \F \coloneq  \forall i . (\nu^\Box F) \, i.
    \]
\end{theorem}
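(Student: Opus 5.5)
We prove this by dualising the proof of \autoref{theorem:inductive-construction}. The idea is that the functorial map $\forall : \sFCoalg \to \FCoalg$ from the previous proposition is right adjoint to $T : \FCoalg \to \sFCoalg$. Right adjoints preserve terminal objects, so $\forall$ sends the terminal size-indexed coalgebra $(\nu^\Box F, \out^\Box)$ of \autoref{lemma:coinductive-sized-encoding} to a terminal $F$-coalgebra. Its carrier is $\nu F \coloneqq \forall i.\,(\nu^\Box F)\,i$ and its structure map is
\[
  \out \coloneqq \mathsf{can}^{-1}_{\nu^\Box F} \circ \forall\,\out^\Box .
\]
This map is again an equivalence, being a composite of equivalences.

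Concretely, fix an $F$-coalgebra $(C,k_C)$. We want the type $\Sigma(h : C \to \nu F).\,(\out \circ h = Fh \circ k_C)$ to be contractible, and we show it is equivalent to
\[
  \Sigma(h' : TC \sto \nu^\Box F).\,(\out^\Box \circ h' = F[\Box h'] \circ k_{TC}),
\]
which is contractible by \autoref{lemma:coinductive-sized-encoding}. There are three steps.
\begin{enumerate}
  \item On underlying maps, use the swap equivalence $(C \to \forall i.\,D\,i) \simeq \forall i.\,(C \to D\,i)$, i.e.\ $h \mapsto \lambda i.\,\lambda c.\,h\,c\,i$. This is a definitional $\eta$-style equivalence, with function extensionality for $\forall$ (derived from $\Pi i:\Size$).
  \item Postcompose both sides of the coalgebra equation with the equivalence $\mathsf{can}_{\nu^\Box F}$. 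Since this is an equivalence, the equation $\out \circ h = Fh \circ k_C$ is equivalent to $\forall\,\out^\Box \circ h = \mathsf{can}_{\nu^\Box F} \circ Fh \circ k_C$.
  \item Apply function extensionality pointwise in $c$ and $i$. This reduces the equation to comparing $\out^\Box\,i\,(h\,c\,i)$ with $\mathsf{can}_{\nu^\Box F}(Fh(k_C\,c))\,i$.
\end{enumerate}

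The main obstacle is the identification
\[
  \mathsf{can}_{\nu^\Box F}(Fh(k_C\,c))\,i \;=\; F[\Box h']\,i\,(k_{TC}\,i\,c),
\]
where $h'$ is the swapped map. Unfolding, the right-hand side is $F(\Box_{TC,\nu^\Box F}\,h'\,i)(F(\mathsf{insert}_C\,i)(k_C\,c))$, and the left-hand side is $F(\psi\,i)(F\,h\,(k_C\,c))$. Both are $F$ applied to a composite map $C \to \Box_i\,\nu^\Box F$, namely
\[
  \psi\,i \circ h \;\equiv\; \lambda c.\,\lambda j<i.\,h\,c\,j \;\equiv\; \Box h'\,i \circ \mathsf{insert}_C\,i ,
\]
and these agree definitionally. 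So the functoriality witnesses $\phi_{f,g}$ of $F$ give the required path. I will first try to phrase this as naturality of $\mathsf{can}$ combined with naturality of $\mathsf{insert}$, rather than computing by hand.

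One further check is needed. Without UIP, the composite of these equivalences of $\Sigma$-types must be a genuine equivalence on the path components. This holds because each step is either postcomposition with an equivalence, function extensionality, or transport along fixed paths, and each of these is itself an equivalence of path types. Contractibility then transfers, which proves the theorem.
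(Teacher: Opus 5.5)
Your proposal is correct and follows the paper's own argument. You use the adjunction $T \dashv \forall$ as intuition, lift the swapping equivalence between maps $TC \sto D$ and $C \to \forall i.\,D\,i$ to an equivalence of coalgebra-morphism types, and inherit contractibility from the terminality of $(\nu^\Box F, \out^\Box)$. You also fill in what the paper only asserts: the fibrewise equivalence via $\mathsf{can}_{\nu^\Box F}$ and function extensionality, and the identification $\psi\,i \circ h \equiv \Box h'\,i \circ \mathsf{insert}_C\,i$ through the functoriality witnesses. Your structure map correctly uses $\out^\Box$ where the paper writes $\In^\Box$.
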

\begin{proof}
This argument is also very similar to the algebra case.
The categorical intuition is that this follows from the fact that the functor $\forall : \sFCoalg \to \FCoalg$ of the previous proposition is the right adjoint of $T : \FCoalg \to \sFCoalg$.
Since right adjoints preserve terminal objects, we obtain the fact that $\forall i. (\nu^\Box F) \, i$ is the terminal coalgebra of $F$, with the structure map $\out \coloneq  \mathsf{can}^{-1}_{\nu^\Box F} \circ \forall \, \In^\Box$, which is also an equivalence.

Working this out in more detail: given a type $C : U$ and a size-indexed type $D : \Size \to U$, there is an isomorphism between maps of type $TC \sto D$ and those of type $C \to \forall i . D \, i $, which is given by swapping arguments. This extends to an isomorphism between size-indexed coalgebra maps of type $(TC, k_{TC}) \to (D, k_{D})$ and coalgebra maps of type $(C, k_C) \to (\forall i . D \, i, k_{\forall D})$. Thus, defining a coalgebra map $(C, k_C) \to (\nu \F, \out) $ is equivalent to defining an size-indexed coalgebra map $(C, k_{TC}) \to (\nu^\Box F, \out^\Box)$. Such a coalgebra map uniquely exists by the fact that $(\nu^\Box F, \out^\Box)$ is the terminal size-indexed coalgebra.  
\end{proof}

Finally, we can show that all polynomial functors satisfy the requirement of weakly commuting with the universal quantifier. As before, the equivalences entail the following chain of equivalences, which corresponds precisely to the canonical map denoted in \autoref{weakly-commute-forall}, noting that $i \notin \text{fv}(A)$ and $i \notin \text{fv}(B(a))$ for $a : A$:
\begin{align*}
    P_{A,B} (\forall i .X \, i) &\coloneq  \Sigma {a : A} \. (B (a) \to \forall i .X \, i) \\
    &\simeq \Sigma {a : A} \. (B (a) \to \forall i . \forall j < i . X \, j) \\
    &\simeq \Sigma {a : A} \. \forall i \. B(a) \to \forall j < i . X \, j \\
    &\simeq \forall i \. \Sigma {a : A} \. B(a) \to \forall j < i . X \, j
    =: (\forall i \. P_{A,B} [\Box X] \, i).
\end{align*}
Thus, for all polynomial functors, the terminal coalgebra is constructed via universal quantification.

\section{Model of the theory}
\label{appendix:model}
In this section we give more details on how the category of assemblies fits in the CwF framework, and how our additional type formers are interpreted in the model. Recall that we work with the pca of partial recursive functions. We also assume a computable bijective pairing function $\langle \_\, , \_ \rangle : \bN^2 \to \bN$ is given, together with projections $\pr_1, \pr_2 : \bN \to \bN$. An enumeration of partial recursive functions and an appropriate pairing function can be found in many treatments of recursion theory. Another important feature of a pca is that it has a notion of abstraction similar to $\lambda$-abstraction in the lambda calculus. For a variable $x$ and a term $t$, there exists another term $\Lambda x . t$ defined by recursion on terms, which behaves exactly as $\lambda$-abstraction. See, for example, \cite{oosten} for more details on partial combinatory algebra.

\subsection{Assemblies as a CwF}
 The CwF framework can be defined in different ways, we make use of the definitions from Hofmann \cite{Hofmann_1997}, in which the construction of a CwF begins with the following structure:
\begin{itemize}
	\item A category $\C$ of semantic contexts and context morphisms;
	\item For each $\Gamma \in \C$, a collection $\Ty(\Gamma)$ of semantic types in context $\Gamma$.
	\item For each $\Gamma \in \C$ and $A \in \Ty(\Gamma)$, a collection $\Tm(\Gamma, A)$ of semantic terms of type $A$ in context $\Gamma$.
\end{itemize}
We will often leave out ``semantic'' where this does not cause confusion. The category $\Asm$ is the category of contexts. Given an assembly $\Gamma$, a type $A \in \Ty(\Gamma)$ will be a family of assemblies over $\Gamma$ We will often write such a family as $A = (A_\gamma)_{\gamma \in \Gamma}$. Finally, a term $a \in \Tm(\Gamma, A)$ is a dependent morphism $a : (\gamma \in \Gamma) \to A_\gamma$, often written $a = (a_\gamma)_{\gamma \in \Gamma}$, which means that $a$ is tracked by some $e \in \bN$, i.e.
\[
\forall n \in \bN . \forall \gamma \in \Gamma \. n \Vdash_\Gamma \gamma \implies (e \cdot n) \Vdash_{A_\gamma} a_\gamma.
\]

A CwF requires a notion of \emph{semantic substitution}, given by an operation on types and one on terms. Given a context morphism $f :  \Delta \to \Gamma$, there should be an operation $-[f] : \Ty(\Gamma) \to \Ty(\Delta)$, and for each $A \in \Ty(\Gamma)$ an operation $-[f] : \Tm(\Gamma, A) \to \Tm(\Delta, A[f])$. Moreover, these operations should be compatible with identities and composition \cite{Hofmann_1997}. 
For assemblies, both operations are defined by precomposition with $f$. Given an assembly morphism $f :  \Delta \to \Gamma$ and a type $A \in \Ty(\Delta)$, we define $A[f]_\delta = A_{f(\delta)}$. Similarly, if $a \in \Tm(\Gamma, A)$, we define $(a[f])_\delta = a_{f(\delta)}$.

A CwF should be able to interpret the \emph{context formation} rules of the theory. The empty context $\diamond$ is modelled by the terminal object in $\C$. In the category of assemblies, the terminal object is given by $\1 = (\{*\}, \nabla \{*\})$. To interpret context extension, we require for each $\Gamma \in \C$ and $A \in \Ty(\Gamma)$ an extended context $\Gamma.A \in \C$, together with context morphisms $\p_A : \Gamma.A \to \Gamma$ and $\v_A \in \Tm(\Gamma.A, A[\p_A])$, corresponding to first and second projections respectively. Context extension is defined as follows for assemblies:
\begin{definition}[Context extension]
	Given an assembly $\Gamma$ and $A \in \Ty(\Gamma)$, we define the context extension $\Gamma.A$ as:
	\begin{align*}
		& \Gamma.A = \{ (\gamma, a) \mid \gamma \in \Gamma \land a \in A_\gamma\}, \\
		& \langle n_1, n_2 \rangle \Vdash_{\Gamma.A} (\gamma, a) \Iff n_1\Vdash_\Gamma \gamma \and n_2 \Vdash_{A_\gamma} a.
	\end{align*}
	The morphisms $\p_A : \sigma(\Gamma, A) \to \Gamma$ and $\v_A \in \Tm(\Gamma.A, A[\p_A])$ are given by $(\gamma, a) \mapsto \gamma$ and $(\gamma, a) \mapsto a$ respectively, which are assembly morphisms tracked by the first and second projections on the realisers. It is moreover clear that in this model, $A[\p_A]_{(\gamma, a)} = A_\gamma$. 
\end{definition}

Finally, a CwF requires an operation to extend a context morphism by a term. That is, given a context morphism $f : \Delta \to \Gamma$, a type $A \in \Ty(\Gamma)$, and a term $a \in \Tm(\Delta, A[f])$, there should be a morphism $\langle f, a \rangle : \Delta \to \Gamma.A$ satisfying several equations, as found in \cite{Hofmann_1997}. In the realisability model, this is defined as $\langle f, a \rangle (\delta) = (f_\delta, a_\delta)$. Note that if $d$ tracks $f$ and $e$ tracks $M$, then $\langle d, e \rangle$ tracks $\langle f, a \rangle$, so this is indeed an assembly morphism. If $a \in \Tm(\Gamma, A)$, then we trivially have that $a \in \Tm(\Gamma, A[\id_\Gamma])$, so we get a context morphism $\langle \id_\Gamma, a \rangle : \Gamma \to \Gamma . A$, where $\langle \id_\Gamma, a \rangle (\gamma) = (\gamma, a_\gamma)$. We will denote this morphism by $\overline{a} : \Gamma \to \Gamma.A$, as it will be useful later.

This completes the construction of a category with families from the category of assemblies, allowing us to interpret the structural rules of our dependent type theory in assemblies. What remains is to give interpretations for the different types and type formers.

\subsection{Universe type}
The universe type is interpreted by giving the set of all PERs the complete realisability relation.
\begin{definition}[Universe type]\label{def:universe}
	For any context $\Gamma$, there is a type $\U \in \Ty(\Gamma)$ defined as $U_\gamma = \nabla \Per$, viewing $\Per$ as the set of all PERs. 
    Moreover, there exists a type $\El \in \Ty(\Gamma.\U)$ defined by $\El_{(\gamma, R)} = \iota(R)$, where $\iota(R)$ is the PER $R$ viewed as an assembly. Given a type $A \in \Tm(\Gamma, \U)$ in the universe, we obtain a type $\El[\overline{A}] \in \Ty(\Gamma)$, which we will denote by $\El A$.
\end{definition}

\subsection{Empty, unit and boolean Types}
The empty-, unit-, and boolean types are non-dependent types, and thus are are interpreted as constant families. We write $\0$ and $\1$ for the initial and terminal PERs respectively. Note that for these types, we implicitly assume an induction principle, which is also not explicitly specified in the type theory, but is easily defined in the interpretation.
\begin{definition}[Empty type]
The code for the empty type $\bot^\U \in \Tm(\Gamma, \U)$ will be interpreted as the initial PER for each $\gamma \in \Gamma$:
\[
\bot^\U(\gamma) = \0.
\]
This definition allows us to semantically define the empty type as $\bot \coloneq  \El (\bot^\U)$, where $\El$ denotes the semantic decoder given in \autoref{def:universe}.
\end{definition}

\begin{definition}[Unit type]
The small unit type $\top^\U \in \Tm(\Gamma, \U)$ will be interpreted as the terminal PER for each $\gamma \in \Gamma$:
\[
\top^\U(\gamma) = \1.
\]
As with the empty type, the actual unit type is interpreted as $\El (\top^\U)$, i.e. the terminal assembly for each $\gamma$, making the required type equality true in the interpretation.
The unique term $\star \in \Tm(\Gamma, \top)$ is then given by the unique morphism from $\Gamma$ to the terminal assembly.
\end{definition}

\begin{definition}[Boolean type]
We define $\mathsf{Bool}^\U$ as the following PER:
\begin{align*}
    \forall m, n \in \bN \. \langle 0, m \rangle \sim_{\mathsf{Bool}^\U_\gamma} \langle 0, n \rangle, \\
    \forall m, n \in \bN \. \langle 1, m \rangle \sim_{\mathsf{Bool}^\U_\gamma} \langle 1, n \rangle.
\end{align*}
Again, we define the interpretation of the type $\mathsf{Bool}$ as $\El (\mathsf{Bool}^\U)$. The terms $\mathsf{ff, tt} \in \Tm(\Gamma, \mathsf{Bool})$ are defined to be the constant morphisms $\gamma \mapsto \{\langle 0, m \rangle \mid m \in \bN \}$ and $\gamma \mapsto \{\langle 1, m \rangle \mid m \in \bN \}$, realised by $\Lambda k . \langle 0, k \rangle$ and $\Lambda k . \langle 1, k \rangle$ respectively.
\end{definition}

\subsection{Identity type and function extensionality}
\begin{definition}[Identity type]
We interpret the identity type as follows: for any type $A \in \Ty(\Gamma)$ and terms $a, a' \in \Tm(\Gamma, A)$, we have a type $\Id(A, a, a') \in \Ty(\Gamma)$ defined as 
\[
\Id(A, a, a')_\gamma = \begin{cases}
	\1 & \text{ if $a = a'$},\\
	\0 & \text{ if $a \neq a'$}.
\end{cases}
\]
For each term $a \in \Tm(\Gamma, A)$, we have a term $\refl_a \in \Tm(\Gamma, \Id(A, a, a))$ defined, closely following the set-theoretic model, by $\refl_a(\gamma) = *$.
\end{definition}

Note that the realisability model thus validates uniqueness of identity proofs, since the identity types are either empty or a singleton, but we do not add this axiom to the type theory. 
Informally, function extensionality holds in the realisability model since it is set-theoretic, so (dependent) functions are equal when they are pointwise equal.

\subsection{Dependent product}
To interpret both large and small dependent products in assemblies, we are required to define separate large and small type- and term constructors. This is due to the fact that the inclusion of the small dependent product into assemblies is only isomorphic to the large dependent product, not equal.
To satisfy the definitional type equality
\[
\El (\Pi^U x:A.\, B(x)) \equiv \Pi_{x:\El A} (\El B(x)),
\]
a case distinction is the required in the actual interpretation. We only give the definitions for the product type. First we define the large product and its associated terms.

\begin{definition}[Dependent product]
	Given types $A \in \Ty(\Gamma)$ and $B \in \Ty(\Gamma.A)$, the type $\Pi(A, B) \in \Ty(\Gamma)$ consists of all dependent morphisms from $A$ to $B$, defined as follows:
	\begin{align*}
		&\Pi(A, B)_\gamma = \{ f \in \Pi_{a \in A_\gamma}  B_{(\gamma, a)} \mid \exists e \. e \Vdash_{\Pi(A, B)_\gamma} f \},\\
		&e \Vdash_{\Pi(A, B)_\gamma} f \Iff \forall n \in \bN, \forall a \in A_\gamma \. n \Vdash_{A_\gamma} a \implies e \cdot n \Vdash_{B_{(\gamma, x)}} f (a).
	\end{align*}
	Given a term $b \in \Tm(\Gamma.A, B)$, we define a term $\abs^\Pi_{A, B} (b) \in \Tm(\Gamma, \Pi(A, B))$ as
	\[
	\abs^\Pi_{A, B} (b)(\gamma) = \lambda a \in A_\gamma \. b_{(\gamma, a)}.
	\]
	Note that $\abs^\Pi_{A, B}$ is tracked by $\Lambda m . \Lambda z . \Lambda n\. m \cdot \langle z, n \rangle$. 
	Conversely, given a term $f \in \Tm(\Gamma, \Pi(A, B))$ and $a \in \Tm (\Gamma, A)$, we have a term $\app^\Pi_{A, B}(f, a) \in \Tm(\Gamma, B[\overline{a}])$ defined as follows
	\[
	\app^\Pi_{A, B}(f, a)(\gamma)  = f_\gamma(a_\gamma).
	\]
	Note that $\app^\Pi_{A, B}(f, a)$ is tracked by $\Lambda e . \Lambda n . \Lambda z \. (e \cdot z) \cdot (n \cdot z)$.
\end{definition}

We note a useful property of the definition given here, which will allow for impredicativity in the model. To do so, we will call a type $A \in \Ty(\Gamma)$ modest if $A_\gamma$ is modest for every $\gamma \in \Gamma$.
\begin{proposition}
	For any context $\Gamma$, if $B \in \Ty(\Gamma.A)$ is modest, then $\Pi(A, B)$ is modest for any $A \in \Ty(\Gamma)$.
\end{proposition}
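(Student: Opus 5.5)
The plan is to unfold the definitions and show directly that each fibre $\Pi(A,B)_\gamma$ satisfies the modesty condition. Fix $\gamma \in \Gamma$, and suppose a single realiser $e$ realises two elements $f, g \in \Pi(A,B)_\gamma$, that is, $e \Vdash_{\Pi(A,B)_\gamma} f$ and $e \Vdash_{\Pi(A,B)_\gamma} g$. We must show $f = g$. Since $f$ and $g$ are set-theoretic dependent functions on $A_\gamma$, it suffices by extensionality of set-theoretic functions to show $f(a) = g(a)$ for every $a \in A_\gamma$.

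So fix $a \in A_\gamma$. By the definition of an assembly, $a$ has at least one realiser, say $n \Vdash_{A_\gamma} a$. Unfolding the realisability relation of $\Pi(A,B)_\gamma$ for both $f$ and $g$, we get that $e \cdot n$ is defined and that
\[
e \cdot n \Vdash_{B_{(\gamma,a)}} f(a) \quad\text{and}\quad e \cdot n \Vdash_{B_{(\gamma,a)}} g(a).
\]
Since $B$ is modest, the assembly $B_{(\gamma,a)}$ is a modest set, so the single realiser $e \cdot n$ codes at most one element. Hence $f(a) = g(a)$. As $a$ was arbitrary, $f = g$, and therefore $\Pi(A,B)_\gamma$ is modest for every $\gamma$.

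The only point requiring care is the use of the non-emptiness clause in the definition of an assembly. It provides a realiser $n$ for every $a$, which is what lets us compare $f(a)$ and $g(a)$ through a common realiser. Note that no hypothesis on $A$ is needed: $A$ may be an arbitrary large assembly, such as $\nabla\omega_1$. This is exactly what makes the universal quantifier over $\Size$ land in modest sets, and thus, via \autoref{lemma:per-mod-equiv}, in the universe. I expect no real obstacle beyond checking that the realiser condition is quantified over all realisers of $a$, so that the same $n$ can be used for both $f$ and $g$.
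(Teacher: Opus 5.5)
Your proof is correct and follows the paper's argument exactly: fix $\gamma$, take a common realiser $e$ of two elements of $\Pi(A,B)_\gamma$, use the non-emptiness clause of assemblies to pick $n \Vdash_{A_\gamma} a$ for each $a$, and conclude that both functions agree at $a$ because $e \cdot n$ realises both values in the modest set $B_{(\gamma,a)}$. Your remark that no hypothesis on $A$ is needed is the same point the paper uses to interpret the universal quantifier over $\nabla\omega_1$ as a small type.
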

\begin{proof}
	It suffices to show that $\Pi(A, B)_\gamma$ is modest. Suppose that we have $e \in \bN$ such that $e \Vdash_{\Pi(A, B)_\gamma} f_1, f_2 \in \Pi_{a \in A_\gamma}  B_{(\gamma, a)}$. Then for any $a \in A_\gamma$, there is some $n \in \bN$ such that $n \Vdash_{A_\gamma} a$ and thus $e \cdot n \Vdash_{B_{(\gamma, x)}} f_1(a), f_2(a)$, which shows that $f_1(a) = f_2(a)$ since $B$ is modest, and thus $f_1 = f_2$, which shows that $\Pi(A, B)$ is modest.
\end{proof}
Note that we do not allow general impredicativity in the theory, so we do not make use of this property in the general case when interpreting large dependent products. However, the universal quantifier is interpreted simply as an impredicative product over $\Size$, which is thus validated by the model.

Now we give analogous definitions for the small impredicative product type.
\begin{definition}[Small Product]
	Given a type $A \in \Ty(\Gamma)$ and a small type $B \in \Tm(\Gamma.A, \U)$, we define the small (impredicative) product $\pi(A, B) \in \Tm(\Gamma, \U)$ as a $\Gamma$-indexed set of PERs, where $\pi (A, B)_\gamma$ is the PER defined as follows:
	\begin{align*}
		d &\sim_{\pi (A, B)_\gamma} e \Iff \\ &\forall a \in A_\gamma, \forall m, n \in \bN \. m, n \Vdash_{A_\gamma} a \implies (d \cdot m) \sim_{B_{(\gamma, a)}} (e \cdot n).
	\end{align*}
\end{definition}
	Note that $\pi(A, B)_\gamma$ relates two codes if they are realisers for the same dependent morphism $f : (a \in A_\gamma) \to B_{(\gamma, a)}$ and is thus isomorphic to the set of those dependent morphisms.
    
    The definitions of the term formers also follow from the previous observation, although we can also give explicit definitions.
\begin{definition}[Small term formers]
	Given a term $p : (\gamma \in \Gamma) \to \iota(\pi(A, B)_\gamma)$ and $a \in \Tm(\Gamma, A)$ we define a term $\app^\pi(p, a) :(\gamma \in \Gamma) \to \iota(B(\gamma, a))$ as:
	\[
	\app_{A, B}^\pi(p, a)(\gamma) = \{ d \cdot n \mid d \in p_\gamma \land n \Vdash_{A_\gamma} a_\gamma\}.
	\]
	It is clear from the definition of $\pi(A, B)$ that $\app_{A, B}^\pi(p, a)(\gamma)$ is an equivalence class of the PER $B(\gamma, a)$. 
	Conversely, if we are given a term $t : ((\gamma, a) \in \Gamma.A) \to \iota(B(\gamma, a))$, we define a term $\abs^\pi_{A, B}(t) : (\gamma \in \Gamma) \to \pi(A, B)(\gamma)$ as:
	\begin{align*}
		\abs_{A, B}^\pi&(t)(\gamma) = \\
		&\{ e \mid \forall a \in A_\gamma \. \forall n \in \bN \. n \Vdash_{A_\gamma} a \implies (e \cdot n) \in t_{(\gamma, a)} \}.
	\end{align*}
	Equivalently, $\abs_{A, B}^\pi(t)(\gamma)$ is the set of realisers for the dependent morphism defined by \[a \mapsto t_{(\gamma, a)} : (a : A) \to B(\gamma, a)\]
\end{definition}
These definitions are used to interpret small dependent products. More on the formal interpretation of the dependent product and why the case distinction between small and large products is required can be found in \cite{REUS1999128}.

\subsection{Dependent sum}
The interpretation of the dependent sum requires a similar case distinction as the dependent product. We first give the definition of the large dependent sum type, which is simply a generalisation of context extension, defined as follows.
\begin{definition}[Dependent sum]
	Given a type $A \in \Ty(\Gamma)$ and $B \in \Ty(\Gamma.A)$, the type $\Sigma(A, B) \in \Ty(\Gamma)$ is defined as follows:
	\begin{align*}
		& \Sigma(A, B)_\gamma  = \{ (a, b) \mid a \in A_\gamma \land b \in B_{(\gamma, a)}\}, \\
		& \langle n_1, n_2 \rangle \Vdash_{\Sigma(A, B)_\gamma} (a, b) \Iff n_1\Vdash_{A_\gamma} a \and n_2 \Vdash_{B_{(\gamma, a)}} b.
	\end{align*}
	To interpret the pairing operation, we define a morphism \begin{align*}
		\pair^\Sigma_{A, B}&: \Gamma.A.B \to \Gamma.\Sigma(A, B), \\
		\pair^\Sigma_{A, B}&(\gamma, a, b) = (\gamma, (a, b)).
	\end{align*}
	This morphism is tracked by the pairing operation on the realisers.
	To interpret the elimination rule, we are given a type $P \in \Ty(\Gamma.\Sigma(A, B))$ and a term $p \in \Tm(\Gamma.A.B, P[\pair^\Sigma_{A, B}])$ and define a term $\ind_\Sigma^{p} \in \Tm(\Gamma.\Sigma(A, B), P)$ as follows:
	\begin{align*}
		\ind_\Sigma^{ p}(\gamma, (a, b)) = p_{(\gamma, a, b)}.
	\end{align*}
	If $p$ is tracked by $d$, then this term is tracked by $\Lambda k . \Lambda n \. d \cdot \langle k, \pr_1 n, \pr_2 n \rangle$.
\end{definition}
Note that here we are being somewhat informal in defining the realiser for the elimination rule, since we only assumed a computable bijection $\langle \cdot , \cdot \rangle$ for pairing. However, such a computable bijection may be defined for $n$-tuples as well. 

For the dependent sum type, it is easy to see that if $A$ and $B$ are modest, then $\Sigma(A, B)$ is modest too. We give an analogous definition for the type constructor in $\Per$. Note that, unlike for the $\Pi$-type, we do not require separate term constructors in $\Per$.

\begin{definition}[Small dependent sum]
	Given a small type $A \in \Tm(\Gamma, \U)$ and $B \in \Tm(\Gamma. \El A, \U)$, we define the dependent sum $\sigma (A, B) \in \Tm(\Gamma, \U)$ pointwise as the PER $\sim_{\sigma (A, B)(\gamma)}$ defined as follows:
	\begin{align*}
		\langle m_1, n_1 \rangle \sim_{\sigma(A, B)_\gamma} \langle m_2, n_2 \rangle \Iff m_1 \sim_{A(\gamma)} m_2 \and n_1 \sim_{B(\gamma, [m_1]_{A(\gamma)})} n_2.
	\end{align*}
	Note that this definition is equivalent to $\sigma(A, B) = \E(\Sigma (\El A, \El B))$ and thus we have $\sigma(A, B) \cong \Sigma(\El A, \El B)$. 
\end{definition}

\subsection{Type of sizes}
The impredicativity of our realisability model allows us to interpret the type $\Size$ as a large type, while still having universal quantification over $\Size$ with a small type as codomain live in $\U$.  
Thus, given a context $\Gamma$, we define the semantic type $\Size \in \Ty(\Gamma)$ as $\Size_\gamma = \nabla \omega_1$, where $\omega_1$ is the first uncountable ordinal. Note that the set $\omega_1$ is exactly the set of all countable ordinals. Giving the type a full realisability structure encapsulates the idea that sizes should be treated parametrically.

The operators $\circ \in \Tm(\Gamma, \Size)$ and ${\uparrow}  \in \Tm(\Gamma.\Size, \Size)$ can be interpreted as the constant-zero and successor functions respectively:
\begin{align*}
	\circ(\gamma) &= 0, \\
	\uparrow \!(\gamma, \alpha) &= \alpha + 1.
\end{align*}
Note that since $\Size_\gamma$ carries a full realisability structure, these are tracked by any total realiser. 
We will sometimes treat a term $s \in \Tm(\Gamma, \Size)$ of type $\Size$ as a constant value in $\Size_\gamma$, since we build up such terms from the zero, successor and supremum functions, which behave uniformly over $\Gamma$.

The ordering on sizes $\leq^\U$ is interpreted as a small type in a context with two size variables, i.e. $\leq^\U \,\in \Tm(\Gamma.\Size.\Size[\p_\Size], \U)$. The small type $s_1 \leq^\U s_2$ is either the terminal singleton PER or the initial empty PER, depending on whether $s_1 \leq s_2$ holds with the regular relation on natural numbers.
\begin{align*}
	& \le^U (\gamma, s_1, s_2) = 
	\begin{cases}
		\1 & \text{if } s_1 \le s_2,\\
		\0 & \text{otherwise}.
	\end{cases}
\end{align*}
The actual type $\le \, \in \Ty(\Gamma.\Size.\Size[\p_\Size])$ is then simply the lifting to assemblies: $\le \, = \El (\le^\U)$. We will often write $(s_1 \le s_2)_\gamma$ instead of $\le(\gamma, s_1, s_2)$ and we will blur the distinction between $\le$ and $\le^\U$, since the important property is that $s_1 \le s_2$ is pointwise either the initial or terminal object in the category of PERs or assemblies. Note that this interpretation validates the notion that the ordering on sizes in the theory is a proposition.

\subsection{Bounded quantification and the fixpoint operator}
Recall that the notation $\Gamma, i : \Size \vdash \forall j < i \. A(j)$ is shorthand for $\Gamma, i : \Size \vdash \forall j \. (j < i \to A(j))$. Thus, a term $g$ of type $\forall j < i \. A(j)$ is interpreted as a $\Gamma.\Size$-indexed collection of dependent morphisms, where
\[
g_{(\gamma, \alpha)} : (\beta \in \Size_\gamma) \to ((\beta < \alpha)_\gamma \to A_{(\gamma, \beta)}).
\]

If $(\beta < \alpha)_\gamma $ holds in the interpretation, then $(\beta < \alpha)_\gamma$ is the terminal singleton assembly. Then the set of morphisms of type $(\beta < \alpha)_\gamma \to A_{(\gamma, \beta)}$ is isomorphic to $A_{(\gamma, \beta)}$, allowing us to treat $g_{(\gamma, \alpha)}(\beta)$ as an element of $A_{(\gamma, \beta)}$. 
If $(\beta < \alpha)_\gamma$ does not hold in the interpretation, then $(\beta < \alpha)_\gamma$ is the initial (empty) assembly and there thus exists a unique morphism of type $(\beta < \alpha)_\gamma \to A_{(\gamma, \beta)}$, namely the empty morphism, again given by $g_{(\gamma, \alpha)}(\beta)$. This confirms the syntactic intuition that we may treat $g_{(\gamma, \alpha)}$ as a dependent morphism where $\dom(g_{(\gamma, \alpha)}) = \Size_\gamma \upharpoonright \alpha$. We can thus write the type of $g_{(\gamma, \alpha)}$ as:
\[
g_{(\gamma, \alpha)} : (\beta \in \Size_{<\alpha, \gamma}) \to A_{(\gamma, \beta)}.
\]
Given this intuition, we consider the interpretation of the fixpoint operator $\fix$. Given a type $A \in \Ty(\Gamma. \Size)$, and a dependent morphism
\[
f_{\gamma} : (\alpha : \Size_\gamma) \to ((\beta \in \Size_{<\alpha, \gamma}) \to A_{(\gamma, \beta)}) \to A_{(\gamma, \alpha)},
\]
corresponding to a semantic term of the syntactic type $\forall i . (\forall j < i . A(j) \to A(i))$,
we need to define a term $\fix f : (\gamma \in \Gamma) \to \Pi (\Size, A)_\gamma$. Thus, we require that $(\fix f)_\gamma$ is a dependent morphism from $\Size$ to $A$, i.e. 
\[
(\fix f)_\gamma : (\alpha \in \Size_\gamma) \to A_{(\gamma, \alpha)}.
\] 
We define this morphism by well-founded induction on $\Size_\gamma$, i.e. the ordinal $\omega_1$ and thus a well-founded ordering:
\begin{align*}
	(\fix f)_\gamma(\alpha) &= f_\gamma(\alpha, \lambda \beta < \alpha \. (\fix f)_\gamma(\beta)).
\end{align*}
In the case where $\alpha = 0$, there is only the unique empty morphism with type $(\beta \in \Size_{<0, \gamma}) \to A_{(\gamma, \beta)}$, since the set $\Size_{<0, \gamma}$ is empty. 
Note that this definition mirrors exactly the $\beta$-rule for $\fix\!$. The uniqueness property for $\fix f$ also follows by well-founded induction over $\Size_\gamma$. 

For $\fix$ to be a term, it must be tracked by some realiser. Note that, by our definition of $\fix$, if we are given some $f$ realised by $f_r \in \bN$ and $\gamma$ realised by $\gamma_r \in \bN$, we require that a realiser $F$ for $\fix$ satisfying the following for every $n \in \bN$, where we use as convention that the partial operation associates to the left so that we may omit parentheses:
\[
F \cdot f_r \cdot \gamma_r \cdot n = f_r \cdot \gamma_r \cdot n \cdot (\Lambda m \. F \cdot f_r \cdot \gamma_r \cdot m).
\]
In order to define such a realiser, we make use of the fact that a fixpoint operator exists in every partial combinatory algebra:
\begin{proposition}[Fixpoint operator in pca]
    For every pca $(\cA, \cdot)$, there exists a $\fix\! \in \cA$ such that 
    \[
    \fix f \downarrow \text{ and } 
    \fix f \, a \simeq f \, (\fix f) \, a.
    \]
\end{proposition}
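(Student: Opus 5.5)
We construct the fixpoint operator explicitly from the combinators $\k$ and $\s$ (equivalently, from $\Lambda$-abstraction, which the pca supports). The plan is to mimic Turing's fixpoint combinator from the untyped $\lambda$-calculus, but guarded by an extra abstraction, so that $\fix\,f$ is always defined. In a pca only $\k\,a\,b\simeq a$ and $\s\,a\,b\downarrow$ with $\s\,a\,b\,c\simeq a\,c\,(b\,c)$ hold. Since we cannot rely on reduction under binders, we must make sure every application whose definedness we claim is headed by a partially applied $\s$ or $\k$, or by a $\Lambda$-term applied to fewer arguments than it abstracts.

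\textbf{Step 1: the auxiliary term.} We define
\[
W \coloneqq \Lambda x\.\Lambda f\.\Lambda a\. f\,(x\,x\,f)\,a .
\]
The key fact about $\Lambda$-abstraction is that for a term $t$ with free variables among $x_1,\dots,x_n$, the element $\Lambda x_1\dots\Lambda x_n\. t$ satisfies two properties. First, applying it to fewer than $n$ arguments is always defined. Second, applying it to all $n$ arguments is Kleene-equal to $t$ with the arguments substituted. This is the standard combinatory completeness for pcas, proved via the $\s$--$\k$ translation in which $\Lambda x\.t$ is built so that its partial applications are of the form $\s\,u\,v$ or $\k\,u$, which are always defined.

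\textbf{Step 2: the fixpoint.} We then set $\fix \coloneqq W\,W$, which is defined because it is a partial application of $W$ (one argument out of three). For any $f$, the element $\fix\,f = W\,W\,f$ is again a partial application (two out of three), hence $\fix\,f\downarrow$. Finally, for any $a$ we compute
\[
\fix\,f\,a \simeq W\,W\,f\,a \simeq f\,(W\,W\,f)\,a \simeq f\,(\fix\,f)\,a .
\]

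\textbf{Main obstacle.} The delicate point is the bookkeeping of definedness in Step 1. We must check that the $\s\k$-translation of $\Lambda x\.\Lambda f\.\Lambda a\.t$ really yields defined values on two arguments. We must also check that the final Kleene equality $\simeq$ holds when $t$ itself may be undefined, because $f$ need not be total. We handle this by appealing to the standard lemma from van Oosten: if $t$ contains free variables among $x_1,\dots,x_n$, then $(\Lambda x_1\dots x_n\.t)\,a_1\cdots a_{n-1}\downarrow$ and $(\Lambda x_1\dots x_n\.t)\,a_1\cdots a_n\simeq t[a_i/x_i]$. With this lemma in hand, the argument above is just the displayed calculation.
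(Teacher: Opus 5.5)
Your argument is correct and matches the source the paper cites instead of giving its own proof: the paper defers to van Oosten, whose proof is this Turing-style construction, taking $W = \Lambda x f a.\, f(xxf)a$ and $\mathsf{fix} = WW$, with combinatory completeness ensuring that the partial applications $WW$ and $WWf$ are defined. You also correctly identify the one delicate point: the abstraction must be the careful $\mathtt{s}$--$\mathtt{k}$ translation, so that $WWf$ is defined even when $f$ is only a partial function.
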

A proof of this can be found in \cite{oosten}.
Importantly, it allows us to define a realiser for the fixpoint operator. First, we define a code $\varphi \in \bN$ by
\[
\varphi \coloneq  \Lambda f. \Lambda e . \Lambda k . \Lambda n \. e \cdot k \cdot n \cdot (\Lambda m \. f \cdot e \cdot k \cdot m).
\]
This gives us the following:
\begin{align*}
(\fix \varphi) \cdot f_r \cdot \gamma_r \cdot n &\simeq \, \varphi \cdot (\fix \varphi) \cdot f_r \cdot \gamma_r \cdot n \\
&\simeq f_r \cdot \gamma_r \cdot n \cdot (\Lambda m . (\fix \varphi) \cdot f_r \cdot \gamma_r \cdot m).
\end{align*}
This shows that $\fix \, \varphi$ is a realiser for our fixpoint operator.

\subsection{Existential type}
To interpret the existential type $\exists i . A(i)$, we quotient the corresponding $\Sigma$-type in order for it to live inside the universe $\U$, i.e. canonically construct a partial equivalence relation from the assembly $\Sigma(\Size, A)$. To do so, we make use of the following lemma.

\begin{lemma}
The inclusion functor $I : \Mod \to \Asm$ has a left adjoint $\M : \Asm \to \Mod$.
\end{lemma}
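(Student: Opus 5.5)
We construct $\M$ explicitly and verify the universal property. For an assembly $(A,\Vdash_A)$, let $\sim_\M$ be the least equivalence relation on $A$ containing all pairs $(a_1,a_2)$ that share a realiser, i.e.\ for which $n\Vdash_A a_1$ and $n\Vdash_A a_2$ for some $n$. Concretely, $a\sim_\M a'$ iff there is a finite zigzag $a=a_0,a_1,\dots,a_k=a'$ in which consecutive elements share a realiser. Put $\M(A)\coloneqq(A/{\sim_\M},\Vdash_{\M(A)})$, where $n\Vdash_{\M(A)}[a]$ iff $n\Vdash_A a'$ for some $a'\sim_\M a$.

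The first step is to check that $\M(A)$ is a modest set. Every class is realised, because every $a$ has a realiser. Modesty holds because if $n$ realises both $[a]$ and $[b]$, then there are $a'\sim_\M a$ and $b'\sim_\M b$ with $n\Vdash_A a'$ and $n\Vdash_A b'$. By the generating clause $a'\sim_\M b'$, hence $[a]=[b]$.

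The unit is the quotient map $\eta_A:A\to I\M(A)$, $a\mapsto[a]$, which is tracked by the identity realiser $\Lambda x.\,x$.

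For the universal property, let $f:A\to I(Y)$ be a morphism into a modest set $Y$, tracked by $e$. We claim $f$ is constant on $\sim_\M$-classes. It suffices to check this on generating pairs: if $n\Vdash_A a_1,a_2$, then $e\cdot n\Vdash_Y f(a_1),f(a_2)$, so $f(a_1)=f(a_2)$ by modesty of $Y$. Constancy then extends to all of $\sim_\M$ by induction along zigzags. So $f$ factors uniquely through the surjection $\eta_A$ as $\bar f([a])\coloneqq f(a)$. The factor $\bar f$ is again tracked by $e$: if $n\Vdash_{\M(A)}[a]$, then $n\Vdash_A a'$ for some $a'\sim_\M a$, and therefore $e\cdot n\Vdash_Y f(a')=f(a)=\bar f([a])$. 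Uniqueness follows from surjectivity of $\eta_A$, and naturality in $A$ is then standard. This gives $\M\dashv I$.

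The main subtle point is the constancy of $f$ on classes. It is the only place where modesty of the codomain is used, and it is what forces taking the equivalence closure rather than just the "share a realiser" relation, which is not transitive in general. The trackedness of $\bar f$ works only because the realisers of a class are defined as the union over all representatives.
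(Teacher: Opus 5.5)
Your proposal is correct and follows essentially the same route as the paper: the same quotient by the equivalence closure of ``sharing a realiser'', the same unit $\eta_A$ tracked by $\Lambda x.\,x$, and the same zigzag argument showing that the factorisation is well defined. You also verify two points the paper leaves implicit: that $\M(A)$ is modest, and that the induced factor $\bar f$ is tracked (by the same $e$). Your version is therefore, if anything, more complete.
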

\begin{proof}
Let $A \in \Asm$ be an assembly, we define $\M(A) = ( A {/ \!\sim_\M}, \Vdash_{\M(A)})$ where $\sim_\M$ is the least equivalence relation such that the following holds:
\[
\exists n \in \bN \. n \Vdash_A a_1, a_2 \implies a_1 \sim_\M a_2.
\]
The realisability relation is defined as
\[
n \Vdash_{\M(A)} [ a ]_{\sim_\M} \Iff \exists a' \in [ a ]_{\sim_\M} (n \Vdash_A a').
\]
It is clear that $\M(A)$ is indeed a modest set. We define a map $\eta_A : A \to \M(A)$ as $a \mapsto [a]_{\sim_\M}$, which is realised by $\Lambda x . x$ and is thus a morphism of assemblies. To show that $\M$ extends to a left adjoint of $I$, let $X \in \Mod$ be a modest set and suppose there exists a morphism $g : A \to X$. Then $g$ uniquely factors as $f \circ \eta_A$, where $f : \M(A) \to X$ is defined as $f([a]_{\sim_\M}) = g(a)$. 

It remains to show that $f$ is well-defined. Suppose that $a_1 \sim_\M a_2$; if there exists $n \in \bN$ such that $n \Vdash_A a_1, a_2$, then since $g$ is tracked by some $e \in \bN$, this means that $e \cdot n \Vdash_X g(a_1), g(a_2)$. Thus, since $X$ is a modest set, $g(a_1) = g(a_2)$ and so $f$ is well-defined.

Else, there exist $n_1, ..., n_k \in \bN$ and $b_1, ..., b_{k-1} \in A$ such that $n_1 \Vdash_A a_1, b_1$, $n_2 \Vdash_A b_1, b_2$, ..., and $n_k \Vdash_A b_{k-1}, a_2$. Then by similar reasoning via transitivity, we will get that $g(a_1) = g(b_1) = \dots = g(b_{k-1}) = g(a_2)$.
\end{proof}

Given this construction, we can truncate any assembly $A \in \Asm$ to $\M(A)$ in order (almost) fit in the universe $\U$. However, the universe $\U$ consists of PERs rather than modest sets. Fortunately, the categories $\Per$ and $\Mod$ are equivalent, allowing us define the \emph{$U$-truncation} $\|A\| \in \Tm(\Gamma, \U)$ of a type $A \in \Ty(\Gamma)$ as 
\[\Tr{A}_\gamma = \M(A_\gamma).\]

The $U$-truncation operation should also operate on the level of terms, that is, given a term $a \in \Tm(\Gamma, A)$, there should be a term $\tr{a}$ in the $U$-truncated type. Thus, we define a context morphism 
\begin{align*}
	&\tr{\cdot} : \Gamma.A \to \Gamma.\El (\Tr A), \\
	&\tr{(\gamma, a)} = (\gamma, [a]_{\sim_\M}),
\end{align*}
which is realised by the identity $\Lambda x . x$.
Moreover, the existential type comes with an elimination rule for small types. Thus, a truncated type $\Tr A$ in the model should come with such a rule as well. That is, given a small type $P \in \Tm(\Gamma.\El \Tr A, \U)$ and a term $t \in \Tm(\Gamma.A, (\El P)[\tr \cdot ])$, we define the following term
\begin{align*}
	&\elim_t \in \Tm (\Gamma.\El (\Tr A), \El P), \\
	&\elim_t (\gamma, [a]_{\sim_\M}) = t(\gamma, a).
\end{align*}
To show that $\elim_t$ is well defined, we need to show that $a_1 \sim_M a_2$ implies $t(\gamma, a_1) = t(\gamma, a_2)$. Since $a_1 \sim_M a_2$, we can assume that there exists some $k \in \bN$ such that $k \Vdash_A a_1, a_2$. Moreover, $t$ is tracked by some $e \in \bN$, meaning that the following holds:
\[
\langle m, n \rangle \Vdash_{\Gamma.A} (\gamma, a) \implies e \cdot \langle m, n \rangle \Vdash_{P(\gamma, [a]_{\sim_\M})} t(\gamma, a).
\]
Then for any $\gamma \in \Gamma$, which must realised by some $m \Vdash_\Gamma \gamma$, we have that $\langle m, k \rangle \Vdash_{\Gamma.A} (\gamma, a_1), (\gamma, a_2)$ and thus:
\[
e \cdot \langle m, k \rangle \Vdash_{P(\gamma, [a_1]_{\sim_M})} t(\gamma, a_1), t(\gamma, a_2), 
\]
since $[a_1]_{\sim_M} = [a_2]_{\sim_M}$. Since $P$ is a small type, $P(\gamma, [a_1]_{\sim_M})$ is a modest set when viewed as an assembly. This implies that $t(\gamma, a_1) = t(\gamma, a_2) $ since any realiser codes at most one object. 

Now we are ready to interpret the existential type by truncating the $\Sigma$-type.
\begin{definition}[Existential Type]
	Given a small type $A \in \Tm(\Gamma.\Size, \U)$, the small existential type $\exists i . A(i) \in \Tm(\Gamma, \U)$ is defined as 
	\[\exists i . A(i) \coloneq  \Tr{\Sigma(\Size, \El A)}.\]
	
	As with dependent sums, the pairing operation is interpreted with a morphism $\pair_\exists : \Gamma.\Size.A \to \Gamma. \El (\exists i.A(i))$ defined as 
	\[
	\pair_\exists(\gamma, s, a) = \tr{\pair_\Sigma(\gamma, s, a)},
	\]
	where $\pair_\Sigma(\gamma, s, a) \in \Gamma.\Sigma(\Size, A)$.

	Finally, the elimination rule for existential types is interpreted as follows. Given the following terms:
	\begin{align*}
		P &\in \Tm(\Gamma. \El (\exists i.A(i)), \U),\\
		p &\in \Tm(\Gamma.\Size.A, (\El P)\{\pair_\exists\}),
	\end{align*}
    we define the term $\ind_\exists^{p} \in \Tm(\Gamma.(\exists i. A(i)), \El P)$ as follows:
	\[
		\ind_\exists^{p} = \elim_{\ind_{\Sigma}^{P}} \in \Tm(\Gamma.(\exists i. A(i)), \El P).
	\]
	where $ \elim_{\ind_{\Sigma}^{P}}(\gamma, [(s, a)]_{\sim_\M}) = \ind_{\Sigma}^{P}(\gamma, (s, a))$ is well-defined since $\elim$ is well-defined.

\end{definition}

Now that the formal definition is given, we can make an observation about the structure of the existential type. Given a small type $A \in \Tm(\Gamma, \U)$, note that for the modest set $\M (\Sigma(\Size, \El A)_\gamma)$, the equivalence relation $\sim_\M$ is defined exactly by
\begin{align*}
	(s_1, a_1) \sim_\M (s_2, a_2) &\Iff \exists \langle m, n \rangle \in \bN \. \langle m, n \rangle \Vdash_{\Sigma(\Size, \El A)_\gamma} (s_1, a_1), (s_2, a_2) \\
	&\Iff \exists n \in \bN \. n \Vdash_{\El A_{(\gamma, s_1)}} a_1 \and n \Vdash_{\El A_{(\gamma, s_2)}} a_2
\end{align*} 
Thus, the intuition is that two elements $[(s_1, a_1)]_{\sim_\M}$ and $[(s_2, a_2)]_{\sim_\M}$ of $(\exists i . A(i))_\gamma$ are equal if $a_1$ and $a_2$ share a realiser.
\end{document}